\newcommand\spac{\mathbb{G}}
\newcommand\cspac{0_\spac}
\newcommand\orbi[2]{\mathcal{O}_{#1}\left({#2}\right)}
\newcommand{\Z}{\mathbb{Z}}
\newcommand{\N}{\mathbb{N}}
\newcommand{\boule}[1]{B_{#1}}
\newcommand{\spaclen}[1]{\|#1\|_{\spac}}
\newtheorem{lemma}{Lemma}
\newtheorem{definition}{Definition}
\newtheorem{theorem}{Theorem}
\newtheorem{remark}{Remark}
\newtheorem{question}{Question}
\newtheorem{example}{Example}
\newcommand\confs{Q^\spac}
\newcommand\meas[1]{\mathcal{M}\left({#1}\right)}
\newcommand\trac[2]{\mathcal{T}^{#1}_{#2}} 
\newcommand\freezt[3]{\zeta_{#1}(#2,#3)} 
\newcommand\restr[2]{{
  \left.\kern-\nulldelimiterspace 
  #1 
  \vphantom{|} 
  \right|_{#2} 
  }}
\begin{document}

\title{Cold Dynamics in Cellular Automata: a Tutorial
}


\author{Guillaume Theyssier\thanks{I2M, Université Aix-Marseille, CNRS}}

\maketitle

\begin{abstract}
  This tutorial is about cellular automata that exhibit \emph{cold dynamics}.
  By this we mean zero Kolmogorov-Sinai entropy, stabilization of all orbits, trivial asymptotic dynamics, or similar properties.
  These are essentially transient and irreversible dynamics, but they capture many examples from the literature, ranging from crystal growth to epidemic propagation and symmetric majority vote.
  A collection of properties is presented and discussed: nilpotency and asymptotic, generic or mu- variants, unique ergodicity, convergence, bounded-changeness, freezingness.
  They all correspond to the \emph{cold dynamics} paradigm at various degrees, and we study their links and differences by key examples and results.
  Besides dynamical considerations, we also focus on computational aspects: we show how such \emph{cold cellular automata} can still compute under their dynamical constraints, and what are their computational limitations.
  The purpose of this tutorial is to illustrate how the richness and complexity of the model of cellular automata are preserved under such strong constraints.
  By putting forward some open questions, it is also an invitation to look more closely at this \emph{cold dynamics} territory, which is far from being completely understood.
\end{abstract}

\section{Introduction}

This tutorial is about cellular automata (CA for short), a well-known class of discrete dynamical systems, useful for modeling natural phenomena, and also a model of computation.
Informally, a CA is a lattice of finite-state cells that evolve according to a uniform local rule.
CA have been extensively studied as chaotic dynamical systems \cite{kurkabook}, as models of physical phenomena with positive (Kolmogorov-Sinai) entropy \cite{ChopardDroz,latticegas}, as a model of reversible computing \cite{reversiblecomputing,timesym} or as groups when considering only reversible CA \cite{CAgroupBLR,notits2019}.

None of these kinds of CA are considered here.
On the contrary, we are interested in examples with zero (Kolmogorov-Sinai) entropy, that have a strong convergence property, whose asymptotic dynamics is essentially trivial, and that are strongly irreversible.
These examples are essentially transient dynamically, they erase information from the initial configuration, and display a strong effect of the \emph{arrow of time}.

We will refer to this class as \emph{cold dynamics}, without trying to precisely define it.
It turns out that literature on CA abounds in results and examples that fit into this class.
Actually, examples of CA with such kinds of dynamics were already pointed out by S. Ulam in his pioneering work on crystal (or pattern) growth more than 50 years ago \cite{ulam}.
The computational aspects of some \emph{cold dynamics} CA were also considered more than 40 years ago \cite{vollmar81} with the point of view of language recognition.
Since then, many theoretical works have focused on some particular \emph{cold} properties, the flagship result among them probably being the undecidability of nilpotency\footnote{This notion will be presented in Section~\ref{sec:nil}.}, established 30 years ago \cite{kari92} (and actually much before but not explicitly in \cite{aanderaa_lewis_1974}).

The purpose of this tutorial is to present and organize this literature around key concepts like convergence, and to illustrate how the richness and complexity of the model of cellular automata is preserved despite the strong constraints behind the \emph{cold dynamics} paradigm.

\begin{figure}
  \centering
  \begin{tikzpicture}[scale=.8]
    \draw (0,0) node {\includegraphics[width=3cm]{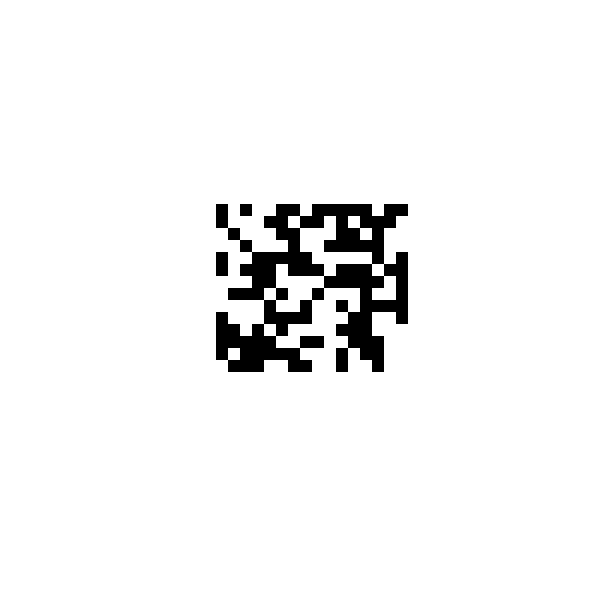}};
    \draw (5,-2) node {\includegraphics[width=3cm]{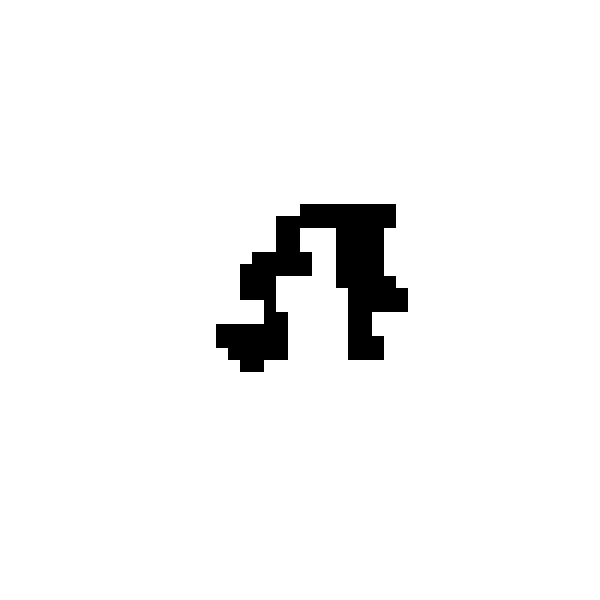}};
    \draw (5,2) node {\includegraphics[width=3cm]{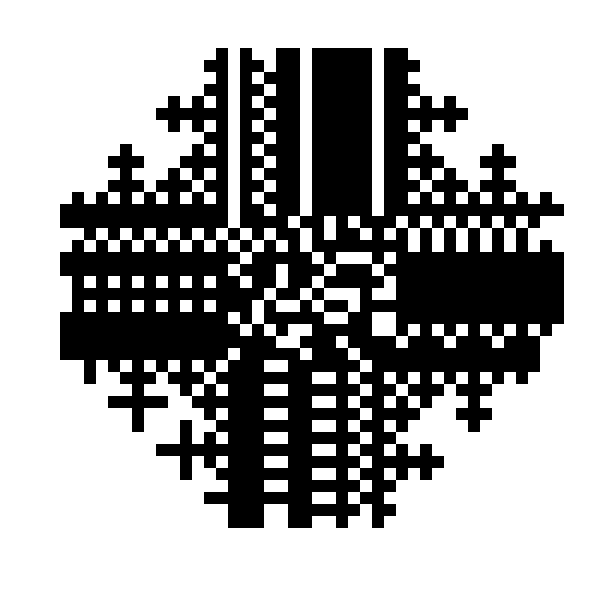}};
    \draw (-1,0) node {$D_0$};
    \draw[->] (1,.5) --node[midway,sloped,above] {Example 1} (3.25,1.5);
    \draw (7.5,2) node {$D_{13}$};
    \draw[->] (1,-.5) --node[midway,sloped,below] {Example 2} (3.25,-1.5);
    \draw (7.5,-2) node {$D_{13}$};
  \end{tikzpicture}
  \caption{Evolution starting from the same initial configuration for the two examples of cold dynamics presented in this section. Positions belonging to sets $D_i$ are represented in black.}
  \label{fig:exintro}
\end{figure}

\paragraph{Concrete examples.} Let us consider the discrete plane $\Z^2$ and say that two elements are neighbors if one is at distance 1 to the north, east, south or west of the other.
Now consider any initial set ${D_0\subseteq\Z^2}$ and define $D_{i+1}$ for any $i\in\N$ as the union of $D_i$ and all elements of $\Z^2$ that have exactly one neighbor in $D_i$.
This actually defines a CA with a cold dynamics.
Indeed, the sets $D_i$ are only increasing and therefore each position of $\Z^2$ can only have one of two destinies: either it will never belong to any $D_i$, or it will appear in some $D_i$ and then stay in all subsequent ones.
This first example was introduced in \cite{ulam} by S. Ulam and it is actually one of the very first CA ever defined.
It was thought of as a toy model of crystal growth.

As our second example, consider again any initial set ${D_0\subseteq\Z^2}$ and define $D_{i+1}$ as set of positions from $\Z^2$ such that, among itself and its $4$ neighbors, a majority of positions is in $D_i$ (\textit{i.e.} at least $3$).
In this second example, the destiny of a position is less clear as the local majority could change several times.
It turns out that if we consider one step every two, this example also has a strong convergence property (we will establish this for a large class of examples in Theorem~\ref{theo:majboundedchange}).
Note that majority dynamics like this one might well be the single most studied class of CA.

An example of evolution of both CA is show in Figure~\ref{fig:exintro}.
Other examples of two states CA  with a \emph{cold dynamics} have been specifically studied \cite{ChLeRe79,Holroyd03,GriMoo96}.
In the context of modeling through CA some propagation phenomena (epidemic, rumor, fire, etc), many models have been proposed following the so-called compartmental paradigm \cite{KMcK27}: the population is divided into categories (like Susceptible, Infected and Recovered for the SIR model) and the model describes the local conditions for an individual to change from a category to another.
When the spatial aspects are considered, this kind of models naturally give rise to multi-state CA and many of them have a \emph{cold dynamics} (see Figure~\ref{fig:forestfire}).

\begin{figure}
  \centering
  \includegraphics[width=.7\textwidth]{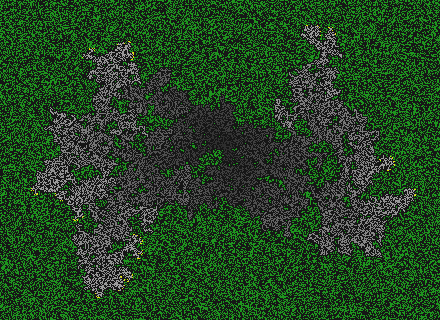}  
  \caption{A configuration reached by a naive CA model of forest fire: fire (yellow) propagates over neighboring trees (in green) and become ashes (gray) in one step. Shades of gray indicates the time since the fire left the position (the darker the sooner). The density of trees has a huge impact on the fire propagation (in this example the initial density of trees is $0.4$).}
  \label{fig:forestfire}
\end{figure}

One could start to analyze these specific examples in depth, but we will rather try to understand what are the properties that make them special, and work on entire classes of examples.

\paragraph{Main questions around cold dynamics.} This tutorial is organized into five parts: we will first precise our formal settings in Section~\ref{sec:definitions} (using the language of orbits, topology and probability measures) and introduce in Section~\ref{sec:convergent} the pivotal notion for cold dynamics: convergence. Then, we will address three general questions as starting points to present many results and examples:
\begin{enumerate}
\item \textit{What kind of behaviors can be obtained if we push the convergence property to an extreme and ask that the system collapse all initial configurations into a single one?
  What if we only specify that the asymptotic dynamics be trivial (a singleton) and how does it depend on the language used (configurations versus probability measures)?}

  We will tackle these questions in Section~\ref{sec:nil} by studying the classical notion of nilpotency and its variants.
\item \textit{How to establish that a given CA is convergent?}
  
  In Section~\ref{sec:proveconv}, we will both present proof techniques and introduce useful sub-classes of convergent CA (bounded-change and freezing CA) that put stronger constraints on the dynamics but already capture many natural examples, like the two presented above.
\item \textit{How does the ability of general CA to make arbitrary computations survive under the constraint of convergence, bounded-change or freezingness?}
  
  In Section~\ref{sec:compuconv}, we will see that even under the strongest constraints, computational universality is preserved, but the effectiveness of this computational power greatly varies with the constraints considered.
\end{enumerate}

\paragraph{A tutorial.} This tutorial is an invitation to discover what we believe are nice examples, concepts and proof technics.
It is not a survey, and we try to find a good compromise between accessibility and generality.
We do include some proofs (or sketch of) as they give much insight in the topic.
However, many results are stated without proof.
This is mostly due to space constraint and not an indication that the result is less important: the reader is warmly invited to consult the corresponding references.

\section{Definitions and Notations}
\label{sec:definitions}

A (discrete) dynamical system is a pair ${(X,F)}$ where $X$ is a compact metric space (seen as a set of possible global states of the system) and ${F:X\to X}$ a continuous map (seen as the action or global evolution rule of the system).
Given an initial point ${x\in X}$, its \emph{orbit} is the sequence $x$, $F(x)$, $F^2(x)$, $F^3(x)$, etc.
It represents the evolution of the system at successive \emph{time steps} when initialized in $x$.

We will consider only two kinds of dynamical systems in this tutorial: deterministic CA acting on \emph{configurations}, and deterministic CA acting on \emph{probability measures}.
A CA is defined by some \emph{uniform spatial structure} of cells, an \emph{alphabet} and a \emph{local evolution rule} that defines its dynamics.
For the reader unfamiliar with these notions, let us start by a concrete example to fix ideas:
\begin{itemize}
\item choose the alphabet to be ${\{0,1\}}$,
\item choose the spatial structure of cells to be a bi-infinite line,
\item configurations are then bi-infinite sequences of $0$s and $1$s,
\item choose the following local rule that defines how the CA transforms a configuration into a new one at each time step: each cell takes as new state the sum modulo $2$ of its own state and the one of its right neighbor. Denote by $F$ this \emph{global evolution rule} on configuration.
\end{itemize}

\newcommand\poi[2]{\fill[red] (#1,#2) circle (1pt);}
\begin{figure}
  \centering
  \includegraphics[width=.3\linewidth]{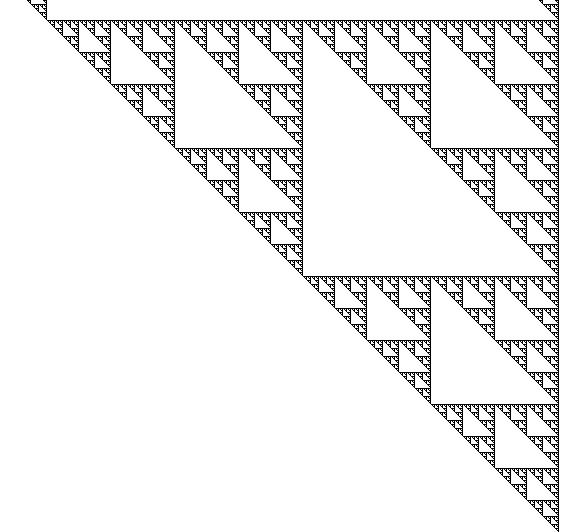}\hskip 1cm
  \begin{minipage}[t]{.6\linewidth}
    \begin{tikzpicture}[x=3cm,y=3cm]
      \draw[->] (-.1,0) -- node[midway,below] {time} +(2.1,0);
      \draw[->] (0,-.1) -- node[midway,sloped,above] {probability of $1$} +(0,1.1);
      \poi{0.000000}{0.905000}
      \poi{0.010000}{0.180000}
      \poi{0.020000}{0.160000}
      \poi{0.030000}{0.290000}
      \poi{0.040000}{0.170000}
      \poi{0.050000}{0.290000}
      \poi{0.060000}{0.250000}
      \poi{0.070000}{0.440000}
      \poi{0.080000}{0.180000}
      \poi{0.090000}{0.300000}
      \poi{0.100000}{0.290000}
      \poi{0.110000}{0.390000}
      \poi{0.120000}{0.280000}
      \poi{0.130000}{0.420000}
      \poi{0.140000}{0.320000}
      \poi{0.150000}{0.520000}
      \poi{0.160000}{0.170000}
      \poi{0.170000}{0.250000}
      \poi{0.180000}{0.270000}
      \poi{0.190000}{0.400000}
      \poi{0.200000}{0.280000}
      \poi{0.210000}{0.400000}
      \poi{0.220000}{0.350000}
      \poi{0.230000}{0.490000}
      \poi{0.240000}{0.300000}
      \poi{0.250000}{0.370000}
      \poi{0.260000}{0.480000}
      \poi{0.270000}{0.460000}
      \poi{0.280000}{0.380000}
      \poi{0.290000}{0.590000}
      \poi{0.300000}{0.360000}
      \poi{0.310000}{0.410000}
      \poi{0.320000}{0.170000}
      \poi{0.330000}{0.300000}
      \poi{0.340000}{0.260000}
      \poi{0.350000}{0.430000}
      \poi{0.360000}{0.290000}
      \poi{0.370000}{0.430000}
      \poi{0.380000}{0.400000}
      \poi{0.390000}{0.510000}
      \poi{0.400000}{0.310000}
      \poi{0.410000}{0.420000}
      \poi{0.420000}{0.390000}
      \poi{0.430000}{0.490000}
      \poi{0.440000}{0.420000}
      \poi{0.450000}{0.460000}
      \poi{0.460000}{0.450000}
      \poi{0.470000}{0.460000}
      \poi{0.480000}{0.260000}
      \poi{0.490000}{0.380000}
      \poi{0.500000}{0.380000}
      \poi{0.510000}{0.430000}
      \poi{0.520000}{0.430000}
      \poi{0.530000}{0.460000}
      \poi{0.540000}{0.410000}
      \poi{0.550000}{0.560000}
      \poi{0.560000}{0.390000}
      \poi{0.570000}{0.540000}
      \poi{0.580000}{0.490000}
      \poi{0.590000}{0.540000}
      \poi{0.600000}{0.550000}
      \poi{0.610000}{0.530000}
      \poi{0.620000}{0.520000}
      \poi{0.630000}{0.630000}
      \poi{0.640000}{0.170000}
      \poi{0.650000}{0.300000}
      \poi{0.660000}{0.260000}
      \poi{0.670000}{0.410000}
      \poi{0.680000}{0.260000}
      \poi{0.690000}{0.380000}
      \poi{0.700000}{0.350000}
      \poi{0.710000}{0.480000}
      \poi{0.720000}{0.290000}
      \poi{0.730000}{0.410000}
      \poi{0.740000}{0.430000}
      \poi{0.750000}{0.450000}
      \poi{0.760000}{0.380000}
      \poi{0.770000}{0.460000}
      \poi{0.780000}{0.410000}
      \poi{0.790000}{0.490000}
      \poi{0.800000}{0.290000}
      \poi{0.810000}{0.380000}
      \poi{0.820000}{0.410000}
      \poi{0.830000}{0.490000}
      \poi{0.840000}{0.420000}
      \poi{0.850000}{0.460000}
      \poi{0.860000}{0.510000}
      \poi{0.870000}{0.520000}
      \poi{0.880000}{0.400000}
      \poi{0.890000}{0.410000}
      \poi{0.900000}{0.560000}
      \poi{0.910000}{0.460000}
      \poi{0.920000}{0.540000}
      \poi{0.930000}{0.510000}
      \poi{0.940000}{0.440000}
      \poi{0.950000}{0.420000}
      \poi{0.960000}{0.260000}
      \poi{0.970000}{0.360000}
      \poi{0.980000}{0.360000}
      \poi{0.990000}{0.490000}
      \poi{1.000000}{0.360000}
      \poi{1.010000}{0.470000}
      \poi{1.020000}{0.480000}
      \poi{1.030000}{0.520000}
      \poi{1.040000}{0.420000}
      \poi{1.050000}{0.530000}
      \poi{1.060000}{0.500000}
      \poi{1.070000}{0.540000}
      \poi{1.080000}{0.470000}
      \poi{1.090000}{0.580000}
      \poi{1.100000}{0.450000}
      \poi{1.110000}{0.470000}
      \poi{1.120000}{0.380000}
      \poi{1.130000}{0.440000}
      \poi{1.140000}{0.550000}
      \poi{1.150000}{0.540000}
      \poi{1.160000}{0.460000}
      \poi{1.170000}{0.460000}
      \poi{1.180000}{0.440000}
      \poi{1.190000}{0.490000}
      \poi{1.200000}{0.430000}
      \poi{1.210000}{0.550000}
      \poi{1.220000}{0.480000}
      \poi{1.230000}{0.460000}
      \poi{1.240000}{0.500000}
      \poi{1.250000}{0.520000}
      \poi{1.260000}{0.550000}
      \poi{1.270000}{0.510000}
      \poi{1.280000}{0.170000}
      \poi{1.290000}{0.290000}
      \poi{1.300000}{0.280000}
      \poi{1.310000}{0.390000}
      \poi{1.320000}{0.290000}
      \poi{1.330000}{0.430000}
      \poi{1.340000}{0.440000}
      \poi{1.350000}{0.540000}
      \poi{1.360000}{0.310000}
      \poi{1.370000}{0.430000}
      \poi{1.380000}{0.460000}
      \poi{1.390000}{0.490000}
      \poi{1.400000}{0.430000}
      \poi{1.410000}{0.520000}
      \poi{1.420000}{0.440000}
      \poi{1.430000}{0.590000}
      \poi{1.440000}{0.280000}
      \poi{1.450000}{0.370000}
      \poi{1.460000}{0.460000}
      \poi{1.470000}{0.510000}
      \poi{1.480000}{0.410000}
      \poi{1.490000}{0.490000}
      \poi{1.500000}{0.510000}
      \poi{1.510000}{0.530000}
      \poi{1.520000}{0.430000}
      \poi{1.530000}{0.460000}
      \poi{1.540000}{0.570000}
      \poi{1.550000}{0.540000}
      \poi{1.560000}{0.530000}
      \poi{1.570000}{0.500000}
      \poi{1.580000}{0.480000}
      \poi{1.590000}{0.470000}
      \poi{1.600000}{0.280000}
      \poi{1.610000}{0.430000}
      \poi{1.620000}{0.380000}
      \poi{1.630000}{0.520000}
      \poi{1.640000}{0.430000}
      \poi{1.650000}{0.510000}
      \poi{1.660000}{0.550000}
      \poi{1.670000}{0.480000}
      \poi{1.680000}{0.430000}
      \poi{1.690000}{0.560000}
      \poi{1.700000}{0.460000}
      \poi{1.710000}{0.510000}
      \poi{1.720000}{0.550000}
      \poi{1.730000}{0.530000}
      \poi{1.740000}{0.480000}
      \poi{1.750000}{0.550000}
      \poi{1.760000}{0.380000}
      \poi{1.770000}{0.490000}
      \poi{1.780000}{0.490000}
      \poi{1.790000}{0.480000}
      \poi{1.800000}{0.520000}
      \poi{1.810000}{0.540000}
      \poi{1.820000}{0.560000}
      \poi{1.830000}{0.510000}
      \poi{1.840000}{0.460000}
      \poi{1.850000}{0.570000}
      \poi{1.860000}{0.490000}
      \poi{1.870000}{0.430000}
      \poi{1.880000}{0.510000}
      \poi{1.890000}{0.540000}
      \poi{1.900000}{0.550000}
      \poi{1.910000}{0.390000}
      \poi{1.920000}{0.290000}
      \poi{1.930000}{0.420000}
      \poi{1.940000}{0.390000}
      \poi{1.950000}{0.420000}
      \poi{1.960000}{0.400000}
      \poi{1.970000}{0.480000}
      \poi{1.980000}{0.390000}
      \poi{1.990000}{0.480000}
      \poi{2.000000}{0.320000}
    \end{tikzpicture}
  \end{minipage}
  \caption{The sum modulo 2 CA seen as two dynamical systems. On the left: example of an orbit of a configuration represented as a space-time diagram (time goes from bottom to top, each horizontal line represents a configuration, white represents state 0 and black state 1). On the right: example of an orbit of a measure (only a partial information on measures is represented).}
  \label{fig:xorbits}
\end{figure}

Starting from a given initial configuration and applying iteratively the global transformation $F$, we get a sequence of configurations, \textit{i.e.} an \emph{orbit} (see Figure~\ref{fig:xorbits}).
For instance, if $c$ is the configuration where each cell is in state $0$, then ${F(c)=c}$ and we say that $c$ is a \emph{fixed point} of $F$.

We can also see our example CA as a deterministic dynamical system acting on \emph{probability measures}.
To simplify exposition, let's think of a measure $\mu$ as a process that produces (random) configurations.
Then one can consider a new measure $F(\mu)$ which is the process that first produces a random configuration by $\mu$ and then applies $F$ to get a new random configuration.
From there we can of course apply iteratively $F$ and get a sequence of measures: $F$ can thus also be seen as a dynamical system acting on measures.
For instance, consider $\mu_{p}$ the random process that chooses the state of each cell independently to be $0$ with probability $p$ and $1$ with probability $1-p$ (such measures are called Bernoulli measures).
It can be shown that ${F(\mu_{1/2})=\mu_{1/2}}$, said differently $\mu_{1/2}$ is an \emph{invariant measure} for $F$.
However ${F(\mu_{1/4})\neq\mu_{1/4}}$ because when the states of a given cell and its right neighbor are chosen independently to be $0$ with probability $1/4$ and $1$ with probability $3/4$, then after one step of $F$ the considered cell has a probability ${\frac{1}{4}\times\frac{3}{4} + \frac{3}{4}\times\frac{1}{4} = \frac{3}{8}}$ to be in state $1$.
A measure can convey much more information than the probability for an individual cell to be in some state: for instance the reader can verify that $F(\mu_{1/4})$ is not of the form $\mu_p$ for any $p$ (the probability of states in two adjacent cells are no longer independent).\\

In the remaining of this section we present standard definitions about CA and establish a precise formal framework as well as some notations.
Most of the results presented in this tutorial do not require in-depth knowledge of the notions defined below.
The reader expecting a more detailed presentation is invited to consult \cite{cagroups,Pivato09,lindmarcus,kurkabook,walters1981}.

\paragraph{Spatial structure of cells.} A natural and fairly general settings for the spatial cell structure is to consider a finitely generated group $\spac$ (without mentioning it each time, all groups $\spac$ below will be finitely generated).
We will use additive notation for groups and denote by ${\cspac}$ the identity of ${\spac}$.
For ${z\in\spac}$, we denote by ${\spaclen{z}}$ the length $n$ of the smallest sequence of generators ${(g_i)_{1\leq i\leq n}}$ such that ${g_1+\cdots +g_n=z}$, and we let ${\spaclen{\cspac}=0}$.
We denote by ${\boule{n}}$ the set of elements ${z\in\spac}$ such that ${\spaclen{z}\leq n}$.
Most of the time we will consider ${\spac=\Z^d}$ and discuss about cases ${d=1}$ or ${d\geq2}$ (called 1D CA, 2D CA, etc).
We will only scratch the surface of the deep and largely open question of how CA theory depends on $\spac$.
We will sometimes use the notion of amenability which is an emblematic example of this dependence through the so-called \emph{Garden of Eden} theorem (see \cite{cagroups}).
A finitely additive probability measure on $\spac$ is a map $\mu$ from subsets of $\spac$ to ${[0,1]}$ such that ${\mu(\spac)=1}$ and ${\mu(A\cup B)=\mu(A)+\mu(B)}$ whenever ${A\cap B=\emptyset}$.
A finitely additive probability measure $\mu$ is said left-invariant if ${\mu(A) = \mu(g + A)}$ for any ${g\in\spac}$ and any ${A\subseteq\spac}$.
Then we say that $\spac$ is \emph{amenable} if admits a left-invariant finitely additive probability measure.
Groups ${\Z^d}$ are amenable and we will sometimes consider free groups with $2$ or more generators as an example of non-amenable group.

\paragraph{Space of configurations and its topology.} Given a finite alphabet $Q$ and a group $\spac$, a \emph{configuration} is a map ${c:\spac\to Q}$.
In order to remove some parentheses in expressions, we will often denote ${c(z)}$ by ${c_z}$.
For any ${q\in Q}$, we denote by ${\overline{q}}$ the uniform configuration ${c:z\mapsto q}$.
The set of configurations ${\confs}$ can be endowed with the prodiscrete topology (product of the discrete topology on $Q$) to form a compact space \cite{kurkabook}.
Given a finite set ${D\subseteq\spac}$ and a partial configuration ${u: D\to Q}$, called a \emph{pattern}, we denote by ${[u]}$ the \emph{cylinder set} of configurations that coincide with $u$ on domain $D$: 
\[[u] = \{c\in\confs : \restr{c}{D} = u\}.\]
By a slight abuse of notation, we denote by $[q]$ for $q\in Q$ the cylinder set ${[u]}$ of domain ${\{\cspac\}}$ where ${u = \cspac\mapsto q}$.
Cylinder sets are clopen sets and form a base of the prodiscrete topology.
This topology is also metrizable as follows. For any pair of configurations ${x,y\in\confs}$ define their distance by 
\[d(x,y) =
  \begin{cases}
    0 &\text{ if $x=y$,}\\
    2^{-n} & \text{ where $n=\max\{i : \restr{x}{\boule{i}} = \restr{y}{\boule{i}}\}$}.
  \end{cases}
\]
We will often consider sequences of configurations ${(x_n)_n}$ that converges to some limit, denoted ${x_n\to_n x}$ or ${\displaystyle\lim_{n\to\infty} x_n = x}$ in the sequel: it will always be in the sense of this prodiscrete topology.
${x_n\to_n x}$ is equivalent to the property that for each ${g\in\spac}$ the sequence of states ${x_n(g)}$ converges to ${x(g)}$, \textit{i.e.} is ultimately constant equal to ${x(g)}$.
$\spac$ naturally acts on $\confs$ by \emph{translation} as follows: for any ${z\in\spac}$ we define the shift map ${\sigma_z:\confs\to\confs}$ by 
\[\sigma_z(c)_{z'} = c_{z+z'}.\]
Shift maps are homeomorphisms. For ${c\in\confs}$, we denote by ${\orbi{\spac}{c}}$ its orbit under translations, that is: ${\orbi{\spac}{c}=\{\sigma_z(c) : z\in\spac\}}$.
A natural kind of subset of configurations are closed translation invariant subsets of $\confs$. Such a subset $X$ is called a \emph{subshift} and it is characterized by its \emph{forbidden language}, \textit{i.e.} the set of patterns it avoids: ${\{u : [u]\cap X=\emptyset\}}$ (see \cite{lindmarcus}).

\paragraph{Evolution rule.} A CA on ${\confs}$ is defined by a local evolution rule ${\lambda : Q^N\to Q}$ where $N$ is a finite subset of $\spac$ called \emph{neighborhood}, and whose associated global evolution rule ${F:\confs\to\confs}$ is defined as follows: 
\[F(c)_z = \lambda\bigl(\restr{\sigma_z(c)}{N}\bigr)\]
for all ${z\in\spac}$.
We say that $F$ has \emph{radius} $r$ if ${N\subseteq\boule{r}}$ (note that the same map $F$ could be defined by different local maps $\lambda$ with different neighborhoods $N$, but finding the minimal such $N$ and hence the minimal radius is usually not important in the sequel).
Any CA map $F$ is continuous and commutes with translations: ${\sigma_z\circ F = F\circ\sigma_z}$.
Then ${\bigl(\confs,F)}$ is a dynamical system.
Conversely, by Curtis-Hedlund-Lyndon theorem \cite{hedlund,cagroups}, any dynamical system ${\bigl(\confs,\phi)}$ where $\phi$ commutes with translations is actually a CA.
A CA is said bijective (resp. surjective, resp. injective) if its corresponding global map on configuration is.
A fundamental consequence of Curtis-Hedlund-Lyndon theorem is that if $F$ is the global map of a bijective CA, then the inverse map $F^{-1}$ is also the global map of a CA.
For this reason, bijective CA are often referred to as \emph{reversible CA}.

Given a CA ${F:\confs\to\confs}$, we are naturally interested in orbits, \textit{i.e.} sequences of the form ${c,F(c),\ldots, F^t(c),\ldots}$ for some configuration $c$.
A partial information on orbits called \emph{canonical trace} will play an important role: it is the sequence ${\trac{F}{c}:\N\to Q}$ defined as ${c_{\cspac}, F(c)_{\cspac},\ldots,F^t(c)_{\cspac},\ldots}$ for a given $c\in\confs$.
The map ${c\mapsto\trac{F}{c}}$ is a factor map from ${\confs\to Q^\N}$. It is generally neither surjective, nor injective.

\paragraph{Space of probability measures and action of a CA on it.} The set of Borel probability measures on $\confs$ will be denoted by ${\meas{\confs}}$.
A measure ${\mu\in\meas{\confs}}$ is a countably additive function from Borel sets to ${[0,1]}$ such that ${\mu(\confs)=1}$.
The \emph{support} of a measure is the smallest closed set $X$ such that ${\mu(X)=1}$.
We say that a measure has \emph{full-support} if its support is ${\confs}$.
Concretely, a measure is characterized by its value on cylinders (by the Carathéodory-Fréchet extension theorem since cylinders form a semi-ring that generates the $\sigma$-algebra of Borel sets).
It is also convenient to define a measure this way.
For instance, a natural and important class of measures are the \emph{Bernoulli measures} which are product measures defined by coefficients ${\beta_q}$ for each ${q\in Q}$ such that ${\sum_{q\in Q}\beta_q=1}$ as follows: for any cylinder $[u]$ with ${u:D\to Q}$,
\[\mu([u]) = \prod_{z\in D}\beta_{u_z}.\]
The uniform Bernoulli measure is the one where all coefficients $\beta_q$ are equal to ${1/\#Q}$ where the notation $\#X$ represents the cardinality of the set $X$.
A Bernoulli measure has full-support if and only if ${\beta_q>0}$ for all ${q\in Q}$.
A fundamental property of Bernoulli measures is that they are \emph{translation-ergodic}, meaning that for any Borel set $X$ which is invariant under translation, ${\mu(X)}$ is either $1$ or $0$ \cite{walters1981}.
The somewhat opposite example of measures are dirac measures. The dirac measure associated to a configuration ${c\in\confs}$, denoted $\delta_c$, is defined by 
\[\delta_c(A) =
  \begin{cases}
    1 &\text{ if $c\in A$},\\
    0 &\text{ if $c\not\in A$}
  \end{cases}
\]
for any Borel set $A$.

The set ${\meas{\confs}}$ is naturally endowed with the weak-* topology which is the coarsest topology such that the map ${\mu\mapsto \mu([u])}$ is continuous for each cylinder set $[u]$. This topology is compact and metrizable by the distance 
\[d(\mu,\nu) = \sum_{n\in\N}2^{-n}\max \bigl\{\bigl|\mu([u]) - \nu([u])\bigr| : D\subseteq\boule{n}, u:D\to Q\bigr\}.\]
In this topology, a sequence of measures ${(\mu_n)_n}$ converges to $\mu$ if and only if, for each cylinder set ${[u]}$, ${\mu_n([u])}$ converges to ${\mu([u])}$.

Any CA $F:\confs\to\confs$ acts naturally on the set of measures ${\meas{\confs}}$ as follows: ${F(\mu) = \mu\circ F^{-1}}$ or, said differently, ${F(\mu)}$ is the measure such that for each Borel set one has
\[F(\mu)(X) = \mu\bigl(F^{-1}(X)\bigr).\]
This way ${\bigl(\meas{\confs},F\bigr)}$ is also a dynamical system.
Note that for any ${c\in\confs}$ one has ${F(\delta_c) = \delta_{F(c)}}$ because ${F(c)\in A\iff c\in F^{-1}(A)}$.
Hence the dynamical system ${\bigl(\confs,F\bigr)}$ actually embeds into the dynamical system ${\bigl(\meas{\confs}, F\bigr)}$.

\paragraph{Decision problems and undecidability.} To conclude this definition section, let us briefly clarify the settings behind the decision problems that we will consider.
Our problems deal with CA and/or configurations as input.
A CA is always given by its finite description (alphabet and local evolution rule).
A configuration in input will be given as a map ${c:\spac\to Q}$, which is actually a map from ${A^*}$ to $Q$ where $A$ is a fixed set of generators of $\spac$.
Depending on the context (computable, finite, periodic configuration) there are various ways to represent this map as an input.
In the general case, it will be a Turing machine that computes the map.

Our undecidability results will sometimes refer to the arithmetical hierarchy, we refer to \cite{Rogers} for an in depth presentation.
Intuitively, it is a hierarchy of levels of uncomputability.
Contrary to the polynomial hierarchy in complexity theory, this one is known to be strict.
We will only use the first levels of the hierarchy, for which we can give canonical complete problems about Turing machines to give some intuition:
\begin{itemize}
\item the halting problem is ${\Sigma_1^0}$-complete,
\item the set of Turing machine that halts on all inputs is ${\Pi_2^0}$-complete, its complement is ${\Sigma_2^0}$-complete,
\item the set of Turing machine such that the set of inputs on which it  doesn't halt is infinite is ${\Pi_3^0}$-complete.
\end{itemize}

\section{Convergent CA and their basic properties}
\label{sec:convergent}

The central definition considered in this tutorial is convergence.

\begin{definition}
  A dynamical system ${(X,T)}$ is \emph{convergent} if for any ${x\in X}$ the orbit of $x$ under $T$ converges towards some limit denoted ${T^\omega(x)}$: 
  \[\forall x\in X: \lim_{n\to\infty} T^n(x) = T^\omega(x).\]
\end{definition}

The convergence of a CA ${(\confs,F)}$ is equivalent to the property that for any ${c\in\confs}$, ${\trac{F}{c}}$ is eventually constant. When it is the case, we denote by ${\freezt{F}{c}{z}}$ the \emph{freezing time} of cell $z$ in the orbit of $c$, \textit{i.e.} the first time after which the cell $z$ becomes constant in this orbit: 
\[\freezt{F}{c}{z} = \min \bigl\{t_0\geq 0 : \forall t\geq t_0, F^{t+1}(c)_z = F^{t}(c)_z\bigr\}.\]

We will see various examples of convergent CA in the next sections, but let's start by one of the simplest to fix ideas.

\newcommand\stateZ[2]{\draw[fill=black] (#1,#2) rectangle +(1,1);}
\newcommand\stateO[2]{}
\newcommand\poif[2]{\fill[red] (#1,#2) circle (3pt);}
\begin{figure}
  \centering
  \begin{tikzpicture}[x=.5cm,y=.5cm]
    \draw[->,dotted] (-.5,-.5) -- node[midway,below] {space} +(11,0);
    \draw[->,dotted] (-.5,-.5) -- node[midway,sloped,above] {time} +(0,6);
    \stateZ{0}{0}\stateZ{0}{1}\stateZ{0}{2}\stateZ{0}{3}\stateO{0}{4}\stateZ{1}{0}\stateZ{1}{1}\stateZ{1}{2}\stateO{1}{3}\stateO{1}{4}\stateZ{2}{0}\stateZ{2}{1}\stateO{2}{2}\stateO{2}{3}\stateO{2}{4}\stateZ{3}{0}\stateO{3}{1}\stateO{3}{2}\stateO{3}{3}\stateO{3}{4}\stateO{4}{0}\stateO{4}{1}\stateO{4}{2}\stateO{4}{3}\stateO{4}{4}\stateZ{5}{0}\stateO{5}{1}\stateO{5}{2}\stateO{5}{3}\stateO{5}{4}\stateO{6}{0}\stateO{6}{1}\stateO{6}{2}\stateO{6}{3}\stateO{6}{4}\stateZ{7}{0}\stateZ{7}{1}\stateZ{7}{2}\stateZ{7}{3}\stateZ{7}{4}\stateZ{8}{0}\stateZ{8}{1}\stateZ{8}{2}\stateZ{8}{3}\stateZ{8}{4}\stateZ{9}{0}\stateZ{9}{1}\stateZ{9}{2}\stateZ{9}{3}\stateZ{9}{4}
    \poif{0.5}{4.5}
    \poif{1.5}{3.5}
    \poif{2.5}{2.5}
    \poif{3.5}{1.5}
    \poif{4.5}{0.5}
    \poif{5.5}{1.5}
    \poif{6.5}{0.5}
    \poif{7.5}{0.5}
    \poif{8.5}{0.5}
    \poif{9.5}{0.5}
  \end{tikzpicture}
  \caption{\label{fig:maxfreeze}Freezing time line in an orbit of Example~\ref{ex:max} supposing that all the right part of the configuration not shown here is in state $1$: in black state $1$, in white state $0$ and the red dots represent the freezing time of each cell as positions in space-time.}  
\end{figure}
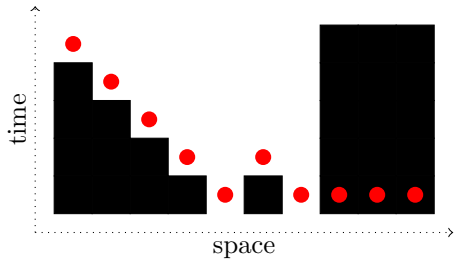

\begin{example}\label{ex:max}
  Let ${Q=\{0,1\}}$ and define the CA $F$ on $\confs$ by 
  \[F(x)_z = \min_{a\in A}x_{z+a}\]
  where $A$ is a fixed finite neighborhood containing the identity of $\spac$.
  Clearly, $F$ can only turn $1$s into $0$s in any configuration, so it is a convergent CA.
  It turns out that this simple dynamical systems actually give much information on the group $\spac$ and its generating sets (see \cite{epperlein2020iterated}).
  If $\spac=\Z$ and ${A=\{0,1\}}$ then the freezing time ${\freezt{F}{c}{z}}$ of cell $z$ starting from configuration $c$ is either $0$ if ${c_z=0}$ or if there is no occurrence of state $0$ to the right of $z$ in $c$, or it is the distance to the closest such occurrence otherwise (see Figure~\ref{fig:maxfreeze}).
  
\end{example}

A first observation is that convergence property on configurations extends to the action on measures.

\begin{lemma}
  \label{lem:convergencemeasure}
  A CA ${(\confs,F)}$ is convergent if and only if ${(\meas{\confs},F)}$ is convergent.
\end{lemma}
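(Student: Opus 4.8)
The plan is to prove the two implications separately; the direction from configurations to measures is the substantial one.

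For the easy direction, suppose $(\meas{\confs},F)$ is convergent and fix $c\in\confs$. Since $F(\delta_c)=\delta_{F(c)}$, the orbit of $\delta_c$ is exactly $(\delta_{F^n(c)})_n$, which converges weak-* to some $\mu$. I would first check that $\mu$ is a Dirac measure: for every cylinder $[u]$ the values $\delta_{F^n(c)}([u])$ lie in $\{0,1\}$, so the limit $\mu([u])$ also lies in $\{0,1\}$; applied to the one-cell cylinders this singles out, for each $g\in\spac$, a unique state $x(g)$ whose one-cell cylinder has $\mu$-measure $1$ (the measures of the $\#Q$ one-cell cylinders at $g$ sum to $1$). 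The continuity of $\mu$ along the decreasing sequence $E_n=\bigcap_{g\in\boule{n}}[\restr{x}{\{g\}}]$, each $E_n$ being a finite intersection of one-cell cylinders of measure $1$, then gives $\mu(\{x\})=\lim_n\mu(E_n)=1$, i.e.\ $\mu=\delta_x$. Reading weak-* convergence $\delta_{F^n(c)}\to\delta_x$ on each one-cell cylinder says precisely that $F^n(c)(g)$ is eventually equal to $x(g)$ for every $g$, which is convergence $F^n(c)\to x$ in $\confs$. Hence $(\confs,F)$ is convergent.

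For the main direction, suppose $(\confs,F)$ is convergent, so $F^n(c)\to F^\omega(c)$ for every $c$, and fix $\mu\in\meas{\confs}$. I would show that $F^n(\mu)([u])$ converges for each cylinder $[u]$, and then assemble these limits into a limit measure. Consider the indicator functions $f_n=\mathbf{1}_{F^{-n}([u])}$, which are Borel-measurable because $F^n$ is continuous. For a fixed configuration $c$, convergence of its orbit means that $F^n(c)$ agrees with $F^\omega(c)$ on the finite domain of $u$ for all large $n$, so the sequence $f_n(c)=\mathbf{1}[F^n(c)\in[u]]$ is eventually constant; thus $(f_n)_n$ converges pointwise to a Borel function $f$. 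Since $0\le f_n\le 1$ and $\mu$ is a probability measure, dominated convergence yields
\[F^n(\mu)([u])=\mu\bigl(F^{-n}([u])\bigr)=\int f_n\,d\mu\;\xrightarrow[n\to\infty]{}\;\int f\,d\mu,\]
so $\bigl(F^n(\mu)([u])\bigr)_n$ converges for every cylinder $[u]$.

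It remains to turn this cylinderwise convergence into convergence in $\meas{\confs}$, and this last passage is the main obstacle: cylinderwise convergence alone does not a priori produce a genuine limit measure. Here I would invoke compactness of $\meas{\confs}$: the sequence $(F^n(\mu))_n$ has a weak-* convergent subsequence with some limit $\nu$, and by the characterization of weak-* convergence $\nu([u])=\lim_n F^n(\mu)([u])$ for every $[u]$. Since this limit value does not depend on the chosen subsequence, every convergent subsequence has the same limit $\nu$, so the whole sequence converges to $\nu$. Therefore $(\meas{\confs},F)$ is convergent. (Alternatively to compactness, one could run a Carathéodory extension argument, checking that the limits $\lim_n F^n(\mu)([u])$ satisfy the additivity and continuity needed to be the cylinder values of an actual element of $\meas{\confs}$, but invoking compactness avoids this verification.)
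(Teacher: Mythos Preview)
Your proof is correct, and for the substantial direction it takes a genuinely different route from the paper. The paper proves cylinderwise convergence by a hands-on $\epsilon$-argument: it introduces the sets ${X_t = \{c : \forall z\in\boule{n},\ \freezt{F}{c}{z}\leq t\}}$, observes that $(X_t)_t$ increases to $\confs$ and that $(F^{-t}([u])\cap X_t)_t$ is monotone increasing, and then splits $F^t(\mu)([u])$ into a main part (controlled by the monotone limit) and a remainder (bounded by $\mu(\confs\setminus X_{t_0})$). You instead note that the indicators $\mathbf{1}_{F^{-n}([u])}$ converge pointwise (because cylinders are clopen and orbits converge) and apply the dominated convergence theorem directly. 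Your argument is shorter and more conceptual; the paper's argument is more elementary in that it avoids naming DCT, but it is essentially a bare-hands reproof of that theorem in this special case.

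Two further points of comparison. First, for the easy direction the paper simply reads off convergence of $\trac{F}{c}$ from convergence of $\delta_{F^t(c)}([q])$ for each $q\in Q$; your identification of the limit measure as a Dirac $\delta_x$ is correct but not needed --- eventual constancy of the trace already gives convergence in $\confs$. Second, you explicitly handle the passage from cylinderwise convergence to weak-* convergence via compactness of $\meas{\confs}$, which the paper leaves implicit (it just says ``which concludes convergence''); your treatment here is more careful.
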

\begin{proof}
  Suppose first that ${(\meas{\confs},F)}$ is convergent and consider any ${c\in\confs}$.
  Since ${F^t(\delta_c)= \delta_{F^t(c)}}$ and since ${\bigl(F^t(\delta_c)\bigr)_t}$ converges to some measure ${F^\omega(\delta_c)}$ by hypothesis, we have that for any ${q\in Q}$, ${\delta_{F^t(c)}([q])}$ must converge to ${F^\omega(\delta_c)([q])}$.
  But $\delta_{F^t(c)}([q])$ is either $1$ if ${\trac{F}{c}(t)=q}$ or $0$ otherwise.
  This shows that ${\trac{F}{c}}$ is eventually constant.

  Suppose now for the other direction that ${(\confs,F)}$ is convergent.
  Consider any measure $\mu$ and any pattern ${u\in Q^D}$ with ${D\subseteq \boule{n}}$ for some ${n\in\N}$.
  We want to show that ${\bigl(F^t(\mu)([u])\bigr)_t}$ converges.
  Consider the set $X_t$ of initial configurations whose orbit is frozen for all cells in ${\boule{n}}$ after at most $t$ steps: 
  \[X_t = \bigl\{c\in\confs : \forall z\in\boule{n}, \freezt{F}{c}{z}\leq t\bigr\}.\]
  ${(X_t)_t}$ is monotone increasing and ${\bigcup_tX_t=\confs}$ so the measure of these sets converges to $1$.
  ${(F^{-t}([u])\cap X_t)}$ is also monotone increasing (if ${c\in F^{-t}([u])\cap X_t}$ then ${F^{t'}(c)\in [u]}$ for any ${t'\geq t}$), so the measure of these sets converges (from below) to some $\alpha$.
  Then, for any ${\epsilon>0}$, there is some $t_0\in\N$ such that ${\mu(X_{t_0})\geq 1-\epsilon}$ and ${\alpha-\epsilon\leq\mu(F^{-t_0}([u])\cap X_{t_0})\leq\alpha.}$
  For any ${t\geq t_0}$, we can write ${F^t(\mu)([u])}$ as: 
  \[F^t(\mu)([u]) = \mu\bigl(F^{-t}([u])\cap X_{t_0}\bigr) + \mu\bigl(F^{-t}([u])\setminus X_{t_0}\bigr).\]
  But it actually holds that ${F^{-t}([u])\cap X_{t_0}= F^{-t_0}([u])\cap X_{t_0}}$, so we have actually shown ${|F^t(\mu)([u])-\alpha|\leq\epsilon}$, which concludes convergence of ${(\meas{\confs},F)}$ because $\mu$ and $u$ were chosen arbitrarily.
\end{proof}

A second observation on convergent CA is that they cannot be too chaotic.
Precisely, a dynamical system ${(X,T)}$ is said sensitive to initial conditions if there is $\epsilon$ such that for all $\delta$ and all ${x\in X}$ there is some ${y\in X}$ which is $\delta$-close to $x$ but whose orbit will be $\epsilon$-far from that of $x$ at some future step (see \cite{kurkabook} for more details on this notion in the context of CA).

\newcommand{\trace}[1]{T_{#1}}
In the following lemma, we use the notation ${\trace{A}(c)}$ for some finite set ${A\subseteq\spac}$ to denote the trace of configuration $c$ on $A$, i.e. the map ${t\mapsto \restr{F^t(c)}{A}}$.

\begin{lemma}\label{lem:convnotsens}
  No convergent CA is sensitive to initial conditions.
\end{lemma}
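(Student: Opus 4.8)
The plan is to establish the statement in its negated, pointwise form: I will show that for \emph{every} $\epsilon>0$ the sensitivity condition fails, by producing a configuration $x$ and a precision radius $m$ such that every configuration agreeing with $x$ on $\boule{m}$ remains within distance $\epsilon$ of $x$ along the \emph{entire} orbit. Since sensitivity would require a single $\epsilon$ that works against \emph{all} base points, exhibiting for each $\epsilon$ even one such robust $x$ is enough to rule it out. So fix $\epsilon>0$, pick $k$ with $2^{-k}\leq\epsilon$ (so that agreement on $\boule{k}$ forces $\epsilon$-closeness), and let $r$ be a radius of $F$.

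The key tool is the family of freezing-time sets already used in Lemma~\ref{lem:convergencemeasure}, but localized to $\boule{k}$:
\[X_t = \bigl\{c\in\confs : \forall z\in\boule{k},\ \freezt{F}{c}{z}\leq t\bigr\}.\]
I would first record three facts. (i) Each $X_t$ is closed: it is the intersection over the finitely many $z\in\boule{k}$ and all $t'\geq t$ of the conditions $F^{t'+1}(c)_z = F^{t'}(c)_z$, each of which depends on finitely many coordinates of $c$ and is therefore clopen. (ii) The family is increasing in $t$. (iii) By convergence applied to each $\sigma_z(c)$, the trace $\trac{F}{\sigma_z(c)}$ is eventually constant, i.e.\ cell $z$ freezes in the orbit of $c$; taking $t=\max_{z\in\boule{k}}\freezt{F}{c}{z}$ gives $c\in X_t$, so $\bigcup_t X_t=\confs$. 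Now $\confs$ is compact metric, hence a Baire space, so a countable cover by closed sets cannot be built from nowhere-dense pieces: some $X_{t_0}$ has nonempty interior and thus contains a cylinder. By further specifying coordinates this cylinder can be taken to be $[u]$ with domain $\boule{m}$ for any $m$ I wish, and I choose $m\geq k+t_0 r$.

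It then remains to verify that any $x\in[u]$ works with $\delta=2^{-m}$. Take $y$ with $d(x,y)<\delta$, so $x$ and $y$ agree on $\boule{m}$, and fix $z\in\boule{k}$ and $n\geq0$. If $n\leq t_0$, then $F^n(\cdot)_z$ depends only on coordinates in $z+\boule{nr}\subseteq\boule{k+nr}\subseteq\boule{m}$, whence $F^n(x)_z=F^n(y)_z$. If $n>t_0$, then since $x,y\in X_{t_0}$ the cell $z$ is frozen from time $t_0$ on for both, so $F^n(x)_z=F^{t_0}(x)_z$ and $F^n(y)_z=F^{t_0}(y)_z$, and these agree by the previous case taken at $n=t_0$. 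Hence $F^n(x)$ and $F^n(y)$ agree on $\boule{k}$, i.e.\ $d(F^n(x),F^n(y))\leq 2^{-k}\leq\epsilon$ for all $n$. As $\epsilon$ was arbitrary, no $\epsilon$ witnesses sensitivity, so $F$ is not sensitive; note the argument is uniform in the group $\spac$.

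The step I expect to be the genuine crux is the Baire category argument: turning the pointwise convergence (each orbit eventually frozen on $\boule{k}$) into the \emph{uniform} statement that a whole cylinder freezes by a common time $t_0$. The closedness of $X_t$ is the technical point that makes Baire applicable, and the subsequent light-cone bookkeeping (enlarging the cylinder to $\boule{k+t_0 r}$ and the $n\leq t_0$ versus $n>t_0$ split) is routine once $t_0$ is in hand.
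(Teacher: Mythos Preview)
Your proof is correct but follows a genuinely different strategy from the paper's. The paper argues by contradiction: assuming sensitivity with constant $N$, it iteratively applies the sensitivity hypothesis to build a converging sequence $(c_n)$ of configurations whose traces on $\boule{N}$ are non-constant on longer and longer time intervals $[t_i,t_{i+1}]$, so that the limit configuration has a trace that fails to stabilize. Your route is instead to manufacture an equicontinuity point via Baire category: the closed sets $X_t$ cover $\confs$, one of them has interior, and any cylinder inside it (enlarged to radius $k+t_0r$) yields a point where orbits stay $\epsilon$-close forever. The paper's argument is more hands-on and avoids any appeal to Baire; yours is shorter, more conceptual, and actually proves a bit more (the existence of equicontinuity points, linking directly to the K\r{u}rka sensitivity/equicontinuity dichotomy). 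Both are uniform in the group $\spac$. One small cosmetic point: to get strict inequality $d(F^n(x),F^n(y))<\epsilon$ in the negation of sensitivity, choose $k$ with $2^{-k}<\epsilon$ rather than $\leq\epsilon$.
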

\begin{proof}
  Suppose by contradiction that $F$ on $\confs$ is convergent and sensitive to initial conditions: there is ${N\in\N}$ such that for any ${c\in \confs}$ and any ${p\in\N}$ there exists ${c'\in\confs}$ such that ${\restr{c}{\boule{p}}=\restr{c'}{\boule{p}}}$ and ${\restr{F^t(c)}{\boule{N}}\neq\restr{F^t(c')}{\boule{N}}}$ for some ${t\in\N}$. Consider any configuration ${c_0\in\confs}$ and ${p_1\geq N}$. By sensitivity there is ${c'\in\confs}$ and ${t_1\in\N}$ such that either ${\trace{\boule{N}}(c_0)}$ or ${\trace{\boule{N}}(c')}$ is non-constant on time interval ${[0,t_1]}$. Denote by $c_1$ the one among ${c_0}$ and $c'$ that corresponds to the non-constant trace. Let ${p_2 = p_1 + N + r(t_1+1)}$ where $r$ is the radius of $F$. Applying sensitivity again on $c_1$ and $p_2$ we know there exist $c'$ and $t_2$ such that:
  \begin{itemize}
  \item ${c_1}$ and $c'$ are identical on ${\boule{p_2}}$;
  \item therefore, by choice of $p_2$, ${\trace{\boule{N}}(c_1)}$ and ${\trace{\boule{N}}(c')}$ coincide on interval ${[0,t_1]}$;
  \item ${\trace{\boule{N}}(c_1)}$ and ${\trace{\boule{N}}(c')}$ differ at time ${t_2}$.
  \end{itemize}
  So one of $c_1$ or $c'$, denoted $c_2$, is such that ${\trace{\boule{N}}(c_2)}$ is not constant on interval ${[t_1,t_2]}$. Going on with the same reasoning we construct a converging sequence ${(c_n)_{n\in\N}}$ of configurations such that ${\trace{\boule{N}}(c_n)}$ is not constant on each interval ${[t_i,t_{i+1}]}$ for ${0\leq i<n}$. Taking ${\displaystyle c=\lim_{n\to\infty} c_n}$ we get a trace ${\trace{\boule{N}}(c)}$ which is not eventually constant contradicting convergence.
\end{proof}

Our last observation on convergent CA is that they are always irreversible (meaning non-bijective, see Section~\ref{sec:definitions}) and even non-surjective (except the identity), at least if $\spac$ is amenable.
The proof below essentially relies on Poincaré recurrence theorem.

\begin{lemma}
  Let $\spac$ be an amenable group. If a convergent CA on $\confs$ is surjective, then it is the identity map.
\end{lemma}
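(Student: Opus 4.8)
The plan is to convert $F$ into a measure-preserving transformation and then exploit the tension between recurrence and convergence. The key external ingredient is that, \emph{because $\spac$ is amenable}, a surjective CA preserves the uniform Bernoulli measure $\mu$: this is the balancedness of surjective CA, a consequence of the Garden of Eden theorem (see \cite{cagroups}), and it is precisely here that amenability enters. Granting this, $(\confs,\mathcal{B},\mu,F)$ is a probability-measure-preserving dynamical system, so the Poincaré recurrence theorem applies: for every measurable set $A$, $\mu$-almost every point of $A$ returns to $A$ infinitely often.

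Next I would combine recurrence with convergence to show that $F^\omega(c)=c$ for $\mu$-almost every $c$. Since $\spac$ is countable and $Q$ finite, there are only countably many cylinders, all of positive $\mu$-measure (as $\mu$ has full support). For a fixed cylinder $[u]$, Poincaré recurrence yields a $\mu$-null set $B_{[u]}\subseteq[u]$ outside of which $F^n(c)\in[u]$ for infinitely many $n$. Let $G=\confs\setminus\bigcup_{[u]}B_{[u]}$, still of full measure. Fix $c\in G$ and a finite domain $D$, and put $u=\restr{c}{D}$. Then $F^n(c)\in[u]$ for infinitely many $n$; but convergence forces the trace on $D$, namely $t\mapsto\restr{F^t(c)}{D}$, to be eventually constant, hence eventually equal to $u$. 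Thus $\restr{F^\omega(c)}{D}=\restr{c}{D}$, and since $D$ was arbitrary, $F^\omega(c)=c$.

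The final step upgrades this to $F=\mathrm{id}$. Because $F^\omega(c)=\lim_n F^n(c)$, shifting the index gives the identity $F^\omega\circ F=F^\omega$ for every $c$. Writing $S=\{c:F^\omega(c)=c\}\supseteq G$, we have $\mu(S)=1$, and by invariance of $\mu$ also $\mu(F^{-1}(S))=1$, so $\mu(S\cap F^{-1}(S))=1$. For $c$ in this intersection, $F^\omega(F(c))=F(c)$ while also $F^\omega(F(c))=F^\omega(c)=c$, giving $F(c)=c$. Hence the set $\{c:F(c)=c\}$, which is closed since $F$ is continuous, has $\mu$-measure $1$; as $\mu$ has full support, its support is all of $\confs$, and a closed set of full measure must equal $\confs$. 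Therefore $F(c)=c$ for every $c$, that is, $F$ is the identity.

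I expect the main obstacle to be the very first ingredient, namely that surjectivity forces preservation of the uniform Bernoulli measure. This is the genuinely substantial input, it is where amenability is indispensable (the statement fails on free groups, matching the hypothesis of the lemma), and once $\mu$ is invariant the recurrence-versus-convergence argument is essentially forced.
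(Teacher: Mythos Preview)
Your proof is correct and follows essentially the same approach as the paper: both use amenability to obtain invariance of the uniform Bernoulli measure under the surjective $F$, and then exploit Poincar\'e recurrence against convergence. The paper packages the argument slightly differently---it argues directly by contradiction on a single cylinder $[u]$ where $F$ necessarily changes the state at $\cspac$, rather than first establishing $F^\omega=\mathrm{id}$ almost everywhere and then upgrading---but the core mechanism is identical.
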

\begin{proof}
  Let $F$ be surjective and convergent and let $\mu$ denote the uniform product measure on $\confs$. By \cite{Bartholdi2010}, $\mu$ is preserved under $F$: ${\mu(X)=\mu(F^{-1}(X))}$ for any Borel set $X$ (see \cite{capobianco} for an overview of properties of surjective CA on amenable groups). If we suppose that $F$ is not the identity map, then there is a word ${u\in Q^{\boule{n}}}$ such that for all ${x\in[u]}$ it holds ${F(x)_0\neq x_0}$. We claim that there is a configuration $x$ such that ${F^t(x)\in[u]}$ for infinitely many $t$. From this claim we deduce that $F$ is not convergent because the orbit of $x$ is not convergent, and the lemma follows. To prove the claim, let us denote ${R_t = \bigcup_{t'\geq t}F^{-t'}([u])}$ and ${R=\bigcap_tR_t}$.  By definition $R$ is the set of configurations whose orbit visits $[u]$ infinitely many times and we want to show that ${R\neq\emptyset}$. We actually show that ${\mu(R)>0}$. Since ${R_{t+1}=F^{-1}(R_t)}$ for all ${t\geq 0}$ we have ${\mu(R_t)=\mu(R_0)}$. Moreover ${[u]\subseteq R_0}$ so we deduce that ${\mu([u]\setminus R_t)\leq \mu(R_0\setminus R_t)= 0}$ since ${R_t\subseteq R_0}$ and ${\mu(R_t)=\mu(R_0)}$. Finally, by Boole's inequality we have ${\mu([u]\setminus R)=0}$ so ${\mu(R)>0}$ since ${\mu([u])>0}$.
\end{proof}

\section{Nilpotency and its Variants}
\label{sec:nil}

This section explores examples where the asymptotic dynamics is reduced to a singleton.
We will consider various properties using either the language of orbits of configurations or measures, or the language of traces.

\subsection{Extreme cases of convergence}

Let us start by the strongest forms of nilpotency, which are strengthened forms of convergence.

\begin{definition}
  Let $F$ be be a CA on $\confs$.
  \begin{itemize}
  \item $F$ is said \emph{nilpotent} if there is $x\in\confs$ and
    ${t_0\in\N}$ such that for all ${y\in\confs}$ it holds ${F^{t_0}(y)=x}$.
  \item $F$ is said \emph{asymptotically nilpotent} if there
    is ${x\in\confs}$ such that for all ${y\in\confs}$ it holds
    ${F^t(y)\to_t x}$.
\end{itemize}
\end{definition}

\renewcommand\stateZ[2]{\draw[fill=white] (#1,#2) rectangle +(1,1);}
\renewcommand\stateO[2]{\draw[fill=blue] (#1,#2) rectangle +(1,1);}
\newcommand\stateZO[2]{\draw[fill=red] (#1,#2) rectangle +(1,1);}
\newcommand\etz{\tikz[baseline,scale=.3]{\stateZ{0}{0}}}
\newcommand\etu{\tikz[baseline,scale=.3]{\stateO{0}{0}}}
\newcommand\etd{\tikz[baseline,scale=.3]{\stateZO{0}{0}}}

\begin{example}
  A local evolution map which is a constant map yields a nilpotent CA, but let us give a non-trivial example. Consider the CA ${F:Q^\Z\to Q^\Z}$ of radius $1$ where ${Q=\bigl\{\etz,\etu,\etd\bigr\}}$ and defined by all the transitions appearing in the following space-time diagram (times goes from bottom to top) and such that any transition not appearing in it produces the state $\etz$:
  \begin{center}
    \begin{tikzpicture}[scale=.2]
      \stateZ{0}{0}\stateZ{0}{1}\stateZ{0}{2}\stateZ{0}{3}\stateZ{0}{4}\stateZ{0}{5}\stateZ{0}{6}\stateZ{0}{7}\stateZ{0}{8}\stateZ{0}{9}\stateZ{0}{10}\stateZ{0}{11}\stateZ{0}{12}\stateZ{0}{13}\stateZ{0}{14}\stateZ{0}{15}\stateZ{0}{16}\stateZ{0}{17}\stateZ{0}{18}\stateZ{0}{19}\stateZ{1}{0}\stateZ{1}{1}\stateZ{1}{2}\stateZ{1}{3}\stateZ{1}{4}\stateZ{1}{5}\stateZ{1}{6}\stateZ{1}{7}\stateZ{1}{8}\stateZ{1}{9}\stateZ{1}{10}\stateZ{1}{11}\stateZ{1}{12}\stateZ{1}{13}\stateZ{1}{14}\stateZ{1}{15}\stateZ{1}{16}\stateZ{1}{17}\stateZ{1}{18}\stateZ{1}{19}\stateZ{2}{0}\stateZ{2}{1}\stateZ{2}{2}\stateZ{2}{3}\stateZ{2}{4}\stateZ{2}{5}\stateZ{2}{6}\stateZ{2}{7}\stateZ{2}{8}\stateZ{2}{9}\stateZ{2}{10}\stateZ{2}{11}\stateZ{2}{12}\stateZ{2}{13}\stateZ{2}{14}\stateZ{2}{15}\stateZ{2}{16}\stateZ{2}{17}\stateZ{2}{18}\stateZ{2}{19}\stateZ{3}{0}\stateZ{3}{1}\stateZ{3}{2}\stateZ{3}{3}\stateZ{3}{4}\stateZ{3}{5}\stateZ{3}{6}\stateZ{3}{7}\stateZ{3}{8}\stateZ{3}{9}\stateZ{3}{10}\stateZ{3}{11}\stateZ{3}{12}\stateZ{3}{13}\stateZ{3}{14}\stateZ{3}{15}\stateZ{3}{16}\stateZ{3}{17}\stateZ{3}{18}\stateZ{3}{19}\stateZO{4}{0}\stateZ{4}{1}\stateZ{4}{2}\stateZ{4}{3}\stateZ{4}{4}\stateZ{4}{5}\stateZ{4}{6}\stateZ{4}{7}\stateZ{4}{8}\stateZ{4}{9}\stateZ{4}{10}\stateZ{4}{11}\stateZ{4}{12}\stateZ{4}{13}\stateZ{4}{14}\stateZ{4}{15}\stateZ{4}{16}\stateZ{4}{17}\stateZ{4}{18}\stateZ{4}{19}\stateZO{5}{0}\stateO{5}{1}\stateO{5}{2}\stateO{5}{3}\stateZ{5}{4}\stateZ{5}{5}\stateZ{5}{6}\stateZ{5}{7}\stateZ{5}{8}\stateZ{5}{9}\stateZ{5}{10}\stateZ{5}{11}\stateZ{5}{12}\stateZ{5}{13}\stateZ{5}{14}\stateZ{5}{15}\stateZ{5}{16}\stateZ{5}{17}\stateZ{5}{18}\stateZ{5}{19}\stateO{6}{0}\stateZO{6}{1}\stateZO{6}{2}\stateZ{6}{3}\stateZ{6}{4}\stateZ{6}{5}\stateZ{6}{6}\stateZ{6}{7}\stateZ{6}{8}\stateZ{6}{9}\stateZ{6}{10}\stateZ{6}{11}\stateZ{6}{12}\stateZ{6}{13}\stateZ{6}{14}\stateZ{6}{15}\stateZ{6}{16}\stateZ{6}{17}\stateZ{6}{18}\stateZ{6}{19}\stateZ{7}{0}\stateZO{7}{1}\stateZ{7}{2}\stateO{7}{3}\stateO{7}{4}\stateZ{7}{5}\stateZ{7}{6}\stateZ{7}{7}\stateZ{7}{8}\stateZ{7}{9}\stateZ{7}{10}\stateZ{7}{11}\stateZ{7}{12}\stateZ{7}{13}\stateZ{7}{14}\stateZ{7}{15}\stateZ{7}{16}\stateZ{7}{17}\stateZ{7}{18}\stateZ{7}{19}\stateZO{8}{0}\stateZ{8}{1}\stateO{8}{2}\stateZO{8}{3}\stateZ{8}{4}\stateZO{8}{5}\stateO{8}{6}\stateZO{8}{7}\stateZ{8}{8}\stateZ{8}{9}\stateZ{8}{10}\stateZ{8}{11}\stateZ{8}{12}\stateZ{8}{13}\stateZ{8}{14}\stateZ{8}{15}\stateZ{8}{16}\stateZ{8}{17}\stateZ{8}{18}\stateZ{8}{19}\stateZO{9}{0}\stateO{9}{1}\stateO{9}{2}\stateO{9}{3}\stateZO{9}{4}\stateZ{9}{5}\stateO{9}{6}\stateO{9}{7}\stateZO{9}{8}\stateZ{9}{9}\stateZ{9}{10}\stateZ{9}{11}\stateZ{9}{12}\stateZ{9}{13}\stateZ{9}{14}\stateZ{9}{15}\stateZ{9}{16}\stateZ{9}{17}\stateZ{9}{18}\stateZ{9}{19}\stateO{10}{0}\stateZO{10}{1}\stateZO{10}{2}\stateZ{10}{3}\stateZO{10}{4}\stateO{10}{5}\stateZO{10}{6}\stateZ{10}{7}\stateZO{10}{8}\stateO{10}{9}\stateZO{10}{10}\stateO{10}{11}\stateO{10}{12}\stateZ{10}{13}\stateZ{10}{14}\stateZ{10}{15}\stateZ{10}{16}\stateZ{10}{17}\stateZ{10}{18}\stateZ{10}{19}\stateZ{11}{0}\stateZO{11}{1}\stateZ{11}{2}\stateZO{11}{3}\stateO{11}{4}\stateO{11}{5}\stateZ{11}{6}\stateZO{11}{7}\stateO{11}{8}\stateO{11}{9}\stateZ{11}{10}\stateZO{11}{11}\stateZ{11}{12}\stateZO{11}{13}\stateO{11}{14}\stateO{11}{15}\stateZ{11}{16}\stateZ{11}{17}\stateZ{11}{18}\stateZ{11}{19}\stateZO{12}{0}\stateZ{12}{1}\stateZO{12}{2}\stateZ{12}{3}\stateZO{12}{4}\stateZ{12}{5}\stateZO{12}{6}\stateZ{12}{7}\stateZO{12}{8}\stateZ{12}{9}\stateZO{12}{10}\stateZ{12}{11}\stateZO{12}{12}\stateZ{12}{13}\stateZO{12}{14}\stateZ{12}{15}\stateZO{12}{16}\stateZ{12}{17}\stateZ{12}{18}\stateZ{12}{19}\stateO{13}{0}\stateZO{13}{1}\stateO{13}{2}\stateZO{13}{3}\stateO{13}{4}\stateZO{13}{5}\stateO{13}{6}\stateZO{13}{7}\stateO{13}{8}\stateZO{13}{9}\stateO{13}{10}\stateZO{13}{11}\stateO{13}{12}\stateZO{13}{13}\stateO{13}{14}\stateZO{13}{15}\stateO{13}{16}\stateZO{13}{17}\stateO{13}{18}\stateZ{13}{19}\stateZ{14}{0}\stateZ{14}{1}\stateZ{14}{2}\stateZ{14}{3}\stateZ{14}{4}\stateZ{14}{5}\stateZ{14}{6}\stateZ{14}{7}\stateZ{14}{8}\stateZ{14}{9}\stateZ{14}{10}\stateZ{14}{11}\stateZ{14}{12}\stateZ{14}{13}\stateZ{14}{14}\stateZ{14}{15}\stateZ{14}{16}\stateZ{14}{17}\stateZ{14}{18}\stateZ{14}{19}\stateZ{15}{0}\stateZ{15}{1}\stateZ{15}{2}\stateZ{15}{3}\stateZ{15}{4}\stateZ{15}{5}\stateZ{15}{6}\stateZ{15}{7}\stateZ{15}{8}\stateZ{15}{9}\stateZ{15}{10}\stateZ{15}{11}\stateZ{15}{12}\stateZ{15}{13}\stateZ{15}{14}\stateZ{15}{15}\stateZ{15}{16}\stateZ{15}{17}\stateZ{15}{18}\stateZ{15}{19}\stateZ{16}{0}\stateZ{16}{1}\stateZ{16}{2}\stateZ{16}{3}\stateZ{16}{4}\stateZ{16}{5}\stateZ{16}{6}\stateZ{16}{7}\stateZ{16}{8}\stateZ{16}{9}\stateZ{16}{10}\stateZ{16}{11}\stateZ{16}{12}\stateZ{16}{13}\stateZ{16}{14}\stateZ{16}{15}\stateZ{16}{16}\stateZ{16}{17}\stateZ{16}{18}\stateZ{16}{19}\stateZ{17}{0}\stateZ{17}{1}\stateZ{17}{2}\stateZ{17}{3}\stateZ{17}{4}\stateZ{17}{5}\stateZ{17}{6}\stateZ{17}{7}\stateZ{17}{8}\stateZ{17}{9}\stateZ{17}{10}\stateZ{17}{11}\stateZ{17}{12}\stateZ{17}{13}\stateZ{17}{14}\stateZ{17}{15}\stateZ{17}{16}\stateZ{17}{17}\stateZ{17}{18}\stateZ{17}{19}\stateZ{18}{0}\stateZ{18}{1}\stateZ{18}{2}\stateZ{18}{3}\stateZ{18}{4}\stateZ{18}{5}\stateZ{18}{6}\stateZ{18}{7}\stateZ{18}{8}\stateZ{18}{9}\stateZ{18}{10}\stateZ{18}{11}\stateZ{18}{12}\stateZ{18}{13}\stateZ{18}{14}\stateZ{18}{15}\stateZ{18}{16}\stateZ{18}{17}\stateZ{18}{18}\stateZ{18}{19}\stateZ{19}{0}\stateZ{19}{1}\stateZ{19}{2}\stateZ{19}{3}\stateZ{19}{4}\stateZ{19}{5}\stateZ{19}{6}\stateZ{19}{7}\stateZ{19}{8}\stateZ{19}{9}\stateZ{19}{10}\stateZ{19}{11}\stateZ{19}{12}\stateZ{19}{13}\stateZ{19}{14}\stateZ{19}{15}\stateZ{19}{16}\stateZ{19}{17}\stateZ{19}{18}\stateZ{19}{19}
    \end{tikzpicture}
  \end{center}
  We claim that ${F^{19}(c)_0 = \etz}$ for any configuration $c$ so that $F$ is nilpotent (we verified this claim by a computer program).
  However, the diagram above clearly shows that ${F^{18}}$ is not a constant map.
  We challenge the reader to find a nilpotent CA with same alphabet and radius but such that ${F^{19}}$ is not a constant map. 
\end{example}

The attracting configuration $x$ in the definitions above must be uniform, because it must attract in particular uniform configurations. Thus, these definitions can be expressed as properties of traces:
\begin{itemize}
\item $F$ is nilpotent if and only if there is $t_0$ and $q\in Q$
  such that ${\trac{F}{x}(t)=q}$ for all ${t\geq t_0}$ and all
  ${x\in\confs}$.
\item $F$ is asymptotically nilpotent if and only if there is $q\in Q$
  such that all
  ${x\in\confs}$ there is $t_x\in\N$ such that ${\trac{F}{x}(t)=q}$ for all ${t\geq t_x}$.
\end{itemize}

Note that the state $q$ in the above definitions must be \emph{quiescent}, \textit{i.e.} such that the uniform configuration $\overline{q}$ is a fixed point of $F$.
The property of nilpotency is also well-known as a property of the \emph{limit set}.
The limit set of a CA $F$ is the set ${\Omega_F = \bigcap_t F^t(\confs)}$ (see \cite{Culik89,hu87}).
Nilpotency is equivalent to ${\Omega_F}$ being a singleton.

Asymptotic nilpotency can be reformulated through the action of the CA on measures.

\begin{lemma}
  \label{lem:asynilmeasure}
  A CA $F$ on $\confs$ is asymptotically nilpotent if and only there is $q\in Q$ such that for any ${\mu\in\meas{\confs}}$ it holds ${F^t(\mu)\to_t\delta_{\overline{q}}}$.
\end{lemma}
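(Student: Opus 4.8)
The plan is to prove the two implications separately, dispatching the reverse direction through the Dirac embedding and reserving the real work for the forward direction. For the reverse implication I would specialize the hypothesis to the measures $\delta_y$. Recall from Section~\ref{sec:definitions} that $F(\delta_c)=\delta_{F(c)}$, so $F^t(\delta_y)=\delta_{F^t(y)}$. Assuming $F^t(\delta_y)\to_t\delta_{\overline{q}}$, I test against the cylinder $[q^D]$ (the all-$q$ pattern on a finite domain $D$): this forces $\delta_{F^t(y)}([q^D])\to_t 1$, i.e. $\restr{F^t(y)}{D}$ is eventually the all-$q$ pattern. As $D$ is arbitrary this says exactly $F^t(y)\to_t\overline{q}$, which is asymptotic nilpotency with the same state $q$. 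Conceptually this is just the observation that $c\mapsto\delta_c$ is a topological embedding, so convergence of the image sequence pulls back.

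For the forward implication, assume $F$ is asymptotically nilpotent towards $\overline{q}$, fix $\mu\in\meas{\confs}$ and a finite $D$, and aim to show $F^t(\mu)([u])\to_t\delta_{\overline{q}}([u])$ for every pattern $u$ of domain $D$. The key reduction is that it suffices to prove $F^t(\mu)([q^D])\to_t 1$: the cylinders over $D$ partition $\confs$, so once the all-$q$ cylinder absorbs full mass in the limit, every other cylinder over $D$ is squeezed to $0$, which matches the values $\delta_{\overline{q}}([u])$. To obtain $F^t(\mu)([q^D])\to_t 1$ I would mirror the monotone-set argument used in the proof of Lemma~\ref{lem:convergencemeasure}: set $A_t=\bigcap_{t'\geq t}F^{-t'}([q^D])$, the configurations whose orbit stays in $[q^D]$ from time $t$ onward. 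These sets are Borel (countable intersections of clopen sets), monotone increasing in $t$, and—this is where asymptotic nilpotency enters—satisfy $\bigcup_t A_t=\confs$, since for any $c$ the convergence $F^t(c)\to_t\overline{q}$ forces $\restr{F^{t'}(c)}{D}$ to equal the all-$q$ pattern for all large $t'$, placing $c$ in some $A_t$. Continuity from below then gives $\mu(A_t)\uparrow 1$, and because $A_t\subseteq F^{-t}([q^D])$ we conclude $F^t(\mu)([q^D])=\mu\bigl(F^{-t}([q^D])\bigr)\geq\mu(A_t)\to_t 1$.

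The main obstacle is exactly this passage from cell-wise eventual stabilization to mass concentration in the weak-* topology: the freezing times $\freezt{F}{c}{z}$ at which a configuration settles on $D$ are unbounded as $c$ ranges over $\confs$, so no uniform convergence is available and one must instead exploit continuity of $\mu$ along the increasing family $(A_t)_t$. Once this step is secured, the remainder is bookkeeping: finitely many cylinders per domain, with the non-all-$q$ ones controlled by the complementary bound $F^t(\mu)([u])\leq 1-F^t(\mu)([q^D])$. I note that, unlike the statement preceding the lemma, the argument does not need quiescence of $q$ as a separate hypothesis, since the required eventual constancy of each cell along an orbit is already furnished by the convergence $F^t(c)\to_t\overline{q}$ itself.
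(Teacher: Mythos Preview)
Your proposal is correct and follows essentially the same approach as the paper: both directions are handled as in Lemma~\ref{lem:convergencemeasure}, with the reverse implication via the Dirac embedding and the forward implication via an increasing family of Borel sets exhausting $\confs$ combined with continuity from below. Your sets $A_t=\bigcap_{t'\geq t}F^{-t'}([q^D])$ are a minor (and slightly cleaner) variant of the paper's freezing-time sets $X_t$, but the structure of the argument is the same.
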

\begin{proof}
  The property on measures implies asymptotic nilpotency by the same argument as the first part of the proof of Lemma~\ref{lem:convergencemeasure} with the additional fact that ${F^\omega(\delta_c)=\delta_{\overline{q}}}$ which implies that ${F^\omega(c)=\overline{q}}$.

  Conversely, if we suppose that $F$ is asymptotically nilpotent, we can use the same reasoning as in part two of the proof of Lemma~\ref{lem:convergencemeasure}. The additional fact in this case is that ${F^\omega(c)=\overline{q}}$ for all ${c\in\confs}$ so that ${F^{-t}([u])\cap X_t=\emptyset}$ for all pattern $u$ except the uniform one: ${z\in\boule{n}\mapsto q}$. We deduce that ${F^t(\mu)}$ converges towards ${\delta_{\overline{q}}}$.
\end{proof}

It turns out that, under some conditions on $\spac$, nilpotency and asymptotic nilpotency are equivalent.
One might believe that this can be easily proved by a standard compacity argument, but it is not the case.

\begin{theorem}[\cite{GuillonR08,villeasnil12}]\label{theo:asnilimpnil}
  If ${\spac=\Z^d}$ then nilpotency is equivalent to asymptotic nilpotency.
\end{theorem}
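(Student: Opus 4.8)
The easy direction is immediate and worth dispatching first: if $F$ is nilpotent, say $F^{t_0}(y)=x$ for all $y$, then $x=F^{t_0}(\overline{p})$ for a uniform $\overline{p}$ is itself uniform, and $F(x)=F^{t_0+1}(y)=x$, so $x=\overline{q}$ with $q$ quiescent; hence $F^t(y)=\overline{q}$ for all $t\geq t_0$ and all $y$, so in particular $F^t(y)\to_t\overline{q}$. The whole content is the converse, and the reason a ``standard compacity argument'' fails is instructive: asymptotic nilpotency only forces \emph{each individual cell} to be eventually $q$, so it is perfectly compatible with a \emph{glider}, a finite nonzero pattern travelling forever at positive speed. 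A glider makes $F^t(c)\neq\overline{q}$ for \emph{all} $t$ while the orbit still converges to $\overline{q}$ pointwise; it is thus forbidden by nilpotency but not, a priori, by asymptotic nilpotency. The theorem asserts that on $\Z^d$ no such persistent propagation can survive, and this is the group-dependent heart of the matter (the restriction to $\Z^d$ is essential: the analogous equivalence is known to fail for some finitely generated groups $\spac$).

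Assume then that $F$ is asymptotically nilpotent to the quiescent state $q$. I would first record the right reformulation of the goal. By the fact that $F$ commutes with translations, $F$ is nilpotent if and only if there is a single time $T$ with $F^T(c)_{\cspac}=q$ for every $c\in\confs$: indeed $F^T(c)_z=F^T(\sigma_z c)_{\cspac}=q$ for all $z$, and $\overline{q}$ being fixed keeps it there. So everything reduces to producing a \emph{uniform death time at the origin}; equivalently, writing $\tau(c)=\max\{t:F^t(c)_{\cspac}\neq q\}$ (finite for each $c$ by asymptotic nilpotency), the aim is to show $\sup_c\tau(c)<\infty$.

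The one piece of ``compactness'' that does work, and which I would use next, is a Baire category argument. The sets $C_k=\{c:F^t(c)_{\cspac}=q\text{ for all }t\geq k\}$ are closed (each condition $F^t(c)_{\cspac}=q$ is clopen, depending only on $c$ restricted to a finite ball), increasing in $k$, and asymptotic nilpotency gives $\bigcup_k C_k=\confs$. Since $\confs$ is a compact metric space, hence a Baire space, some $C_{k_0}$ has nonempty interior and therefore contains a cylinder $[u]$ for a pattern $u$ on some ball $\boule{m}$. By translation invariance this yields a genuinely uniform local statement: whenever a translate of $u$ is planted around a cell $z$, that cell is frozen to $q$ from time $k_0$ on, regardless of the rest of the configuration. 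This is the uniform ``mortal seed'' that the naive argument cannot extract.

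The hard part, and the step where I expect essentially all the difficulty to lie, is to upgrade this single mortal cylinder into global nilpotency, i.e. to rule out gliders using the geometry of $\Z^d$. I would work inside the limit set $\Omega_F=\bigcap_t F^t(\confs)$, for which nilpotency is exactly $\Omega_F=\{\overline{q}\}$; if this failed there would be a nonzero $c\in\Omega_F$ and, since $F(\Omega_F)=\Omega_F$, a full bi-infinite orbit $(c_t)_{t\in\Z}\subseteq\Omega_F$ with $c_0\neq\overline{q}$. The obstruction is precisely that this furnishes, for every $t$, a \emph{different} configuration $c_{-t}$ whose origin is nonzero at time $t$, and spatial separation alone cannot merge these into one configuration violating asymptotic nilpotency (placing far-apart copies only makes distinct cells fire at distinct times, which asymptotic nilpotency tolerates). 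Resolving this requires a density/counting argument exploiting the amenability of $\Z^d$: one argues that a persistent nonzero structure must occupy a positive fraction of space, and then, combining this with the uniformly mortal cylinder from the Baire step and the invariance of the uniform measure, derives a contradiction with every cell being eventually $q$. I would expect the careful form of this counting argument — and the verification that it genuinely uses the $\Z^d$ structure and breaks down in the non-amenable setting — to be the crux of the whole proof.
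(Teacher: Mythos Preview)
Your easy direction and the Baire extraction of a mortal cylinder are correct, but the proposal then stops where the real work begins and contains two errors along the way. First, the parenthetical claim that the equivalence ``is known to fail for some finitely generated groups'' is wrong: whether any finitely generated $\spac$ fails to be nil-rigid is precisely the open question the paper records immediately after this theorem. Second, your suggested endgame invokes ``the invariance of the uniform measure'', but an asymptotically nilpotent CA other than the identity is not surjective on amenable $\spac$ (a convergent surjective CA on amenable $\spac$ must be the identity, as the paper shows just before this section), so the uniform Bernoulli measure is \emph{not} $F$-invariant; whatever counting argument you intend cannot rest on that.

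More importantly, the paper's proof uses none of your measure/amenability framework. Its compactness input is different from your Baire cylinder: for each ball $\boule{n}$ there is a uniform $T_n$ such that every orbit has \emph{some} time $t\leq T_n$ at which $\boule{n}$ is entirely in state $q$. From there, in dimension~$1$, the two key ideas are (i) a \emph{nested space-time diagram} construction showing that finite mortal configurations die in uniformly bounded time---otherwise one superposes, with disjoint space-time supports, increasingly late-dying finite configurations into a single configuration whose trace at the origin never stabilises---and (ii) \emph{blocking words}, available because convergent CA are not sensitive, used to enclose a hypothetical non-nilpotent orbit and produce finite mortal configurations dying arbitrarily late, contradicting~(i). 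The extension to $\Z^d$, $d\geq 2$, proceeds by \emph{periodization} in $d-1$ directions to reduce to the one-dimensional case and show that all finite configurations are mortal. None of blocking words, nested diagrams, or periodization appear in your outline, and your mortal cylinder alone does not bridge the gap.
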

\begin{proof}[sketch]
  Let $q$ be the quiescent state involved in the asymptotic nilpotency property.
  First, we remark that for any ball $\boule{n}$ there is a uniform bound $T_n$ such that in the orbit of any configuration $c$ there is some time step ${t\leq T_n}$ for which the cells inside $\boule{n}$ are all in state $q$: ${\forall z\in\boule{n}, F^t(c)_z = q}$.
  Indeed, otherwise we would obtain by compacity a configuration whose orbit would not converge to $\overline{q}$.
  Call \emph{finite} any configuration which is everywhere $q$ except on a finite region, and call \emph{mortal} any configuration $c$ such that there is $t$ with ${F^t(c)=\overline{q}}$.
  
  Let us first establish the theorem for ${\spac=\Z}$ as in \cite{GuillonR08}.
  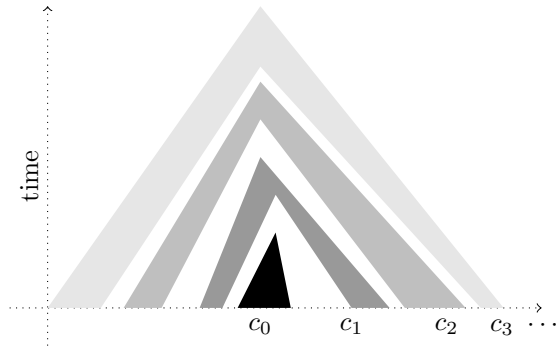
\begin{figure}
    \centering
    \begin{tikzpicture}
      \draw[dotted,->] (-1,0)--(6,0);
      \draw[dotted,->] (-.5,-.5)-- node[midway,sloped,above] {time} (-.5,4);
      \fill[black] (2,0)--(2.5,1)--(2.7,0)--cycle;
      \fill[gray!80!white] (1.5,0)--(1.8,0)--(2.5,1.5)--(3.5,0)--(4,0)--(2.3,2)--cycle;
      \fill[gray!50!white] (.5,0)--(1,0)--(2.3,2.5)--(4.2,0)--(5,0)--(2.3,3)--cycle;
      \fill[gray!20!white] (-.5,0)--(.2,0)--(2.3,3.2)--(5.2,0)--(5.5,0)--(2.3,4)--cycle;
      \draw (2.3,-.25) node {$c_0$};
      \draw (3.5,-.25) node {$c_1$};
      \draw (4.75,-.25) node {$c_2$};
      \draw (5.2,-.25) node[right] {$c_3\ \cdots$};
    \end{tikzpicture}
    \caption{The infinite nested space-time diagram contradicting asymptotic nilpotency in Theorem~\ref{theo:asnilimpnil}. The white part represents state $q$, the grayed parts represent states other than $q$, and each nuance of gray corresponds to a particular configuration $c_i$.}
    \label{fig:nested}
  \end{figure}
  There are two key arguments:
  \begin{itemize}
  \item first, we show nilpotency on finite mortal configurations, \textit{i.e.} there is a uniform $T$ such that for all finite mortal $c$ it holds ${F^T(c)=\overline{q}}$.
    This is proved using the remark above about times $T_n$: if there is no uniform time bound on the mortality of finite mortal configuration, then for arbitrary large $\boule{n}$ and arbitrary large $t$ we can find a finite mortal configuration $c$ such that $c$ is uniformly $q$ on $\boule{n}$ but ${F^t(c)_0\neq q}$.
    From there, we can produce an infinite \textit{nested space-time diagram} as in Figure~\ref{fig:nested} by taking a well-chosen sequence $(c_i)$ of such configurations and considering their superposition $c$: it is sufficient that the non-$q$ portions of space-time diagrams are disjoint enough among the $c_i$ (taking into account the radius of the CA).
    The orbit of this configuration $c$ gives a contradiction with the convergence hypothesis.
    
  \item second, it can be shown that if the CA is not nilpotent, then we can extract finite mortal configurations that die arbitrarily late: indeed, by enclosing a finite configuration alive at time $t$ with blocking words (whose existence is granted by Lemma~\ref{lem:convnotsens}, see \cite{kurkabook}), we obtain a mortal configuration still alive at time $t-b$ where $b$ is a constant depending only on the choice of blocking words. This contradicts the previous item, so we deduce that the CA is nilpotent.
  \end{itemize}
  To extend the result to higher dimensions, it is enough to show that all finite configurations are mortal (this together with asymptotic nilpotency implies nilpotency, by the nested construction argument).
  To simplify we restrict to dimension $2$ and consider any asymptotically nilpotent CA $F$.
  The idea is to reduce to the one-dimensional case by remarking that if a finite configuration $c$ is periodized vertically to form a configuration $c'$, then $c'$ is mortal because the action of $F$ on $c'$ can be seen as the action of a one-dimensional asymptotically nilpotent CA on a one-dimensional configuration.
  The key step of the proof is then to show that no finite configuration can spread too much in the horizontal direction.
  This step is proved by contradiction using again a nested construction of an infinite space-time diagram, but this time a vertical periodization is applied at each step to guarantee mortality.
  Of course, this argument on the bounded horizontal spreading of finite configurations is also valid vertically, so we deduce that the orbit of any finite configuration remains inside a finite support forever.
  This is enough to conclude mortality since $F$ is supposed asymptotically nilpotent.
\end{proof}

The above result has been greatly generalized by \cite{nilrigidity} in at least two ways: the class of dynamical systems considered (for instance, one can consider CA defined over a subshift rather than the whole space $\confs$) and the class of groups $\spac$ considered.
To obtain these results, a notion of tiered dynamical systems was developed which goes way beyond the scope of this tutorial.
The authors also coined the term \emph{nil-rigidity} and used it with various classes of dynamical systems.
Let us stick to the simplest setting and say that a group $\spac$ is nilrigid if any asymptotically nilpotent CA on $\confs$ is nilpotent.
Despite the many results obtained about nilrigidity in \cite{nilrigidity} they left some interesting open questions (see section 11), some of which where already asked in \cite{sanville}.

\begin{question}[\cite{nilrigidity,sanville}]
  Is there any finitely generated group $\spac$ which is not nilrigid? In particular, is the free group with $2$ or more generators nilrigid?
\end{question}

\subsection{Variations with partial convergence and measures}

Let us now consider variations of these strong notions of nilpotency.
We can first consider a similar property of orbits but restricted to a subset of initial configurations.
Let us begin by two properties related to dense subsets of configurations (at least when ${\spac=\Z^2}$).

A configuration is $q$-\emph{finite} for some state $q$ if it is everywhere $q$ except on a finite region of $\spac$.
A configuration $x\in\confs$ is \emph{totally periodic} if its orbit by translation ${\orbi{\spac}{x}}$ is finite.
When ${\spac=\Z^d}$ a totally periodic configuration is a configuration with $d$ non-colinear periods.

\begin{definition}\label{def:nilper}
  \begin{itemize}
  \item $F$ is \emph{nilpotent on periodic configurations} if there is
    a configuration $c$ such that for any totally periodic configuration $x$ there is $t$ with ${F^t(x)=c}$.
  \item $F$ is \emph{nilpotent on finite configurations} if there is
    a state $q$ such that for any $q$-finite configuration $x$ there is $t$ with ${F^t(x)=\overline{q}}$.
\end{itemize}
\end{definition}

Note again that the attracting configuration $c$ must be uniform in the first definition.
Nilpotency on finite configurations is also sometimes called \emph{eroder property} (see Section~\ref{sec:related}).
An asymptotically nilpotent CA must be nilpotent on periodic configurations and nilpotent on finite configurations, but the converse is far from being true as we will see below.

As a second (seemingly unrelated) variation around nilpotency, we can change the point of view and switch from configurations to measures.
A measure ${\mu\in\meas{\confs}}$ is invariant for a CA on ${\confs}$ if ${F(\mu)=\mu}$. The following definition is classical in dynamical system theory \cite{walters1981} and can be defined as a rigidity property of invariant measures.

\begin{definition}\label{def:uniqergo}
  $F$ is \emph{uniquely ergodic} if it possesses a unique invariant measure.
\end{definition}

The invariant measures of $F$ are exactly the limit points of Cesaro mean sequences ${(u_n)_{n\in\N}}$ of the form (\cite[Theorem 6.9]{walters1981})
\[u_n= \frac{1}{n}\sum_{t=0}^{n-1}F^t\mu\]
for ${\mu\in\meas{\confs}}$.

Orbit-wise unique ergodicity is equivalent to the convergence of all such Cesaro mean sequences to the same limit. 
Therefore we could also name this property \emph{Cesaro-nilpotency} to stress the dynamical side of it but it is best known as \emph{unique ergodicity}.
Note that by Lemma~\ref{lem:asynilmeasure}, any asymptotically nilpotent CA is also uniquely ergodic.

Nilpotency over periodic configurations and unique ergodicity are both implied by asymptotic nilpotency. The next lemma shows that the three properties actually form a hierarchy when ${\spac=\Z^d}$.

\begin{lemma}
  A uniquely ergodic CA with ${\spac=\Z^d}$ is nilpotent on periodic configurations.
  \label{lem:uniqergonilperio}
\end{lemma}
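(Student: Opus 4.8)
The plan is to reduce the dynamics on totally periodic configurations to the dynamics of $F$ on finite invariant sets, where eventual periodicity is automatic, and then to use unique ergodicity to pin down the recurrent behaviour. First I would fix a totally periodic $x$ and let $L\leq\Z^d$ be its (full-rank, finite-index) lattice of periods. Since $F$ commutes with translations, if $\sigma_\ell(y)=y$ for all $\ell\in L$ then $\sigma_\ell(F(y))=F(y)$, so $F$ restricts to a self-map of the finite set $X_L$ of $L$-periodic configurations. Every orbit inside $X_L$ is therefore eventually periodic, and the recurrent points form a disjoint union of cycles.

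The key step is that each cycle $C'\subseteq X_L$ carries the invariant probability measure $\frac{1}{\#C'}\sum_{y\in C'}\delta_y$, and cycles with disjoint supports yield distinct invariant measures. Unique ergodicity thus forces a single recurrent cycle $C$ in $X_L$, on which the unique invariant measure $\nu$ of $F$ is supported. As the finite set $\{\overline{q}:q\in Q\}$ of uniform configurations is mapped into itself by $F$, it contains a cycle; that cycle is recurrent in $X_L$, hence equals $C$, so $C$ consists of uniform configurations. Finally, $\nu$ is the unique invariant measure of $F$ on all of $\confs$ and does not depend on $L$; since every uniform configuration belongs to every $X_L$, the support $C$ is one and the same cycle for all lattices $L$.

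To conclude, fix any $c\in C$; it is uniform. Given an arbitrary totally periodic configuration $x$, it lies in $X_L$ for its own period lattice $L$, its orbit is eventually periodic and must enter the unique recurrent cycle $C$, so $F^t(x)=c$ for infinitely many $t$. This is exactly nilpotency on periodic configurations.

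I expect the main obstacle to be the passage from the global, measure-theoretic hypothesis to this finite combinatorics: the crucial observation is that periodicity confines the whole orbit to a finite $F$-invariant set, so that the mere existence of two recurrent cycles anywhere (in any $X_L$) would already exhibit two distinct invariant measures and contradict unique ergodicity. Once this is in place, checking that the surviving cycle is uniform and independent of $L$ is routine, since both the uniform configurations and the support of $\nu$ are common to all the $X_L$.
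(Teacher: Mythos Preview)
Your argument is correct, and it is in fact more economical than the paper's. Both proofs begin the same way: any cycle of totally periodic configurations supports an $F$-invariant measure $\frac{1}{\#C'}\sum_{y\in C'}\delta_y$, so unique ergodicity forces a single such cycle $C$, which every periodic orbit must enter. You then observe that the uniform configurations form an $F$-invariant finite set, hence contain a cycle, which must be $C$; picking any $c\in C$, every periodic orbit visits $c$, and that is exactly the definition of nilpotency on periodic configurations.

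The paper takes a longer route at this point: after establishing the unique attracting cycle $C=(c_0,\ldots,c_{k-1})$, it proves the additional fact that $k=1$, via a separate argument (attributed to G.~Richard) that no CA on $\Z^d$ can synchronize all periodic configurations onto a cycle of length ${\geq 2}$. This step reduces to dimension~1 and constructs, from any putative non-trivial cycle, a periodic configuration whose orbit is a cycle of non-uniform configurations. That is an interesting result in its own right (connected to the synchronization problem), but as your proof shows, it is not needed for the lemma: once every periodic orbit enters $C$, it visits every element of $C$, so reaching a fixed $c\in C$ at some time already suffices. Your argument also makes no essential use of $\spac=\Z^d$ beyond the finiteness of $X_L$, so it applies verbatim to any finitely generated $\spac$.
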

\begin{proof}
  Consider any uniquely ergodic CA $F$ on $\confs$.
  Suppose first that there are two disjoint temporal cycles of totally periodic configurations, \textit{i.e.} two sequences of totally periodic configurations ${(c_0,\ldots,c_{k-1})}$ and ${(d_0,\ldots,d_{l-1})}$ with ${F(c_i)= c_{i+1\bmod k}}$ and ${F(d_j)=d_{j+1\bmod l}}$ for ${0\leq i<k}$ and ${0\leq j<l}$.
  Then we can define two measures ${\mu_c = \frac{1}{k}\sum_{i=0}^{k-1}\delta_{c_i}}$ and ${\mu_d = \frac{1}{l}\sum_{j=0}^{l-1}\delta_{d_j}}$ that are both invariant under $F$ because $(c_i)$ and $(d_j)$ are cyclic orbits.
  Moreover since the two cycles  are disjoint there must exist a cylinder $[u]$ such that $c_i\in[u]$ for some $i$ but ${d_j\not\in[u]}$ whatever $j$.
  Thus ${\mu_c([u])>0}$ and ${\mu_d([u])=0}$ which proves that $\mu_c$ and $\mu_d$ are two distinct invariant measures: a contradiction.

  We have shown that $F$ has the following synchronization property:
  \begin{center}
    \textit{there is a cycle of totally periodic configurations ${(c_0,\ldots,c_{k-1})}$ such that the orbit of any totally periodic configuration eventually enters this cycle $(c_i)$.}
  \end{center}
  We claim that ${k=1}$ which concludes the proof since it means that $F$ is nilpotent on periodic configurations.

  We are actually going to show the stronger claim that no CA $F$ (not necessarily uniquely ergodic) with ${\spac=\Z^d}$ can have the synchronization property above with ${k\geq 2}$.
  The argument we give is due to G. Richard (see \cite{Fates19} for more background on the synchronization problem).
  First note that if this synchronization property holds for some CA $F$ with ${\spac=\Z^d}$ and ${d>1}$ then it must also hold for some CA with ${\spac=\Z}$.
  Indeed, considering $F$ on configurations which are constant in ${d-1}$ directions and periodic in the last one, we actually define a one-dimensional CA which has the synchronization property as $F$ does.
  We can thus suppose without loss of generality that $d=1$.
  Clearly, each configuration $c_i$ in the attracting cycle must be uniform because any uniform configuration has an orbit eventually entering this cycle.
  However, if the synchronization property holds with ${k\geq 2}$ then $F$ has no quiescent state.
  Now consider ${f:Q^m\to Q}$ the local map of $F$ and let $w\in Q^\N$ be a semi-infinite word verifying 
  \[w_{n+m} = f(w_n,\ldots,w_{n+m-1})\]
  for all ${n\in\N}$.
  There must exist $n_1$ and $n_2$ with ${n_1+m<n_2}$ such that \[w_{[n_1,n_1+m-1]}=w_{[n_2,n_2+m-1]}.\] Consider the word $w=w_{[n_1,n_2-1]}$ of length at least ${m+1}$.
  By construction and since $F$ has no quiescent state $w$ contains at least two distinct letters from $Q$.
  Moreover, if $c$ is a periodic $\Z$-configuration of period $w$ then ${F(c) = \sigma_p(c)}$ for some $p$.
  Therefore the orbit of $c$ is a cycle of non-uniform configurations, which contradicts the synchronization property. 
\end{proof}

Interestingly, we can also capture unique ergodicity as a property of traces of orbits on configurations: it is the property that there is some attracting state whose frequency goes to $1$ when considering larger and larger prefixes of any trace.
To be more precise, for any ${q\in Q}$ and any word $w\in Q^\N$ we denote by ${d_q(w)}$ the (inferior) asymptotic density of occurrences of $q$ in $w$: 
\[d_q(w)=\liminf_n \frac{\#\{0\leq i<n : w_i=q\}}{n}.\]

We then get this equivalent formulation of unique ergodicity \cite[Proposition 3.2]{TORMA2015415}.

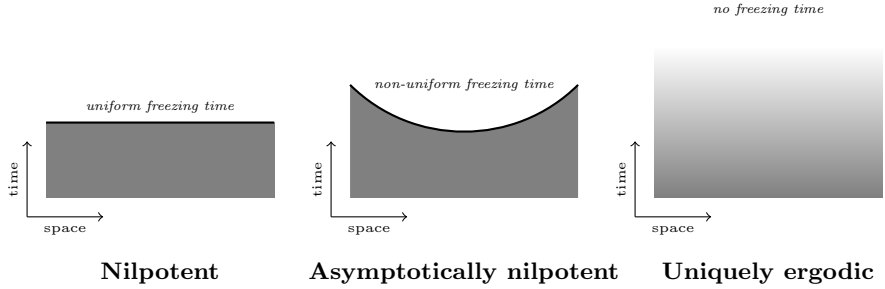
\begin{figure}
  \centering
  \begin{tikzpicture}
    \fill[fill=gray] (0,0) rectangle (3,1);
    \draw[thick] (0,1)--node[midway,above] {\tiny\it uniform freezing time} (3,1);
    \fill[fill=gray] (4,0)--(7,0)--(7,1.5) to[out=225,in=315] (4,1.5) -- cycle;
    \draw[thick] (7,1.5) to[out=225,in=315] (4,1.5);
    \draw (5.5,1.5) node {\tiny\it non-uniform freezing time};
    \fill[fill=gray,path fading=north] (8,0) rectangle (11,2);
    \draw (9.5,2.5) node {\tiny\it no freezing time};
    \draw[->] (-.25,-.25)-- node[midway,below] {\tiny space} +(1,0);
    \draw[->] (-.25,-.25)-- node[midway,sloped,above] {\tiny time} +(0,1);
    \draw (1.5,-1) node {\small\textbf{Nilpotent}};
    \begin{scope}[shift={(4,0)}]
      \draw[->] (-.25,-.25)-- node[midway,below] {\tiny space} +(1,0);
      \draw[->] (-.25,-.25)-- node[midway,sloped,above] {\tiny time} +(0,1);
      \draw (1.5,-1) node {\small\textbf{Asymptotically nilpotent}};
    \end{scope}
    \begin{scope}[shift={(8,0)}]
      \draw[->] (-.25,-.25)-- node[midway,below] {\tiny space} +(1,0);
      \draw[->] (-.25,-.25)-- node[midway,sloped,above] {\tiny time} +(0,1);
      \draw (1.5,-1) node {\small\textbf{Uniquely ergodic}};
    \end{scope}
  \end{tikzpicture}
  \caption{Properties of any space-time diagram of a nilpotent, asymptotically nilpotent and uniquely ergodic CA respectively. White color represents the quiescent state, gray color any state and various shades of gray corresponds to various density of the quiescent state (the lighter, the higher the density). Time goes from bottom to top.}
  \label{fig:nilorbits}
\end{figure}

\begin{lemma}
  \label{lem:traceuniqergo}
  A CA $F$ on $\confs$ with ${\spac=\Z^d}$ is uniquely ergodic if and only if there is ${q\in Q}$ such that for any ${x\in\confs}$ it holds ${d_q(\trac{F}{x}) = 1}$.
\end{lemma}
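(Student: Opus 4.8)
The plan is to read the density $d_q(\trac F x)$ as a Birkhoff (Cesàro) time-average of the indicator of the clopen cylinder $[q]$ along the orbit of $x$, and then connect such averages to invariant measures. Since $F^t(x)\in[q]$ exactly when $\trac F x(t)=q$, one has
\[\frac1n\#\{0\le t<n : \trac F x(t)=q\} = \frac1n\sum_{t=0}^{n-1}\mathbf 1_{[q]}\bigl(F^t(x)\bigr) = e_n^x([q]),\]
where $e_n^x=\frac1n\sum_{t=0}^{n-1}\delta_{F^t(x)}$ is the $n$-th empirical measure of $x$. As recalled after Definition~\ref{def:uniqergo}, every weak-$*$ limit point of $(e_n^x)_n$ is $F$-invariant (these are the Cesàro means with $\mu=\delta_x$); since $\meas{\confs}$ is weak-$*$ compact, this links the two sides of the equivalence through the invariant measures of $F$.

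For the direction ``density $\Rightarrow$ unique ergodicity'', I would assume that some $q$ has $d_q(\trac F x)=1$ for every $x$, and first upgrade this from cell $\cspac$ to every cell $z\in\spac$: applying the hypothesis to $\sigma_z(x)$ and using $\trac F{\sigma_z(x)}(t)=F^t(x)_z$, the time-density of state $q$ at cell $z$ is also $1$, for all $x$ and $z$. As a $\liminf$ equal to $1$ is a genuine limit here, this gives $\frac1n\#\{t<n:F^t(x)_z=q\}\to 1$. Now let $\nu$ be \emph{any} $F$-invariant measure (one exists by compactness). Writing $C_{z,q}=\{c:c_z=q\}$, invariance gives $\int e_n^x(C_{z,q})\,d\nu(x)=\nu(C_{z,q})$ for all $n$, while the integrand tends to $1$ and is bounded, so dominated convergence forces $\nu(C_{z,q})=1$ for every $z$. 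Intersecting over the countably many $z\in\spac$ yields $\nu(\{\overline q\})=1$, i.e. $\nu=\delta_{\overline q}$. Thus every invariant measure equals $\delta_{\overline q}$, so $F$ is uniquely ergodic (and $q$ is forced to be quiescent).

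For the converse I would assume $F$ uniquely ergodic with unique invariant measure $\mu$, and the key point is to identify $\mu$ as a uniform Dirac mass. This is exactly where I would invoke Lemma~\ref{lem:uniqergonilperio}: a uniquely ergodic CA on $\Z^d$ is nilpotent on periodic configurations, and its proof shows that the attracting cycle reduces to a single \emph{fixed} uniform configuration $\overline q$. Hence $F(\overline q)=\overline q$, so $\delta_{\overline q}$ is $F$-invariant, and by uniqueness $\mu=\delta_{\overline q}$. It then remains to observe that the empirical measures converge: any limit point of $(e_n^x)_n$ is invariant, hence equals $\mu$, and weak-$*$ compactness forces $e_n^x\to\mu=\delta_{\overline q}$ for every $x$. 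Evaluating at the clopen set $[q]$ gives $e_n^x([q])\to\delta_{\overline q}([q])=1$, that is $d_q(\trac F x)=1$ for every $x$.

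The main obstacle is precisely the identification $\mu=\delta_{\overline q}$ in the converse. A priori a uniquely ergodic system may carry a complicated invariant measure, and unique ergodicity by itself only yields that the time-density of $q$ converges to the constant $\mu([q])$, not to $1$; one genuinely needs the cellular-automaton structure to collapse $\mu$ onto a single uniform configuration. This is supplied by Lemma~\ref{lem:uniqergonilperio} together with the synchronization argument (the attracting cycle has length $k=1$) used in its proof. Everything else --- rereading densities as Birkhoff averages, the dominated-convergence step, and the passage from weak-$*$ convergence to convergence on the clopen cylinder $[q]$ --- is routine.
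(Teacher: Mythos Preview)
Your proof is correct. The forward direction (unique ergodicity $\Rightarrow$ density) follows the paper's line: invoke Lemma~\ref{lem:uniqergonilperio} to identify the unique invariant measure as $\delta_{\overline q}$, then use that weak-$*$ limit points of the empirical measures $e_n^x$ are invariant to force $e_n^x([q])\to 1$. The paper phrases this last step as a contradiction argument (extract a limit point with $\mu([q])<1-\epsilon$), but the content is the same.

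The backward direction (density $\Rightarrow$ unique ergodicity) is where you genuinely diverge from the paper, and your route is more elementary. The paper argues by contradiction: if a second invariant measure existed, pass to an ergodic one via Choquet decomposition, translate it so that $\mu([q])<1$, and then apply the Birkhoff ergodic theorem to produce a point $x$ whose time average of $\mathbf 1_{[q]}$ equals $\mu([q])<1$, contradicting the hypothesis. You instead take an arbitrary invariant $\nu$, integrate the empirical averages $e_n^x(C_{z,q})$ against $\nu$ (which equals $\nu(C_{z,q})$ by invariance), and pass to the limit by dominated convergence to get $\nu(C_{z,q})=1$ for every $z$, hence $\nu=\delta_{\overline q}$. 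This avoids both the ergodic decomposition and the ergodic theorem, at the modest cost of first spreading the density hypothesis from the origin to all cells via the shift. The paper's argument, on the other hand, makes the role of ergodicity explicit and connects the lemma to the standard machinery of \cite{walters1981}.
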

\begin{proof}
  Suppose first that $F$ is uniquely ergodic, then by Lemma~\ref{lem:uniqergonilperio} it is nilpotent on periodic configurations and has a (unique) quiescent state $q$ and its unique invariant measure is $\delta_{\overline{q}}$.
  Consider any ${x\in\confs}$ and any ${\epsilon>0}$.
  If there were infinitely many ${n\in\N}$ such that ${\frac{\#\{0\leq i<n : \trac{F}{x}(i)\neq q\}}{n}>\epsilon}$, then we could extract a limit point $\mu$ from the sequence ${\bigl(\frac{1}{n}\sum_{t=0}^{n-1}F^t(\delta_x)\bigr)_n}$ which would be an invariant measure \cite[Theorem 6.9]{walters1981} such that ${\mu([q])<1-\epsilon}$ (recall that ${F^t(\delta_x)=\delta_{F^t(x)}}$) hence different from $\delta_{\overline{q}}$: a contradiction.
  We deduce that ${d_q(\trac{F}{x})=1}$.

  Suppose now that ${d_q(\trac{F}{x})=1}$ for all ${x\in\confs}$ for some ${q\in Q}$. Then $q$ must be a quiescent state and thus the measure $\delta_{\overline{q}}$ is invariant and also ergodic.
  If it were not the unique invariant measure then there would be another invariant ergodic measure $\mu$ (see \cite[Theorem 6.10]{walters1981}), which would necessarily verify ${\mu([g\mapsto q])<1}$ for some ${g\in\spac}$ in order to differ from ${\delta_{\overline{q}}}$ (otherwise the Choquet's decomposition theorem on invariant measures would be contradicted).
  Note that $\mu$ is not necessarily translation-invariant, but translating it by $g$ gives another $F$-invariant ergodic measure so we can actually suppose ${g=\cspac}$ so ${\mu([q])<1}$.
  Then by the ergodic Theorem (see \cite[Lemma 6.13]{walters1981}), we would have some $x\in\confs$ with 
  \[\frac{1}{n}\sum_{t=0}^{n-1}f\bigl(F^t(x)\bigr)\to_n \mu([q])\]
  where $f$ is the characteristic map of $[q]$. But then ${d_q(\trac{F}{x})<1}$ which contradicts the hypothesis.  
\end{proof}

Uniquely ergodic CA form a strictly larger class than nilpotent CA.
The example in the following result to show this separation is highly non-trivial and we will not present it here, but it is quite remarkable that such behaviors are possible with CA.

\begin{theorem}[\cite{TORMA2015415}]
  \label{theo:nilpernotue}
  There exists a uniquely ergodic CA on $\Z$ which is not nilpotent. 
\end{theorem}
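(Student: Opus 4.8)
The plan is to use the trace characterization of Lemma~\ref{lem:traceuniqergo}: on $\spac=\Z$, building a uniquely ergodic CA amounts to producing a CA $F$ together with a quiescent state $q$ such that $d_q(\trac{F}{x})=1$ for every $x\in\confs$, and then ensuring separately that $F$ is not nilpotent, i.e. $\Omega_F\neq\{\overline{q}\}$. The whole difficulty lies in reconciling these two requirements. Non-nilpotency forces the existence of configurations whose orbit never reaches $\overline{q}$, hence which carry non-quiescent states at every time step; yet the density condition demands that, along the trace of \emph{any} configuration, the origin---and hence, by commutation with translations, every fixed cell---spends an asymptotic fraction $1$ of its time in state $q$. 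So the target behaviour is one where activity never disappears globally but becomes arbitrarily sparse in time at each fixed position.

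First I would design the limit dynamics, postponing the question of arbitrary initial data. The natural device is a \emph{hierarchical, self-similar} space-time structure organized into nested scales $k=1,2,\ldots$: space is partitioned into segments of geometrically growing length, each level-$k$ segment running a bounded clock or counter that produces a non-quiescent event only once every $T_k$ steps with $T_k\to\infty$. Signals that synchronize and rebuild the next scale are emitted at these rare times, so that the level-$k$ activity crossing a given cell occurs with temporal frequency $O(1/T_k)$. Summing the contributions of all scales that can ever touch a fixed cell, the total density of non-$q$ times at that cell is bounded by a quantity tending to $0$, giving $d_q=1$. Persistent valid configurations of this hierarchy, which never become uniform, witness $\Omega_F\neq\{\overline{q}\}$, hence non-nilpotency.

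The second, and genuinely hard, part is \emph{self-stabilization}: the density-$1$ conclusion must hold for \emph{every} $x\in\confs$, including completely unstructured or adversarial configurations. One therefore equips $F$ with error-detection and cleanup rules so that any initial configuration is, in finite time, either erased to $\overline{q}$ or forced to agree with a genuine hierarchical structure. The crucial point to control is that this repair process must not itself generate non-quiescent activity of positive density at any cell: spurious patterns and the fronts that erase them have to be resorbed quickly enough that their overall contribution to the density of non-$q$ states still vanishes. This is exactly where the construction becomes delicate, since a single surviving defect that keeps regenerating activity at a fixed frequency would break unique ergodicity.

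I expect this self-stabilization step to be the main obstacle: it is comparatively routine to build a self-similar rule with sparse activity on \emph{well-formed} configurations and to verify the geometric density estimate there, but ruling out \emph{all} parasitic invariant behaviours---equivalently, showing that the only invariant measure is $\delta_{\overline{q}}$ while keeping a non-trivial limit set---requires a robust, fault-tolerant encoding in the spirit of fixed-point and self-simulating CA constructions. Once both the uniform density estimate over all of $\confs$ and the exhibition of a non-uniform configuration in $\Omega_F$ are in place, Lemma~\ref{lem:traceuniqergo} yields unique ergodicity and the limit-set computation yields non-nilpotency, completing the separation asserted in the theorem.
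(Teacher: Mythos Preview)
Your proposal is a strategic outline rather than a proof, but it correctly identifies the architecture used in \cite{TORMA2015415}: a hierarchical, self-similar CA in which activity becomes sparser at each scale so that every trace has density $1$ of the quiescent state (hence unique ergodicity via Lemma~\ref{lem:traceuniqergo}), while well-formed configurations of the hierarchy witness a non-trivial limit set. The paper itself does not prove this theorem; it only cites the result and notes elsewhere that the key tool is a self-simulating CA inspired by G\'acs \cite{G_cs_2001}, which is precisely the ``fixed-point and self-simulating CA constructions'' you invoke for the self-stabilization step.

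That said, what you have written is a plan, not a proof. You correctly flag the genuine obstacle---ensuring that \emph{every} initial configuration, including adversarial ones, yields $d_q=1$---but you do not carry out any of it: no alphabet, no local rule, no specification of the error-detection and cleanup mechanism, and no argument that repair activity itself has vanishing density at each cell. These are exactly the places where the construction in \cite{TORMA2015415} is long and delicate (dozens of pages), and nothing in your sketch substitutes for them. In particular, the sentence ``any initial configuration is, in finite time, either erased to $\overline{q}$ or forced to agree with a genuine hierarchical structure'' is too strong as stated: erasure to $\overline{q}$ in finite time cannot happen globally (that would be nilpotency), and local agreement with the hierarchy must be achieved while controlling the time-density of all transient signals, which is the entire content of the construction. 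So the approach is right, but the proposal as it stands is a description of the target behaviour and its difficulties, not an argument that such a CA exists.
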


\begin{remark}
  The CA from the above theorem cannot be convergent because then it would be asymptotically nilpotent by Lemma~\ref{lem:traceuniqergo} and this would contradict Theorem~\ref{theo:asnilimpnil}.
  From Lemma~\ref{lem:convergencemeasure} it is also non-convergent when starting from some measures.
  However, there is convergence in Cesaro mean starting from any ${\mu}$ since otherwise, a Cesaro mean sequence would have several limit points and hence produce several invariant measure for the CA as already explained above.
\end{remark}

\begin{theorem}
  There exists a CA on $\Z$ which is nilpotent over periodic configurations but not uniquely ergodic.
\end{theorem}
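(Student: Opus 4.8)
The plan is to construct an explicit CA $F$ on $\confs$ with $\spac=\Z$ and a quiescent state $q_0$, such that (A) every totally periodic configuration is eventually sent to $\overline{q_0}$, and (B) there is a single configuration $x^*$ whose origin cell is non-$q_0$ at a positive density of time steps. Property (A) is exactly nilpotency on periodic configurations. Property (B) refutes unique ergodicity: setting $u_n=\frac{1}{n}\sum_{t=0}^{n-1}F^t(\delta_{x^*})$, property (B) forces $u_n([q_0])$ to stay bounded away from $1$, so any weak-* limit point $\mu$ of $(u_n)_n$ — which is $F$-invariant by the Cesàro characterization of invariant measures recalled above — satisfies $\mu([q_0])<1=\delta_{\overline{q_0}}([q_0])$, exhibiting a second invariant measure. (If $x^*$ is arranged so that its origin is non-$q_0$ at \emph{every} step, then simply $u_n([q_0])=0$ and $\mu([q_0])=0$.)

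Before building $F$ it is worth seeing why (A) severely constrains the witness in (B). If $F$ is nilpotent on periodic configurations then its only fixed point is $\overline{q_0}$ and it has no non-trivial periodic orbit of configurations: the set of fixed (resp. periodic) points is a one-dimensional subshift of finite type, and such a subshift, if it contained any non-$\overline{q_0}$ point, would also contain a non-$\overline{q_0}$ \emph{periodic} point, contradicting (A). Consequently the second invariant measure cannot be atomic on a periodic orbit, so the immortal witness $x^*$ must be genuinely aperiodic; and it cannot be a bounded, spatially localized oscillator either, since tiling such an oscillator periodically would produce a periodic immortal configuration, again contradicting (A). The witness is therefore forced to rely on an \emph{unbounded} structure, and the rule must be unable to detect the forbidden periodicity by any purely local, instantaneous test.

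For the construction I would pin a \emph{reading head} $H$ at the origin and let a tape drift past it at unit speed. The head continually verifies an invariant that no periodic sequence can satisfy — for instance that the incoming tape is a concatenation of blocks of strictly increasing lengths $1,2,3,\dots$, successive lengths being compared by finite-state signals bouncing between consecutive block-markers. Whenever a comparison reports a non-increase — or a circumference-measuring signal returns to its emitter, which can only happen on a configuration that has wrapped around — the head emits a \emph{death wave} of speed $1$ that erases everything to $q_0$; malformed or uniform tapes are swept up by the same decay rules. The witness $x^*$ is the defect-free tape with blocks $1,2,3,\dots$ extending to $+\infty$ and the head at the origin: it is immortal, no comparison ever fails, and the origin carries $H$ at every step, so $d_{q_0}(\trac{F}{x^*})=0$. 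Conversely, a totally periodic configuration has an eventually periodic block sequence, hence necessarily a non-increase (a decrease at the wrap), which is detected and triggers global erosion to $\overline{q_0}$; the detection time grows with the period, which is precisely why $F$ is not nilpotent although it is nilpotent on periodic configurations.

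The main obstacle is the finite-state realization of (A): one must guarantee that \emph{every} totally periodic configuration — not only the clean increasing-block ones, but also uniform tapes and arbitrary malformed junk — is eventually eroded to $\overline{q_0}$, while the single aperiodic witness renews itself forever. The delicate point, highlighted by the rigidity discussion above, is that no non-$q_0$ state may be stable in a uniform context (otherwise its uniform configuration would be an extra periodic fixed point), yet the witness must survive the unboundedly growing gaps between successive comparisons. This is reconciled by making death contingent only on \emph{temporally} detected, non-local defects (a failed length comparison, or a signal completing a loop around a finite cycle) rather than on any fixed-time staleness timer: periodicity is then always eventually caught, while the aperiodic witness, having no such defect, is never caught. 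Verifying the completeness of the decay and death rules over all periodic configurations, and the genuine immortality of $x^*$, is where the real work lies; the passage from (A) to nilpotency on periodic configurations and from (B) to the failure of unique ergodicity, via Lemmas~\ref{lem:uniqergonilperio} and~\ref{lem:traceuniqergo} together with the Cesàro argument above, is then immediate.
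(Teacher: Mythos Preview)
Your high-level strategy and the measure-theoretic conclusion are correct and match the paper: exhibit a CA that is nilpotent on periodic configurations (your (A)) yet admits an orbit along which the origin is non-$q_0$ with positive density (your (B)), then extract a second invariant measure as a Ces\`aro limit. Your structural analysis of what the witness must look like --- genuinely aperiodic, spatially unbounded, not locally certifiable as periodic --- is also right.

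The gap is in the construction. Death in your sketch is triggered only by a \emph{detected} defect (a failed length comparison between block-markers, or a signal meeting a head), and the death wave is emitted by the head. A totally periodic configuration containing \emph{neither head nor markers} --- a uniform non-$q_0$ tape state, or any periodic word over the non-marker tape alphabet --- never launches a comparison and never emits a death wave, so nothing in your description kills it. Your clause ``malformed or uniform tapes are swept up by the same decay rules'' names no rule, and there is a genuine obstruction to naming one: any \emph{local} decay rule that erases head-less, marker-less tape would act identically on the arbitrarily long uniform stretches inside the large blocks of $x^*$, which are locally indistinguishable from such tape; you explicitly reject timed staleness for exactly this reason, but then nothing remains. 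This is not a deferred detail --- forcing every periodic configuration to die while some aperiodic one survives, by a purely local rule with no distinguished cell, is essentially the problem of producing an aperiodic subshift of finite type, and your sketch does not supply one.

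The paper avoids this by \emph{importing} one. It takes a NE-deterministic aperiodic Wang tileset $T$ (Kari) and defines the 1D CA on alphabet $T\cup\{s\}$ with neighborhood $\{0,1\}$: a cell updates to the unique NE-completion of the pair (itself, right neighbor) when the tiling constraints allow, and to the spreading state $s$ otherwise. A periodic orbit avoiding $s$ would yield a doubly periodic valid tiling of $T$, which is impossible, so (A) holds; any valid (aperiodic) tiling supplies a configuration $c$ whose orbit never contains $s$, and any weak-$\ast$ limit of $\frac{1}{n}\sum_{t<n}F^t\delta_c$ is an invariant measure with $\mu([s])=0\neq 1=\delta_{\overline{s}}([s])$, so (B) holds. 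The crucial difference from your sketch is that here the periodicity obstruction is \emph{distributed over every cell} through the tiling constraints rather than concentrated in a head --- which is precisely the missing ingredient in your construction.
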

\begin{proof}[Sketch of proof]
  Such a CA can be obtained using the classical construction of \cite{kari92} using Wang tiles, which formalizes some tiling of the plane $\Z^2$.
  Roughly, a Wang tileset is a finite set of tile types $X$ together with vertical and horizontal constraints specifying which pairs of adjacent tile types is allowed horizontally (resp. vertically).
  A valid tiling is a configuration of ${X^{\Z^2}}$ that satisfies the constraints everywhere.
  A tileset is aperiodic if it admits valid tilings, but no periodic ones.
  A tileset is NE-deterministic when the constraints have the following property: when fixing a tile at the north of a position and another one at the east, then there is at most one possible tile type that can be placed at the north-east without violating the local horizontal and vertical constraints.
  Taking a NE-deterministic aperiodic Wang tileset $T$ (whose existence is proven in \cite{kari92}) one can transform it into a one-dimensional CA $F$ with alphabet ${T\cup\{s\}}$, where $s$ is a spreading state, and neighborhood ${\{0,1\}}$ as follows: if a cell and its right neighbor hold a pair of state from $T$ that represent two tiles (one at the north, one at the east) that can be uniquely completed to $t$ according to the NE-deterministic rule, then the cell turns into state $t$, otherwise it turns into state $s$ (configurations of the CA represent diagonals of a tiling).
  One obtains a CA $F$ that admits ${\delta_{\overline{s}}}$ as invariant measure.
  Because the tileset has no valid periodic tiling, $F$ is nilpotent on periodic configurations (the orbit of a periodic configuration without occurrence of $s$ would give a periodic tiling of $T$, which is impossible, and any occurrence of $s$ in a periodic configuration forces its orbit to converge towards $\overline{s}$). 
  On the other hand, because the tileset admits a valid (aperiodic) tiling, it means that there is a configuration $c$ whose orbit under $F$ has no occurrence of $s$.
  Taking any limit point of the sequence of Cesaro means 
  \[\mu_n=\frac{1}{n}\sum_{t=0}^{n-1}F^t(\delta_c)\]
  gives an invariant measure such that ${\mu([s])=0}$ since ${\mu_n([s])=0}$ for all ${n\in\N}$.
  We thus get a second invariant measure for $F$.
\end{proof}

So far we considered properties concerning all initial configurations, or a meager set of initial configuration.
Natural variants of nilpotency have been considered in the literature to capture the behavior on \emph{most} or \emph{a large set} of initial configurations.
Two notions in particular emerged in literature that use different approach to define precisely \emph{most} or \emph{large set}: one is topological \cite{Milnor_1985,DjenaouiG19}, the other uses measure theory \cite{mulimset}.
Recall that a set in a topological space is comeager (intuitively, large) if it is a countable intersection of sets with dense topological interior (for instance a countable intersection of dense open sets).

\begin{definition}
  Consider a CA $F$ on $\confs$ and ${\mu\in\meas{\confs}}$. Then $F$ is:
  \begin{itemize}
  \item \textit{generically nilpotent} if there is ${q\in Q}$ and a comeager set ${B\subseteq\confs}$ such that ${F^t(x)\to_t\overline{q}}$ for all ${x\in B}$;
  \item \textit{$\mu$-nilpotent} if there is ${q\in Q}$ such that ${F^t(\mu)\to_t\delta_{\overline{q}}}$.
  \end{itemize}
\end{definition}

Generic nilpotency and $\mu$-nilpotency both specify a behavior on \emph{most configurations}, where \emph{most} is to be understood either topologically or according to a  measure.
It turns out that for well-behaved measures the topological version is stronger.

\begin{theorem}[\cite{DjenaouiG19}]\label{theo:genmunil}
  Let $F$ be a generically nilpotent CA on $\confs$ and $\mu\in\meas{\confs}$ a full-support translation-ergodic measure. Then $F$ is $\mu$-nilpotent.
\end{theorem}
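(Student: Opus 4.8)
The plan is to prove the equivalent statement that $F^t(\mu)$ converges weak-$*$ to $\delta_{\overline q}$, where $q$ is the quiescent state attracting the comeager set. Since weak-$*$ convergence is tested on cylinders, and any non-uniform cylinder $[u]$ satisfies $F^t(\mu)([u])\leq\mu(\{x:F^t(x)_g\neq q\})$ for some cell $g$ in its domain, while the uniform-$q$ cylinder is the complement (within its domain) of a finite union of such sets, a union bound reduces everything to one statement per cell; by translation-invariance of $\mu$ together with $\sigma_z\circ F = F\circ\sigma_z$, it reduces to the origin. So the target is $\rho(t):=\mu(\{x:F^t(x)_{\cspac}\neq q\})\to_t 0$. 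Now put $B'=\{x:F^t(x)\to_t\overline q\}$. Because $F$ commutes with translations and $\overline q$ is translation-invariant, $B'$ is a translation-invariant Borel set, and it is comeager since it contains the comeager set supplied by generic nilpotency; hence by translation-ergodicity $\mu(B')\in\{0,1\}$. If $\mu(B')=1$, then for $\mu$-almost every $x$ the indicator $\mathbf 1[F^t(x)_{\cspac}\neq q]$ is eventually $0$, and dominated convergence gives $\rho(t)\to 0$. Thus the whole problem reduces to showing $\mu(B')>0$.

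To manufacture convergence out of positive measure, I would first extract a local \emph{mortality certificate} by a Baire argument. Writing $C=\{x:\exists t_0,\ \forall t\geq t_0,\ F^t(x)_{\cspac}=q\}$ for the set on which the origin stabilizes to $q$, one has $C=\bigcup_{T}W_T$ with $W_T=\bigcap_{t\geq T}F^{-t}([q])$ closed and increasing in $T$. Since $C\supseteq B'$ is comeager it is non-meager, so by the Baire category theorem some $W_{T_0}$ has nonempty interior: there is a pattern $v$ on a ball $\boule m$ with $[v]\subseteq W_{T_0}$, meaning that \emph{seeing $v$ around a cell forces that cell to equal $q$ at all times $\geq T_0$, independently of the rest of the configuration}. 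Full support gives $\mu([v])>0$, and translation-invariance together with the pointwise ergodic theorem for the $\Z^d$-action shows that $\mu$-almost every $x$ contains $v$ at a set of positions of density $\mu([v])>0$, each permanently frozen to $q$ after time $T_0$. This already bounds the density of non-$q$ cells away from $1$ along every orbit.

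The hard part is upgrading this positive density to full measure, i.e. bridging \emph{a positive density of cells freeze to $q$} and \emph{almost every configuration lies in $B'$}. A single certificate is not enough: a cell may converge to $q$ without ever exhibiting $v$, so the naive bound $\limsup_t\rho(t)\leq\mu(\{x:v\text{ never appears at }\cspac\text{ along the orbit}\})$ need not vanish. I would therefore work with the whole family of local mortality certificates and aim to show that $\mu$-almost every configuration eventually freezes at \emph{every} cell. The route I would attempt is to pass to an $F$-invariant measure $\nu$ arising as a weak-$*$ limit of the Cesàro averages $\frac1n\sum_{t<n}F^t(\mu)$, to note that $\nu$ is translation-invariant and supported on the limit set $\bigcap_t\overline{F^t(\confs)}$, and then to combine $F$-invariance (time) with translation-ergodicity (space) via an ergodic-decomposition argument, using the certificate to force any ergodic component charging $\{x:x_{\cspac}\neq q\}$ to contradict either Poincaré recurrence or generic nilpotency. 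The genuine obstacle — and the reason a naive compactness or iteration argument fails — is that $\mu$ is only translation-ergodic and \emph{not} $F$-invariant, so recurrence of the certificate along the temporal orbit is not automatic, and after one freezing step the pushforward measure loses full support, blocking any straightforward bootstrapping. Reconciling the spatial genericity coming from full support with the temporal recurrence needed to freeze every cell is, I expect, the crux of the proof.
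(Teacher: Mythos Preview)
The paper does not prove this theorem: it is stated as a result of \cite{DjenaouiG19} and no argument is given in the tutorial, so there is no ``paper's own proof'' to compare against.

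On its own merits, your reduction is sound: once $\mu(B')=1$ is known, dominated convergence gives $F^t(\mu)\to\delta_{\overline q}$, and by translation-ergodicity of $\mu$ it suffices to show $\mu(B')>0$. The Baire-category extraction of a local mortality certificate $[v]\subseteq W_{T_0}$ is also correct and is a genuine ingredient. But you explicitly stop short of closing the gap between ``$\mu([v])>0$'' and ``$\mu(B')>0$'', and that gap is real. A translation-invariant comeager Borel set can have $\mu$-measure zero for a full-support translation-ergodic $\mu$ (the set of non-normal configurations for the uniform Bernoulli measure is the standard example), so the implication cannot come from abstract descriptive set theory alone; something specific to the CA dynamics must be used, and after obtaining the single certificate your argument has not used it further.

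Your fallback route via Cesàro limits would not finish the job even if it succeeded. Showing that every weak-$*$ accumulation point of $\frac{1}{n}\sum_{t<n}F^t(\mu)$ equals $\delta_{\overline q}$ only yields Cesàro convergence, which is strictly weaker than $F^t(\mu)\to\delta_{\overline q}$; the uniquely ergodic but non-nilpotent example of Theorem~\ref{theo:nilpernotue} shows that this distinction is not cosmetic. What is missing is an upgrade of the certificate argument: for every cylinder $[u]$ with $\overline q\notin[u]$, the set $\limsup_t F^{-t}([u])$ is a meager $G_\delta$, hence nowhere dense, so inside \emph{any} prescribed cylinder one can carve out a sub-cylinder that eventually avoids $F^{-t}([u])$; combining this density of certificates with full support and compactness is how one drives $\mu(F^{-t}([u]))$ to zero. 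This is essentially the mechanism developed in the cited reference through the formalism of generic limit sets, and it is precisely the step your proposal identifies as ``the crux'' but does not carry out.
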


\begin{figure}
  \label{fig:nildiag}
  \centering
  \begin{tikzpicture}[scale=.7,node distance=1cm,>=stealth]
    \tikzstyle{nilprop}=[draw, rectangle]
    \fill[fill=gray,path fading=east] (-2,1.25) rectangle (3,1.3);
    \fill[fill=gray,path fading=east] (-2,-1.25) rectangle (3,-1.3);
    \draw (0,2.5) node {\small\textit{\begin{tabular}{c}Dense subset of\\ configurations
                                \end{tabular}
                              }};
    \draw (0,0) node {\small\textit{\begin{tabular}{c}All\\ configurations
                              \end{tabular}
                            }};
    \draw (0,-2.5) node {\small\textit{\begin{tabular}{c}'Most'\\ configurations
                                 \end{tabular}
                               }};
    \draw (12.5,2.5) node[nilprop] {\small\begin{tabular}{c}nilpotent over\\ periodic\\ configurations
                           \end{tabular}
                         };
    \draw (8,2.5) node[nilprop] {\small\begin{tabular}{c}nilpotent over\\ finite\\ configurations
                           \end{tabular}
                         };
    \draw[->,very thick] (7.75,.75) -- (7.75,1.55);
    \draw[->,very thick] (8.25,1.55) -- node[thin,cross out,draw=black,minimum size=2mm] {} (8.25,.75);
    \draw[->,very thick] (9.85,2.5) -- node[thin,cross out,draw=black,minimum size=2mm] {} (10.65,2.5);
    \draw[->,very thick,dotted] (12.25,.75) -- node[midway] (UEnilper) {} (12.25,1.55);
    \draw[->,very thick] (12.75,1.55) -- node[pos=0.4] (Nilpernoteu) {} (12.75,.75);
    \draw (UEnilper)++(-.3,0) node[left] {\tiny\it Lem.~\ref{lem:uniqergonilperio}};
    \draw (Nilpernoteu) node[cross out,draw=black,minimum size=2mm] {};
    \draw (Nilpernoteu)++(.3,0) node[right] {\tiny\it Thm.~\ref{theo:nilpernotue}};
    \draw (3,0) node[nilprop] {nilpotent};
    \draw[->,very thick] (4.3,.25) -- (6,.25);
    \draw[->,very thick,dotted] (6,-.25) -- node[midway] (Nil) {} (4.3,-.25);
    \draw (Nil)++(0,-.3) node {\tiny\it Thm.~\ref{theo:asnilimpnil}};
    \draw[->,very thick] (11,-.25) -- node[midway] (UEnotnil) {} (10,-.25);
    \draw[->,very thick] (10,.25) -- node[midway,above] {\tiny\it Lem.~\ref{lem:asynilmeasure}} (11,.25);
    \draw (UEnotnil) node[cross out,draw=black,minimum size=2mm] {};
    \draw (UEnotnil)++(0,-.3) node {\tiny\it Thm.~\ref{theo:nilpernotue}};
    \draw (8,0) node[nilprop] {\small\begin{tabular}{c}asymptotically\\ nilpotent
                                 \end{tabular}
                               };
    \draw (12.5,0) node[nilprop] {\small\begin{tabular}{c}uniquely\\ ergodic
                                  \end{tabular}
                                };
    \draw (8,-2.5) node[nilprop] {\small\begin{tabular}{c}generically\\ nilpotent
                                    \end{tabular}
                                  };
    \draw (12.5,-2.5) node[nilprop] {\small $\mu$-nilpotent};
    \draw[->,very thick] (7.75,-0.75) -- (7.75,-1.55);
    \draw[->,very thick] (8.25,-1.55) -- node[pos=0.4] (Gennoasnil) {} (8.25,-0.75);
    \draw (Gennoasnil) node[cross out,draw=black,minimum size=2mm] {};
    \draw[->,very thick,dotted] (10,-2.25) -- node[midway] (Genmunil) {} (11,-2.25);
    \draw (Genmunil)++(0,.3) node {\tiny\it Thm.~\ref{theo:genmunil}};
    \draw[->,very thick] (11,-2.75) -- node[midway] (Munotgen) {} (10,-2.75);
    \draw (Munotgen) node[cross out,draw=black,minimum size=2mm] {};
    \draw (Munotgen)++(0,-.3) node {\tiny\it Thm.~\ref{theo:classicbootstrap}};
  \end{tikzpicture}
  \caption{Nilpotency and its variants with (non-)implications between them as seen so far. Implications with dotted arrows refer to Theorems that require some hypothesis on either $\spac$ or $\mu$. Counter-examples (or non-implications) are represented with a cross on the arrow. }
  \label{fig:nilimp}
\end{figure}
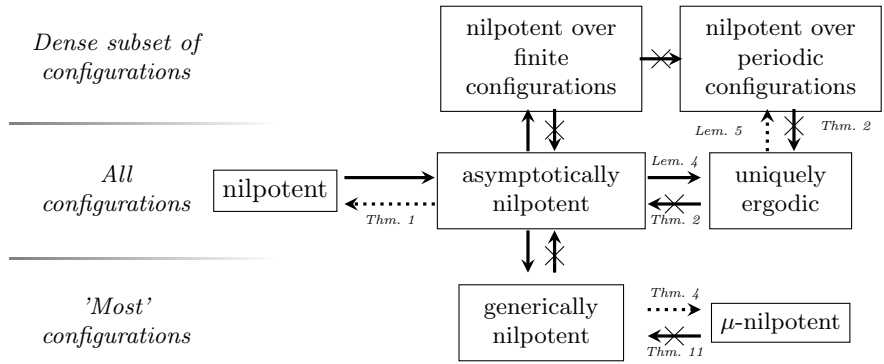

There are ways to characterize the previous variants of nilpotency by defining variants of limit sets, \textit{i.e.} sets of (typical) asymptotic configurations \cite{mulimset,DjenaouiG19}.
Each time, the nilpotency variant is equivalent to having such asymptotic sets being singletons (see \cite{torma2021generically,Boyer_2015}).

The implication diagram of Figure~\ref{fig:nilimp} is not complete, in particular concerning the link between unique ergodicity and $\mu$-nilpotency.

\begin{question}
  What are the other implications or counter-examples in the diagram of Figure~\ref{fig:nilimp}?
  In particular, for what measure $\mu$ does unique ergodicity imply $\mu$-nilpotency? 
  Said differently, for what measure $\mu$ the convergence in Cesaro mean implied by unique ergodicity is actually a simple convergence?
\end{question}

\subsection{The undecidability garden}

We conclude this section by undecidability results related to the notions presented above.
Of course, an undecidability result is a negative result and this might appear repulsive to some.
However, behind almost any undecidability result there is a positive one: a construction technique (a reduction) providing examples with rich and complex behaviors.
Each of the results below relies on a non-trivial and nice construction idea that is of independent interest.
A presentation of each technique would take too much space for this tutorial, but we invite the reader to dig into the corresponding references.

The following theorem follows from the undecidability of the domino problem on deterministic sets of Wang tiles. Several proof techniques are known and we suggest to read \cite[Chapter 6]{bookchairmorlet2020} for a survey.

\begin{theorem}[\cite{kari92,aanderaa_lewis_1974}]
  The set of 1D nilpotent CA is ${\Sigma_1^0}$-complete.
\end{theorem}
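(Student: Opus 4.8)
The plan is to prove the two halves of the completeness statement separately: first that nilpotency of a 1D CA is a $\Sigma_1^0$ property, and then that it is $\Sigma_1^0$-hard by a many-one reduction from the complement of the domino problem for deterministic Wang tilesets.

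For membership in $\Sigma_1^0$, I would use the fact recalled above that nilpotency is equivalent to $\Omega_F$ being a singleton, or equivalently to the existence of some $t_0$ and a (necessarily quiescent) state $q$ with $F^{t_0}$ constant equal to $\overline{q}$. The key point is that, for each fixed $t_0$, this condition is \emph{decidable}: since $F$ has some radius $r$, the value $F^{t_0}(c)_{\cspac}$ depends only on the restriction of $c$ to the finite window $\boule{r t_0}$, so it suffices to compute $F^{t_0}$ at the central cell on each of the finitely many patterns of domain $\boule{r t_0}$ and test whether all outputs equal a single state $q$. By translation invariance this certifies $F^{t_0}(c)=\overline{q}$ for every $c$. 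The semi-algorithm therefore enumerates $t_0=1,2,\dots$, runs this finite test for each, and halts exactly when $F$ is nilpotent; hence the set of nilpotent 1D CA is recursively enumerable.

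For hardness, I would reuse the tileset-to-CA construction recalled earlier, now applied to an arbitrary (not necessarily aperiodic) NE-deterministic tileset $T$: build the radius-$1$ CA $F_T$ on alphabet $T\cup\{s\}$, where $s$ is a spreading state, and where a cell computes the unique NE-completion of the pair formed by itself and its right neighbor, producing $s$ whenever no legal completion exists or a relevant neighbor already carries $s$. Space-time diagrams of $F_T$ that avoid $s$ are exactly the valid tilings by $T$ read diagonally, and the crux is the equivalence
\[
F_T \text{ is nilpotent} \iff T \text{ does not tile the plane.}
\]
If $T$ tiles the plane, a diagonal of such a tiling gives a configuration whose entire orbit stays in $T$-states and never reaches $\overline{s}$, so $F_T$ is not nilpotent. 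Conversely, if $T$ does not tile the plane then, by a compactness (König) argument, there is a uniform $N$ such that $T$ cannot even tile the $N\times N$ square; consequently, for $t$ beyond a bound depending only on $N$ and $r$, the dependency cone below any cell cannot be free of $s$ (otherwise it would encode a legal tiling of a region containing such a square), so $F_T^{t_0}$ is the constant map onto $\overline{s}$ for a uniform $t_0$, i.e. $F_T$ is nilpotent. Since $T\mapsto F_T$ is computable, this transfers hardness to nilpotency.

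The main obstacle lies entirely in the deep input of the reduction, namely the undecidability of the domino problem \emph{restricted to deterministic tilesets}, due to \cite{kari92}: the set of deterministic tilesets that tile the plane is $\Pi_1^0$-complete, so its complement is $\Sigma_1^0$-complete, and without the determinism hypothesis a tiling could not be turned into a deterministic CA dynamics at all. The only other delicate point on the CA side is upgrading ``every orbit eventually produces $s$ everywhere'' to \emph{uniform} nilpotency (a single $t_0$ valid for all configurations); this is exactly where compactness of the configuration space and finiteness of the obstruction $N$ are used, and it is the same phenomenon that underlies Theorem~\ref{theo:asnilimpnil}.
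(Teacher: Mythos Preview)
Your argument is correct and follows exactly the route the paper indicates: the paper gives no proof here but simply points to the undecidability of the domino problem for deterministic Wang tilesets, and the tileset-to-CA encoding you use is precisely the one the paper spells out in the proof sketch of the theorem separating nilpotency on periodic configurations from unique ergodicity. One small overstatement to fix: the passage from ``every orbit eventually reaches $\overline{s}$'' to a uniform $t_0$ is here an immediate consequence of the finite obstruction size $N$ coming from K\"onig's lemma (the bound $t_0$ is read off directly from the geometry of the triangular dependency region), and does not involve anything like the machinery of Theorem~\ref{theo:asnilimpnil}, which addresses the genuinely harder situation where no such finite combinatorial obstruction is available a priori.
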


As for the existence of a uniquely ergodic CA which is not nilpotent (Theorem~\ref{theo:nilpernotue}), the key tool in the following result is a self-simulating CA inspired from \cite{G_cs_2001} in which the only orbits than can survive produce sparser and sparser computations.

\begin{theorem}[\cite{TORMA2015415}]
  The set of 1D uniquely ergodic CA is $\Pi_2^0$-complete.
\end{theorem}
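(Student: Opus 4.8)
The plan is to prove the two matching bounds separately: membership in $\Pi_2^0$, obtained from the trace characterization of unique ergodicity (Lemma~\ref{lem:traceuniqergo}) together with a subadditivity argument, and $\Pi_2^0$-hardness, obtained by a computable reduction from the (known $\Pi_2^0$-complete) set of Turing machines that halt on all inputs, using the self-simulating construction underlying Theorem~\ref{theo:nilpernotue}.

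For the upper bound, I would fix a CA $F$ of radius $r$ and, for each $q\in Q$, set
\[
a_n^{(q)} = \max_{x\in\confs}\ \#\bigl\{0\le t<n : F^t(x)_{\cspac}\neq q\bigr\}.
\]
Since $F^t(x)_{\cspac}$ depends only on $\restr{x}{\boule{rt}}$, the value $a_n^{(q)}$ is the maximum over the finitely many patterns on $\boule{r(n-1)}$ of a quantity obtained by direct simulation, so $(n,q)\mapsto a_n^{(q)}$ is computable. Splitting a length $n+m$ trace prefix of $x$ into a length $n$ prefix of $x$ followed by a length $m$ prefix of $F^n(x)$ gives $a_{n+m}^{(q)}\le a_n^{(q)}+a_m^{(q)}$, so by Fekete's lemma $a_n^{(q)}/n$ converges to $\inf_n a_n^{(q)}/n$. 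I then claim that $F$ is uniquely ergodic if and only if $\inf_n a_n^{(q)}/n = 0$ for some $q$. If this infimum is $0$ then $a_n^{(q)}/n\to 0$, hence $d_q(\trac{F}{x})=1$ for every $x$ and $F$ is uniquely ergodic by Lemma~\ref{lem:traceuniqergo}; conversely, a uniquely ergodic $F$ has $\delta_{\overline{q}}$ as unique invariant measure for the quiescent state $q$ of Lemma~\ref{lem:uniqergonilperio}, and the classical characterization of unique ergodicity by \emph{uniform} convergence of Birkhoff averages \cite[Theorem~6.19]{walters1981}, applied to the continuous indicator of $[q]$, forces $a_n^{(q)}/n\to 0$. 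The resulting condition $\exists q\,\forall k\,\exists n\,\bigl(a_n^{(q)}/n<1/k\bigr)$ has a decidable matrix, and the bounded alphabet quantifier is a finite disjunction, so it is a $\Pi_2^0$ formula and membership follows.

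For hardness I would build, from a Turing machine $M$, a CA $F_M$ with a spreading quiescent state $q$ such that $M$ halts on all inputs if and only if $F_M$ is uniquely ergodic, the map $M\mapsto F_M$ being computable. The gadget is the hierarchical self-simulating CA of Theorem~\ref{theo:nilpernotue}, inspired by \cite{G_cs_2001}: the only orbits that escape being wiped to $\overline{q}$ carry an infinite tower of nested simulations living on sparser and sparser space-time scales. I would wire the simulation at each level to run $M$ on the successive inputs, arranging that a level may spawn the next (sparser) one only once its own run of $M$ has halted and its debris has been erased. If $M$ is total, every run halts, the density of non-$q$ cells contributed by level $k$ is summable and tends to $0$ uniformly in the initial configuration, giving $a_n^{(q)}/n\to 0$ and hence unique ergodicity. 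If $M$ diverges on some input $w$, then the configuration forcing the level responsible for $w$ to run $M(w)$ yields an orbit on which that level never terminates, so a positive asymptotic density of activity survives, $d_q(\trac{F_M}{x})<1$, and $F_M$ is not uniquely ergodic by Lemma~\ref{lem:traceuniqergo}. Combined with the upper bound this gives $\Pi_2^0$-completeness.

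The main obstacle lies entirely in the hardness construction: one must realize the Gács-style fault-tolerant self-simulation so that the clean-up of each level's debris is simultaneously robust (a genuinely non-halting computation can never be mistaken for halted and erased) and efficient (when every computation does halt, the surviving activity is sparse with density tending to $0$ uniformly). Making this dichotomy coincide exactly with the ``for every input there is a halting time'' structure of totality, while keeping everything encodable in a single finite local rule, is the technical heart, and I would import it essentially verbatim from \cite{TORMA2015415}. The upper bound, by contrast, is self-contained given Lemma~\ref{lem:traceuniqergo} and Fekete's lemma.
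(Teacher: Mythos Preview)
Your proposal is correct and aligns with the paper's own treatment: the paper does not prove this theorem but cites \cite{TORMA2015415} and only remarks that the key tool is a self-simulating CA inspired by \cite{G_cs_2001}, which is exactly the hardness mechanism you describe. Your upper-bound argument via subadditivity of $a_n^{(q)}$ and the uniform Birkhoff convergence characterization of unique ergodicity is clean and more explicit than anything in the paper; the hardness sketch correctly identifies both the reduction source (totality) and the technical crux (robust yet efficient clean-up in the Gács-style hierarchy), and appropriately defers the heavy construction to the cited source, just as the paper does.
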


The two following results use a construction technique introduced in \cite{Delacourt_2011} which has proven to be useful when one wants to control the behavior of a CA on most initial configuration: it relies on a special precursor state that generates well-behaved zones that grow in a random context until they meet another well-behaved zone.
Each zone can hold a properly initialized Turing computation.
Then, a global process of merging or destruction of meeting zones is used to ensure that a least some Turing computations can go on forever.

\begin{theorem}[\cite{torma2021generically}]
  The set of 1D generically nilpotent CA is $\Sigma_2^0$-complete.
\end{theorem}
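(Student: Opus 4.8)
The plan is to prove both a membership ($\Sigma_2^0$ upper bound) and a hardness ($\Sigma_2^0$-hardness) statement, the latter being the real content. For hardness I reduce from the $\Sigma_2^0$-complete problem of \emph{non-totality}, $\mathrm{NT}=\{M : \exists i,\ M \text{ does not halt on input } i\}$ (the complement of the $\Pi_2^0$-complete totality problem recalled in Section~\ref{sec:definitions}). To each machine $M$ I associate, computably, a 1D CA $F_M$ with a distinguished quiescent state $q$ so that $F_M$ is generically nilpotent if and only if $M\in\mathrm{NT}$.

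Before the construction, I dispose of the upper bound, which is the lighter direction. Fix a candidate quiescent state $q$, and write $L_z=\{x : \exists T\ \forall t\geq T,\ F^t(x)_z=q\}$ for the set of configurations whose cell $z$ is eventually frozen to $q$. Generic nilpotency towards $\overline{q}$ means exactly that $C_q=\bigcap_{z}L_z$ is comeager. Since $F$ commutes with the shift, each $L_z$ is the $\sigma_z$-preimage of $L_0$, so all the $L_z$ are comeager simultaneously; as a countable intersection of comeager sets is comeager, $C_q$ is comeager if and only if the single set $L_0$ is comeager. This reduces the whole question to the comeagerness of one set $L_0=\bigcup_T\{x:\forall t\geq T,\ F^t(x)_0=q\}$, and a careful count of quantifiers over finite patterns and times (using that $F^t(x)_0$ depends on a bounded window, together with the topological zero-one law for the shift-invariant set $C_q$) places ``$L_0$ comeager for some $q\in Q$'' in $\Sigma_2^0$; this is exactly the level of detail carried out in \cite{torma2021generically}.

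For hardness, I use the precursor-and-zones machinery of \cite{Delacourt_2011}. In a topologically generic configuration, rare precursor states nucleate well-behaved computation zones that grow into the $q$-background; colliding zones are resolved by a merging/destruction rule engineered so that, generically, every cell is eventually governed by a single surviving \emph{dominant} zone, and so that at least one zone survives forever. Inside a zone I run a \emph{dovetailing search for a divergent input}: the zone keeps a non-decreasing candidate $j$ (starting at $0$) and simulates $M(j)$, incrementing $j$ and restarting whenever $M(j)$ halts. The decisive design choice is how a zone is made visible to the generic limit set: a zone that is still \emph{incrementing} recurrently emits non-$q$ activity at its cells, whereas a zone that becomes permanently \emph{stuck} on one candidate $j$ (because $M(j)$ diverges) lays out its never-halting computation so sparsely that each of its cells is eventually vacated to $q$ --- the ``sparser and sparser computation'' idea already used for Theorem~\ref{theo:nilpernotue}. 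Collisions are resolved in favour of the zone with the smaller current candidate, so that stuck zones dominate incrementing ones.

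The correctness dichotomy is then as follows. If $M\in\mathrm{NT}$, let $i^\ast$ be the least divergent input: a dominant zone increments past every halting input below $i^\ast$ and then freezes at $i^\ast$ forever, so it becomes stuck and sparse, whence generically every cell is eventually $q$ and $F_M$ is generically nilpotent. If instead $M$ is total, no zone ever gets stuck, every dominant zone increments its candidate without bound and keeps producing non-$q$ activity at generic cells, so $C_q$ is not comeager and $F_M$ is not generically nilpotent. I expect the main obstacle to lie entirely in this construction: making ``generically a single surviving zone governs each cell'' rigorous in the \emph{topological} (comeager, not measure-theoretic) sense through the collision and merging process, and proving that the stuck-sparse regime genuinely forces local convergence to $q$ while the incrementing regime genuinely forbids it, both at the level of the generic limit set.
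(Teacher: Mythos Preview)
The paper does not give its own proof of this theorem: it cites \cite{torma2021generically} and only sketches, in the paragraph preceding it, the Delacourt precursor-and-zones technique (special precursor state, growing well-behaved zones, Turing computations inside zones, merging/destruction at collisions). Your proposal is consistent with that sketch and in fact supplies considerably more detail than the tutorial does: the choice of non-totality as the $\Sigma_2^0$-complete source problem and the dovetailing search for a divergent input are exactly the right shape of reduction.

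There is, however, one genuine mismatch in your hardness sketch. You write that a stuck zone ``lays out its never-halting computation so sparsely that each of its cells is eventually vacated to $q$ --- the `sparser and sparser computation' idea already used for Theorem~\ref{theo:nilpernotue}''. But the construction behind Theorem~\ref{theo:nilpernotue} achieves \emph{density} of $q$ tending to~$1$ (hence unique ergodicity via Lemma~\ref{lem:traceuniqergo}), not pointwise convergence; indeed, as the remark following Theorem~\ref{theo:nilpernotue} explains, that CA is \emph{not} convergent, so some cells change state infinitely often. Generic nilpotency requires the stronger condition that, on a comeager set, every cell is eventually frozen to $q$. A density argument does not deliver this. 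What you actually need for the stuck regime is that the active region of the non-halting simulation \emph{escapes to infinity} and never returns to any fixed cell --- for instance by writing successive TM configurations further and further to the right inside a zone that is allowed to grow without bound through merges. This is achievable with the zone machinery, but it is a different mechanism from the self-similar sparsification of \cite{TORMA2015415}, and conflating the two would lead you to the wrong construction. You correctly flag the collision/merging analysis as the main technical obstacle; making the stuck zone genuinely pointwise-convergent (rather than merely sparse) is the other one.
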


Recall from Theorem~\ref{theo:genmunil} that a generically nilpotent CA is necessarily $\mu$-nilpotent for $\mu$ the uniform Bernoulli measure (which is ergodic and has full support).
The converse is false and the following theorem illustrates how the weaker property of $\mu$-nilpotence is strictly harder to decide.

\begin{theorem}[\cite{Boyer_2015}]
  Let $\mu$ denote the uniform Bernoulli measure. The set of 1D $\mu$-nilpotent CA is $\Pi_3^0$-complete.
\end{theorem}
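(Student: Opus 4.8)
The plan is to prove the two directions of completeness separately. For membership in $\Pi_3^0$, I would first reduce the weak-$*$ convergence $F^t(\mu)\to_t\delta_{\overline{q}}$ to a statement about a single marginal. Since the uniform Bernoulli measure $\mu$ is translation invariant and $F$ commutes with the shift, each $F^t(\mu)$ is translation invariant; hence by Boole's inequality, for any finite $D\subseteq\spac$ one has $F^t(\mu)\bigl(\{c:\restr{c}{D}\neq\overline{q}\}\bigr)\leq \#D\cdot\bigl(1-F^t(\mu)([q])\bigr)$, so $F^t(\mu)\to_t\delta_{\overline{q}}$ holds if and only if the single number $a_t^{(q)}:=F^t(\mu)([q])$ tends to $1$. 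Next I would observe that $a_t^{(q)}$ is a \emph{uniformly computable rational}: the event ``cell $\cspac$ is in state $q$ at time $t$'' depends only on the cells in $\boule{rt}$, with $r$ the radius of $F$, so $a_t^{(q)}$ is the number of patterns on $\boule{rt}$ mapped to $q$ after $t$ steps divided by $\#Q^{\#\boule{rt}}$, a quantity a Turing machine computes from $(F,q,t)$. Then $\mu$-nilpotency unfolds as
\[
\exists q\in Q\;\forall k\;\exists T\;\forall t\geq T:\ a_t^{(q)}>1-\tfrac1k,
\]
whose matrix is decidable; the inner three quantifier blocks give a $\Pi_3^0$ predicate, and the bounded $\exists q$ (a finite disjunction) keeps the whole formula in $\Pi_3^0$.

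For hardness, I would reduce from the $\Pi_3^0$-complete problem $\mathrm{INF}=\{M:\ \{n:\text{$M$ does not halt on $n$}\}\text{ is infinite}\}$, building a computable map $M\mapsto F_M$ with a distinguished quiescent state $q$ such that $F_M$ is $\mu$-nilpotent if and only if $M\in\mathrm{INF}$. The construction would rely on the precursor-and-zone technique of \cite{Delacourt_2011}: a rare precursor state, which appears with positive density in a $\mu$-random configuration, nucleates a cleanly initialized \emph{computation zone} that grows in the random background, while colliding zones are merged or destroyed by a fixed priority rule so that only a controlled, sparse population of live zones survives. Zones are organized by \emph{level}, a level-$n$ zone devoting its growing space and time budget to simulating $M$ on inputs around $n$; the signal logic is arranged so that a computation that \emph{never halts} behaves as a perpetual cleaner that drives the local density of non-$q$ states down by a margin that improves with the level, whereas a halting computation stops cleaning and leaves a residual density. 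This produces a correspondence between the $\forall k\,\exists T\,\forall t$ shape of $\mu$-nilpotency and the $\forall N\,\exists n\,\forall s$ shape of $\mathrm{INF}$: infinitely many non-halting inputs, occurring at unbounded levels, force $a_t^{(q)}\to_t 1$, while only finitely many leave a positive density floor so that $F_M$ is not $\mu$-nilpotent.

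The hard part will be the quantitative control of the marginal $a_t^{(q)}$ in the hardness direction, and this is where the technique of \cite{Delacourt_2011} does the real work. One must show that the single-cell probability of a non-$q$ state genuinely converges to $0$ \emph{exactly} in the $\mathrm{INF}$ case, which is delicate precisely because the halting and non-halting input sets may both be infinite: the perpetual cleaners associated with higher and higher non-halting levels must dominate the residual density left by the (possibly infinitely many) halting computations, driving the density below every threshold $1/k$. This forces the construction to be robust against the random garbage of the Bernoulli background—ill-formed zones arising from atypical patterns must self-erase, and proper initialization must occur with positive probability at every level—and it forces the collision/merging discipline to guarantee that stronger (higher-level) cleaning always prevails. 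Verifying these density estimates, rather than the mere existence of the signals, is the crux of the argument.
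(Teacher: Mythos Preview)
The paper does not give its own proof of this theorem; it cites \cite{Boyer_2015} and merely notes that the construction belongs to the family introduced in \cite{Delacourt_2011}. So there is no paper-proof to compare against beyond that single sentence.

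Your membership argument is correct and complete: the reduction to a single marginal via translation invariance of $F^t(\mu)$ and Boole's inequality is clean, the computability of $a_t^{(q)}$ is straightforward, and the resulting quantifier shape $\exists q\,\forall k\,\exists T\,\forall t$ with bounded leading existential is indeed $\Pi_3^0$.

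Your hardness sketch identifies the right technique (confirmed by the paper's remark that the result uses the precursor/zone construction of \cite{Delacourt_2011}) and the right $\Pi_3^0$-complete source problem (the paper itself lists ``the set of inputs on which the machine doesn't halt is infinite'' as the canonical $\Pi_3^0$-complete problem). The level-indexed zones, with non-halting inputs acting as perpetual cleaners and halting inputs leaving residual density, is an accurate high-level picture of how the construction in \cite{Boyer_2015} works, and you correctly flag that the crux is the quantitative density control in the random background rather than the signal logic per se. As a proposal this is sound; what remains is exactly the delicate part you name --- proving that higher-level cleaners dominate all lower-level residues and that ill-formed regions self-erase with controlled probability --- and that genuinely requires the full machinery of the cited references rather than a short argument.
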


\section{Proving Convergence}
\label{sec:proveconv}

\subsection{Bounded change CA}

Proving convergence of a CA $F$ means proving that there exists a time bound for each trace ${\trac{F}{c}}$ after which the trace is constant.
This has been defined above as the freezing time ${\freezt{F}{c}{0}}$.
In general this time bound depends on the initial configuration (see Example~\ref{exa:zigzag}), but, even if there is a global bound that doesn't depend on initial configuration like in nilpotent CA, it is generally uncomputable.
Instead of bounding directly the time of the last change in traces, a way to tackle the problem is to bound the number of changes that can occur in a trace.
The following definition is useful \cite{vollmar81,KutribM10a}.

\begin{definition}\label{def:boundedchange}
  A CA $F$ is $k$-change for some ${k\in\N}$ if for any ${x\in\confs}$, there are at most $k$ state changes in ${\trac{F}{x}}$, \textit{i.e.} 
  \[\#\{t\in\N : \trac{F}{x}(t+1)\neq\trac{F}{x}(t)\}\leq k.\]
  A CA is \emph{bounded-change} if it is $k$-change for some $k$.
\end{definition}

A bounded-change CA is necessarily convergent: a trace which is not ultimately constant cannot exist since it would contain infinitely many changes.
However the converse is false as shown by the following example (it is used extensively as a basic building block in \cite{ollinger:hal-02266916}).

\renewcommand\stateZ[2]{\draw[fill=white] (#1,#2) rectangle +(1,1); \draw (#1,#2)+(.5,.5) node {\tiny $B$};}
\renewcommand\stateO[2]{\draw[fill=white] (#1,#2) rectangle +(1,1); \draw (#1,#2)+(.5,.5) node {\tiny $B$};}
\renewcommand\stateZO[2]{\draw[fill=white!50!gray] (#1,#2) rectangle +(1,1);\draw (#1,#2)+(.5,.5) node {\tiny $\to$};}
\newcommand\stateOO[2]{\draw[fill=gray] (#1,#2) rectangle +(1,1);\draw (#1,#2)+(.5,.5) node {\tiny $\leftarrow$};}
\newcommand\stateZZO[2]{\draw[fill=white!50!red] (#1,#2) rectangle +(1,1); \draw (#1,#2)+(.5,.5) node {\tiny $<$};}
\newcommand\stateOZO[2]{\draw[fill=white!50!blue] (#1,#2) rectangle +(1,1); \draw (#1,#2)+(.5,.5) node {\tiny $>$};}
\newcommand\stateZOO[2]{\draw[fill=white] (#1,#2) rectangle +(1,1); \draw (#1,#2)+(.5,.5) node {\tiny $B$};}
\newcommand\stateOOO[2]{\draw[fill=white] (#1,#2) rectangle +(1,1); \draw (#1,#2)+(.5,.5) node {\tiny $B$};}

\newcommand\etatblanc{\tikz[baseline,scale=.3]{\stateZ{0}{0}}}
\newcommand\etatagauche{\tikz[scale=.3]{\stateZO{0}{0}}}
\newcommand\etatadroite{\tikz[scale=.3]{\stateOO{0}{0}}}
\newcommand\etattetedroite{\tikz[baseline,scale=.3]{\stateOZO{0}{0}}}
\newcommand\etattetegauche{\tikz[baseline,scale=.3]{\stateZZO{0}{0}}}

\begin{example}\label{exa:zigzag}
  Let us consider the state set ${Q=\{\etatblanc,\etatagauche,\etatadroite,\etattetegauche,\etattetedroite,e\}}$ where $\etattetegauche$ and $\etattetedroite$ represent an active head moving left or right respectively, $\etatagauche$ and $\etatadroite$ are states indicating the position of the head (to the right and to the left respectively), $\etatblanc$ is a blank state and $e$ is an error state. Let $\Sigma$ be the set of configurations whose patterns of length two all appear in the space-time diagram of Figure~\ref{fig:convnotbc}, \textit{i.e.} configurations made of zones with a unique active head (possibly none if the zone is infinite) in a blank background. Define the CA ${F:Q^\Z\to Q^\Z}$ of radius $2$, such that:
  \begin{itemize}
  \item $e$ is a spreading state and any cell turns into state $e$ if there is some pattern not in $L_2$ in its neighborhood,
  \item in the $\Sigma$-valid zones with a unique head, the only cells that change their states in one step are the one around the head, and they update in such a way that the head makes zigzag inside the zone and reduces the size of the zone by one unit at each bounce on a border. Precisely, all the transitions appear in the space-time diagram of Figure~\ref{fig:convnotbc}.
  \end{itemize}
  First, it is clear that $F$ cannot be $k$-change whatever the value of $k$ is, because in a $\Sigma$-valid finite zone of size $n$, the orbit of the central cells contains at least $n$ changes.
  On the other hand, if we consider any initial configuration ${c\in Q^\Z}$ and any position ${z\in\Z}$ there are only $3$ possible cases:
  \begin{itemize}
  \item either ${c\not\in\Sigma}$ and in this case the spreading state $e$ will appear somewhere and reach position $z$ at some step,
  \item or ${c\in\Sigma}$ and $z$ belongs to a finite valid zone, so after some number of zigzags of the head, the zone will shrink letting $z$ outside in state $\etatblanc$,
  \item or ${c\in\Sigma}$ and $z$ belongs to an infinite valid zone, and after at most two passages of the head (at most one bounce on the boundary if any), the head will move towards infinity and cell $z$ will remain unchanged forever.
  \end{itemize}
  In any case, we have that the trace at $z$ starting from $c$ is eventually constant. Hence $F$ is convergent.
\end{example}

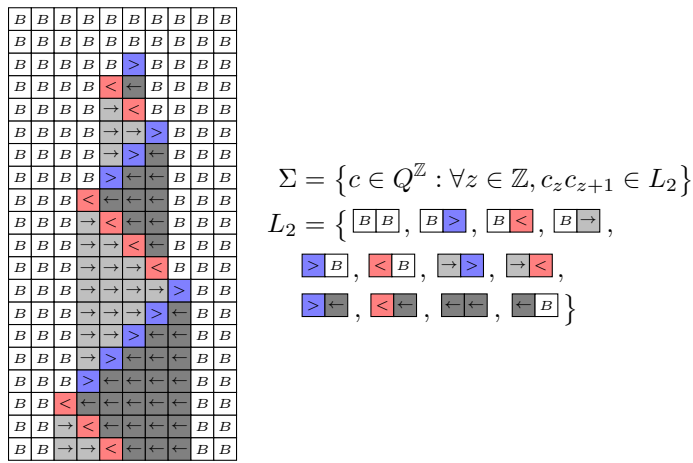
\begin{figure}
  \begin{center}
    \begin{tikzpicture}[scale=.3,baseline=(current bounding box.center)]
      \stateZ{5}{0}\stateZ{5}{1}\stateZ{5}{2}\stateZ{5}{3}\stateZ{5}{4}\stateZ{5}{5}\stateZ{5}{6}\stateZ{5}{7}\stateZ{5}{8}\stateZ{5}{9}\stateZ{5}{10}\stateZ{5}{11}\stateZ{5}{12}\stateZ{5}{13}\stateZ{5}{14}\stateZ{5}{15}\stateZ{5}{16}\stateZ{5}{17}\stateZ{5}{18}\stateZ{5}{19}
      \stateO{6}{0}\stateO{6}{1}\stateO{6}{2}\stateZ{6}{3}\stateZ{6}{4}\stateZ{6}{5}\stateZ{6}{6}\stateZ{6}{7}\stateZ{6}{8}\stateZ{6}{9}\stateZ{6}{10}\stateZ{6}{11}\stateZ{6}{12}\stateZ{6}{13}\stateZ{6}{14}\stateZ{6}{15}\stateZ{6}{16}\stateZ{6}{17}\stateZ{6}{18}\stateZ{6}{19}
      \stateZO{7}{0}\stateZO{7}{1}\stateZZO{7}{2}\stateO{7}{3}\stateO{7}{4}\stateO{7}{5}\stateO{7}{6}\stateO{7}{7}\stateO{7}{8}\stateO{7}{9}\stateO{7}{10}\stateO{7}{11}\stateZ{7}{12}\stateZ{7}{13}\stateZ{7}{14}\stateZ{7}{15}\stateZ{7}{16}\stateZ{7}{17}\stateZ{7}{18}\stateZ{7}{19}
      \stateZO{8}{0}\stateZZO{8}{1}\stateOO{8}{2}\stateOZO{8}{3}\stateZO{8}{4}\stateZO{8}{5}\stateZO{8}{6}\stateZO{8}{7}\stateZO{8}{8}\stateZO{8}{9}\stateZO{8}{10}\stateZZO{8}{11}\stateO{8}{12}\stateO{8}{13}\stateO{8}{14}\stateO{8}{15}\stateO{8}{16}\stateZ{8}{17}\stateZ{8}{18}\stateZ{8}{19}
      \stateZZO{9}{0}\stateOO{9}{1}\stateOO{9}{2}\stateOO{9}{3}\stateOZO{9}{4}\stateZO{9}{5}\stateZO{9}{6}\stateZO{9}{7}\stateZO{9}{8}\stateZO{9}{9}\stateZZO{9}{10}\stateOO{9}{11}\stateOZO{9}{12}\stateZO{9}{13}\stateZO{9}{14}\stateZO{9}{15}\stateZZO{9}{16}\stateO{9}{17}\stateO{9}{18}\stateO{9}{19}
      \stateOO{10}{0}\stateOO{10}{1}\stateOO{10}{2}\stateOO{10}{3}\stateOO{10}{4}\stateOZO{10}{5}\stateZO{10}{6}\stateZO{10}{7}\stateZO{10}{8}\stateZZO{10}{9}\stateOO{10}{10}\stateOO{10}{11}\stateOO{10}{12}\stateOZO{10}{13}\stateZO{10}{14}\stateZZO{10}{15}\stateOO{10}{16}\stateOZO{10}{17}\stateZOO{10}{18}\stateZOO{10}{19}
      \stateOO{11}{0}\stateOO{11}{1}\stateOO{11}{2}\stateOO{11}{3}\stateOO{11}{4}\stateOO{11}{5}\stateOZO{11}{6}\stateZO{11}{7}\stateZZO{11}{8}\stateOO{11}{9}\stateOO{11}{10}\stateOO{11}{11}\stateOO{11}{12}\stateOO{11}{13}\stateOZO{11}{14}\stateZOO{11}{15}\stateZOO{11}{16}\stateZOO{11}{17}\stateOOO{11}{18}\stateOOO{11}{19}
      \stateOO{12}{0}\stateOO{12}{1}\stateOO{12}{2}\stateOO{12}{3}\stateOO{12}{4}\stateOO{12}{5}\stateOO{12}{6}\stateOZO{12}{7}\stateZOO{12}{8}\stateZOO{12}{9}\stateZOO{12}{10}\stateZOO{12}{11}\stateZOO{12}{12}\stateZOO{12}{13}\stateZOO{12}{14}\stateOOO{12}{15}\stateOOO{12}{16}\stateOOO{12}{17}\stateOOO{12}{18}\stateOOO{12}{19}
      \stateZOO{13}{0}\stateZOO{13}{1}\stateZOO{13}{2}\stateZOO{13}{3}\stateZOO{13}{4}\stateZOO{13}{5}\stateZOO{13}{6}\stateZOO{13}{7}\stateOOO{13}{8}\stateOOO{13}{9}\stateOOO{13}{10}\stateOOO{13}{11}\stateOOO{13}{12}\stateOOO{13}{13}\stateOOO{13}{14}\stateOOO{13}{15}\stateOOO{13}{16}\stateOOO{13}{17}\stateOOO{13}{18}\stateOOO{13}{19}
      \stateOOO{14}{0}\stateOOO{14}{1}\stateOOO{14}{2}\stateOOO{14}{3}\stateOOO{14}{4}\stateOOO{14}{5}\stateOOO{14}{6}\stateOOO{14}{7}\stateOOO{14}{8}\stateOOO{14}{9}\stateOOO{14}{10}\stateOOO{14}{11}\stateOOO{14}{12}\stateOOO{14}{13}\stateOOO{14}{14}\stateOOO{14}{15}\stateOOO{14}{16}\stateOOO{14}{17}\stateOOO{14}{18}\stateOOO{14}{19}
    \end{tikzpicture}
    \begin{minipage}[c]{0.5\linewidth}
      \begin{center}
      \begin{align*}
        \Sigma &= \bigl\{c\in Q^\Z : \forall z\in\Z, c_zc_{z+1}\in L_2\bigr\}\\
        L_2 &= \bigl\{\tikz[baseline,scale=.3]{\stateZ{0}{0}\stateZ{1}{0}},\tikz[baseline,scale=.3]{\stateZ{0}{0}\stateOZO{1}{0}},\tikz[baseline,scale=.3]{\stateZ{0}{0}\stateZZO{1}{0}},\tikz[baseline,scale=.3]{\stateZ{0}{0}\stateZO{1}{0}},\\
               &\tikz[baseline,scale=.3]{\stateOZO{0}{0}\stateZ{1}{0}},\tikz[baseline,scale=.3]{\stateZZO{0}{0}\stateZ{1}{0}},\tikz[baseline,scale=.3]{\stateZO{0}{0}\stateOZO{1}{0}},\tikz[baseline,scale=.3]{\stateZO{0}{0}\stateZZO{1}{0}},\\
               & \tikz[baseline,scale=.3]{\stateOZO{0}{0}\stateOO{1}{0}},\tikz[baseline,scale=.3]{\stateZZO{0}{0}\stateOO{1}{0}},\tikz[baseline,scale=.3]{\stateOO{0}{0}\stateOO{1}{0}},\tikz[baseline,scale=.3]{\stateOO{0}{0}\stateZ{1}{0}}\bigr\}
      \end{align*}
    \end{center}
    \end{minipage}
    \end{center}
    \caption{A convergent but not bounded-change CA (time from bottom to top in the space-time diagram on the left).}
    \label{fig:convnotbc}
  \end{figure}
To use a metaphor inspired from physics, each cell of a bounded-change CA can be seen as a system that evolves according to information received from its neighbors, and for which each state change has a fixed energy cost.
Then, in each orbit and at each cell, the instantaneous energy defined as the number of remaining potential state changes is non-increasing with time, and strictly decreasing at each state change.
The number $k$ in Definition~\ref{def:boundedchange} can be interpreted as the cell-wise capacity, \textit{i.e.} the maximal energy an individual cell can hold initially.

As seen in the previous section, a nilpotent CA $F$ is such that ${F^t}$ is a constant map for some $t$, so it is always bounded-change.
Nilpotency for CA with a spreading state is actually Turing reducible to the property of bounded-change as follows: consider $F$ with a spreading state and add a $\{0,1\}$ component to states that constantly exchanges $0$ and $1$, except if there is a spreading state in the neighborhood in which case the additional component turns into $0$.
This new CA $F'$ is bounded-change if and only if $F$ is nilpotent: first, if $F$ is nilpotent then $F'$ also because the spreading state of $F$ forces $0$ everywhere on the second component; conversely, if $F$ is not nilpotent then it must possess an orbit without any occurrence of the spreading state (by compacity), and then the corresponding orbit in $F'$ is constantly changing the state on the second layer.
By undecidability of nilpotency we deduce the following theorem.

\begin{theorem}
  \label{theo:bcundecidable}
  It is undecidable to determine whether a 1D CA is bounded-change or not.
\end{theorem}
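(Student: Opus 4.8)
The plan is to reduce nilpotency to bounded-changeness through a computable transformation $F\mapsto F'$, exactly as sketched in the paragraph preceding the statement, filling in the one genuinely non-trivial step. The starting point is that the undecidability of nilpotency established above already holds when one restricts to CA possessing a \emph{spreading} state $s$ (this is how the reduction from the deterministic domino problem produces its CA): for such an $F$, nilpotency means precisely that every orbit eventually reaches the uniform configuration $\overline{s}$. So it suffices to exhibit, for each such $F$ with spreading state $s$ on alphabet $Q$, a CA $F'$ on $Q\times\{0,1\}$ that is bounded-change if and only if $F$ is nilpotent. I take $F'$ to act as $F$ on the first layer, and on the second layer to flip the bit ($0\leftrightarrow1$) at each step, except that the bit is forced to $0$ whenever $s$ occurs in the neighborhood of the cell. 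The map $F\mapsto F'$ is plainly computable from a finite description of $F$, so the whole argument rests on the equivalence.

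The easy direction is the forward one. If $F$ is nilpotent, then $F^{t_0}\equiv\overline{s}$ for some $t_0$; hence for every configuration the first layer is uniformly $s$ after $t_0$ steps, which forces the second layer to $0$ from that time on. Every trace of $F'$ is therefore frozen by step $t_0$ and contains at most $t_0$ changes, so $F'$ is $t_0$-change, hence bounded-change.

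For the converse I argue the contrapositive and produce, from non-nilpotency of $F$, a configuration on which the second-layer trace at the origin changes infinitely often. The crucial remark is that since $s$ spreads, the set of cells in state $s$ only grows along any orbit: once $s$ enters the neighborhood $N$ of the origin it stays there forever and the second layer at $0$ is stuck at $0$; conversely, if $s$ never enters $N$ during the orbit, the bit at $0$ flips at every step, giving infinitely many changes. It thus suffices to find a single configuration whose orbit never displays $s$ inside $N$, and this is the main obstacle, handled by a compactness argument. Non-nilpotency gives, for every $t$, a configuration $c_t$ with (after a shift) $F^t(c_t)_{\cspac}\neq s$. Because $s$ is spreading with radius $r$, a cell can be non-$s$ at time $t$ only if no $s$ appeared in its backward light cone; since $N$ is finite, this forces the \emph{whole} of $N$ to be $s$-free at all times $t'$ up to roughly $t$ minus the diameter of $N$ in the orbit of $c_t$. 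Extracting a limit point $c$ of $(c_t)_t$ in the compact space $\confs$ and using that ``$s$ does not occur in $N$ at time $t'$'' is a clopen condition satisfied by $c_t$ for all large $t$, one gets a configuration $c$ whose orbit is $s$-free on $N$ at every time step. Lifting $c$ to the first layer of $F'$ yields an orbit whose second-layer trace at the origin is not eventually constant, so $F'$ is not bounded-change.

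The delicate point is precisely this propagation-and-limit step: one must check that the spreading property pushes the absence of $s$ backward through the light cone so that the finitely many cells of $N$ remain $s$-free over arbitrarily long initial windows, which is what allows the diagonal extraction of a single configuration valid for all times simultaneously. Everything else is routine, and combining the equivalence with the undecidability of nilpotency for spreading-state CA gives the undecidability of bounded-changeness.
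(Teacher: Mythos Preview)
Your proposal is correct and follows exactly the same approach as the paper: the same reduction $F\mapsto F'$ adding a flipping $\{0,1\}$ layer frozen by the spreading state, and the same compactness argument to extract an $s$-free orbit when $F$ is not nilpotent. The only difference is that you spell out the compactness step (backward light cone plus clopen-condition extraction) that the paper compresses into the single phrase ``by compacity,'' and you phrase the conclusion as ``$s$ never enters $N$'' rather than ``$s$ never appears anywhere,'' which is the directly relevant (and in fact equivalent) statement.
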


Despite the general undecidability, this approach turns out to be very fruitful to analyze some majority CA.
A majority cellular automaton consists in taking in each cell the state which has the majority of occurrences among the neighboring cells.
The majority CA with neighborhood $V\subseteq\spac$ is defined on alphabet ${Q=\{0,1\}}$ by 
\[F(x)_z =
  \begin{cases}
    1 & \text{ if $\displaystyle\sum_{i\in V} x_{z+i} >\#V/2$, or $\displaystyle\sum_{i\in V} x_{z+i} =\#V/2$ and $x_z=1$}\\
    0 & \text{ otherwise.}
  \end{cases}
\]

A subset ${S\subseteq\Z^d}$ is symmetric if ${x\in S\implies -x\in S}$.
The following result concerning majority CA is a specific case of a much more general result from \cite{majoritybc2000} which deals with infinite graphs with some growth condition.
A similar qualitative behavior for majority evolution rules was established before in the settings of automata networks (the lattice of cells is finite, the evolution rule is possibly non-uniform), see \cite{PaperGoles}.
It is based on a decreasing energy function in both case.
Note however that \cite{majoritybc2000} do not state their main theorem as a bounded change property but only as a convergence result, although their proof technique allows to establish bounded-changedness as shown below.

\begin{figure}
  \centering
  \begin{minipage}{.2\linewidth}
    \begin{center}
      \includegraphics[width=.9\textwidth]{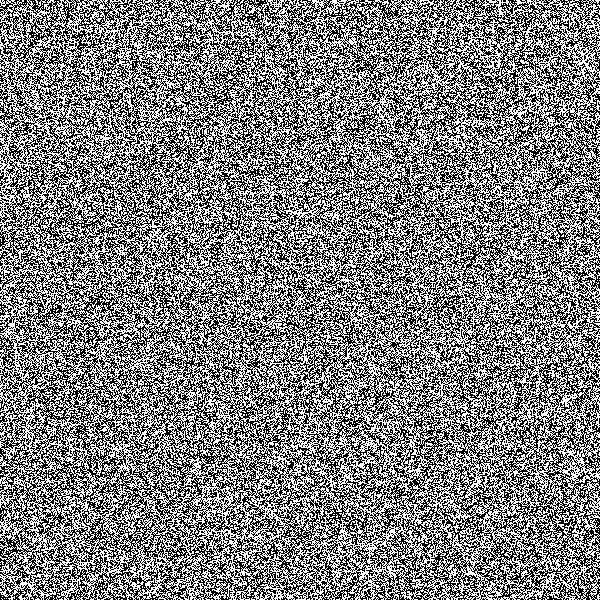}\\
      $c$
    \end{center}
  \end{minipage}
  \begin{minipage}{.2\linewidth}
    \begin{center}
      \includegraphics[width=.9\textwidth]{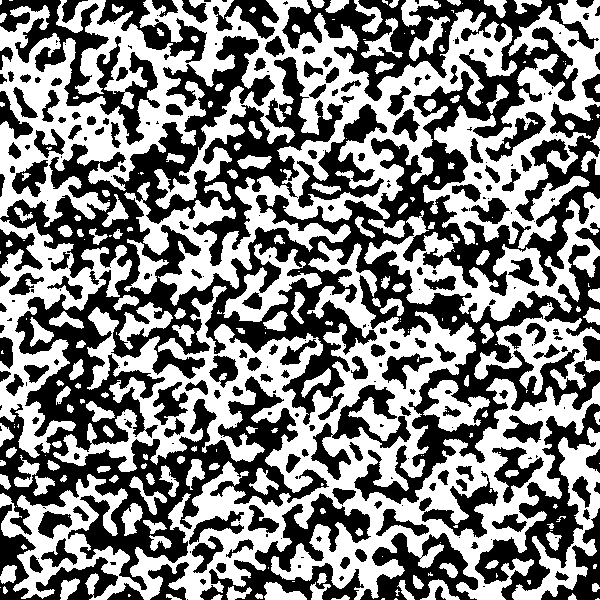}\\
      $F^2(c)$
    \end{center}
  \end{minipage}
  \begin{minipage}{.2\linewidth}
    \begin{center}
      \includegraphics[width=.9\textwidth]{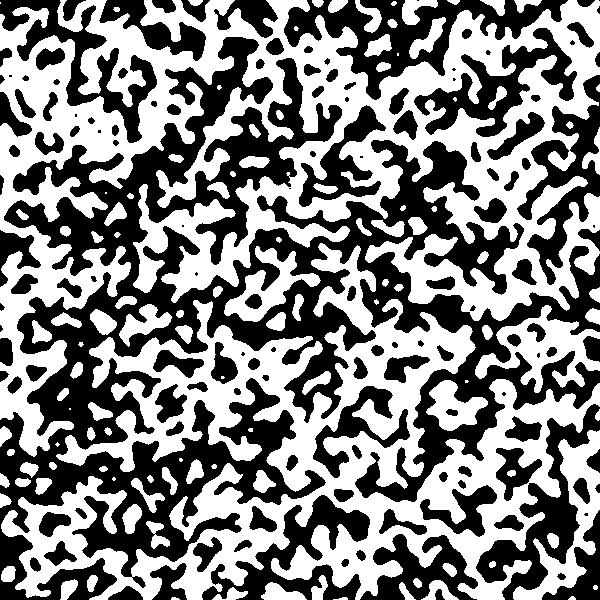}\\
      $F^4(c)$
    \end{center}
  \end{minipage}
  \begin{minipage}{.2\linewidth}
    \begin{center}
      \includegraphics[width=.9\textwidth]{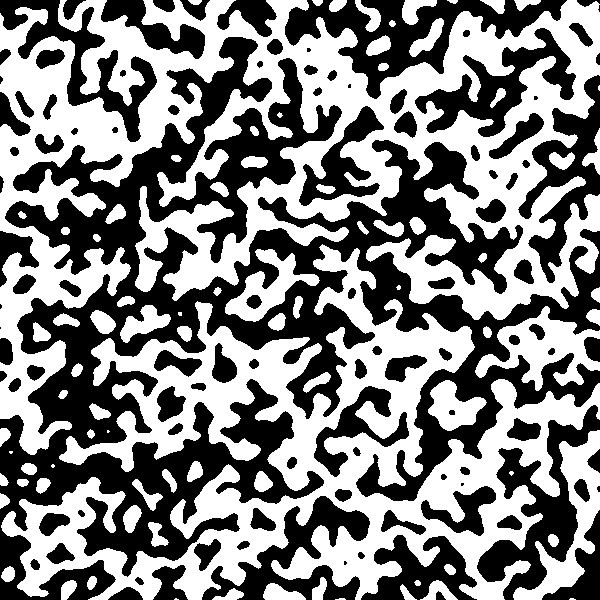}\\
      $F^6(c)$
    \end{center}
  \end{minipage}
  \vskip 2em
  \begin{minipage}{.2\linewidth}
    \begin{center}
      \includegraphics[width=.9\textwidth]{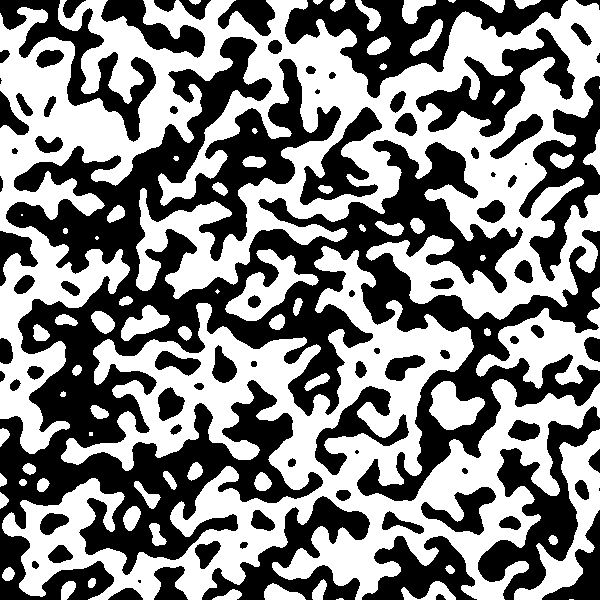}\\
      $F^8(c)$
    \end{center}
  \end{minipage}
  \hskip 1cm $\cdots$\hskip 1cm
  \begin{minipage}{.2\linewidth}
    \begin{center}
      \includegraphics[width=.9\textwidth]{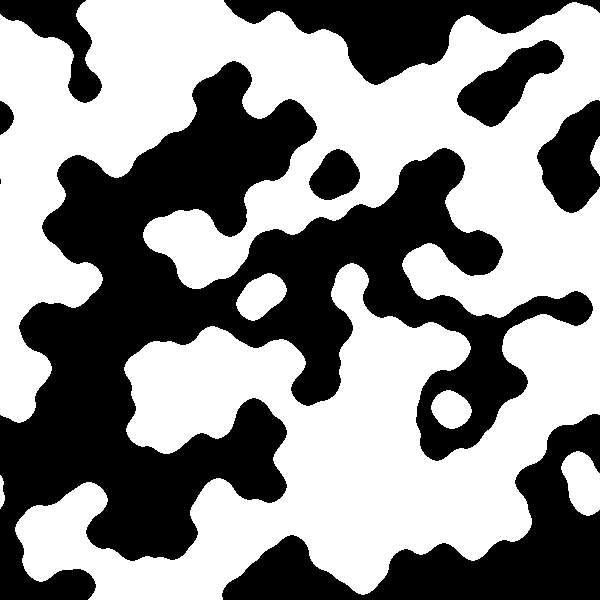}\\
      $\bigl(F^2\bigr)^\omega(c)$
    \end{center}
  \end{minipage}
  \caption{An orbit of $F^2$ where $F$ is some majority CA on $\Z^2$ with symmetric neighborhood $S$ with $(0,0)\in S$ as in the conditions of Theorem~\ref{theo:majboundedchange}, so that $F^2$ is bounded-change and hence convergent.}
  \label{fig:maj2D}
\end{figure}

\begin{theorem}[\cite{majoritybc2000}]
  \label{theo:majboundedchange}
  Let $F$ be any majority CA with ${\spac=\Z^d}$ and symmetric neighborhood $S$ with ${0\in S}$, then $F^2$ is bounded-change.
\end{theorem}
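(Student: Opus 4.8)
My plan is to exhibit a decreasing energy (a Lyapunov functional) in the spirit of Goles--Olivos and then to upgrade its monotonicity into a \emph{uniform} per-cell bound on the number of changes. First I would pass to spin variables $s_z = 2x_z-1\in\{-1,+1\}$, so that the majority rule becomes $s^{t+1}_z = \operatorname{sgn}\big(\sum_{i\in S}s^t_{z+i}\big)$. Here the hypothesis $0\in S$ is exactly what matters: the self-vote $s^t_z$ realizes the ``keep current state on a tie'' convention, and it guarantees that the local field $h_z(t)=\sum_{i\in S}s^t_{z+i}$, whenever it vanishes, forces $s^{t+1}_z=s^t_z$. I would then introduce the energy correlating two consecutive time steps,
\[ E(t) = -\sum_{z}\sum_{i\in S} s^t_{z}\, s^{t+1}_{z+i}, \]
understood on finite windows so that all sums are finite (the localized version below is what I really use).

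\textbf{Monotonicity.} Using the symmetry $S=-S$ to pair the two orientations of each interaction, a direct computation gives
\[ E(t)-E(t-1) = -\sum_{z}\big(s^{t+1}_z-s^{t-1}_z\big)\,h_z(t). \]
Since $s^{t+1}_z=\operatorname{sgn}(h_z(t))$, each summand is nonnegative: if $z$ does not change between times $t-1$ and $t+1$ the term is $0$, and if it does --- which is precisely a change of the two-step map $F^2$ at the relevant parity --- the term equals $2\,|h_z(t)|\ge 2$. Hence $E$ is non-increasing, and \emph{every} change produced by $F^2$ at a cell dissipates at least two fixed quanta of energy that are localized in a fixed-size neighborhood of that cell. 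The role of $0\in S$ reappears here: it rules out ``free'' changes with $h_z(t)=0$, so no change escapes being paid for. This already yields convergence of $F^2$ (each trace is eventually constant).

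\textbf{From a global to a per-cell bound (the crux).} Convergence is strictly weaker than bounded-changeness: I still need a constant $k$, \emph{independent of the configuration}, bounding the number of changes of a single cell. To this end I would localize the energy as a sum $E=\sum_z\Phi_z$ of bounded per-cell contributions $\Phi_z(t)=-\tfrac12\sum_{i\in S}\big(s^t_{z+i}s^{t+1}_z+s^{t+1}_{z+i}s^t_z\big)$, which satisfy $|\Phi_z|\le\#S$, and then try to charge each $F^2$-change of $z$ to a decrease of the energy concentrated near $z$. The main obstacle is that $\Phi_z$ is \emph{not} monotone: although a change at $z$ drops the energy of its neighborhood by at least $2$, a later change at an adjacent cell can feed energy back, so a naive per-cell telescoping of $\sum_t\Phi_z$ fails. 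The heart of the argument --- and where I would lean on the technique of \cite{majoritybc2000} --- is to control this inter-cell energy flux: summing the nonnegative local dissipations $\Delta_z(t)=(s^{t+1}_z-s^{t-1}_z)h_z(t)$ over a finite window, one must show that the cumulative flux across the window boundary cannot make the dissipation attributable to a single interior cell unbounded, using the finiteness of $S$ and the fixed radius of $F$. Once the total dissipation charged to one cell is bounded by a constant $C=C(\#S,d)$, the lower bound ``$\ge 2$ per change'' gives at most $k=\lceil C/2\rceil$ changes per cell, uniformly in the initial configuration.

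\textbf{Conclusion.} This makes $F^2$ a $k$-change CA in the sense of Definition~\ref{def:boundedchange}, with $k$ depending only on $S$ and $d$, and in particular convergent (as already observed, bounded-change implies convergence). I expect the genuinely delicate point to be the flux estimate of the third step. Note that the one-step synchronous majority map really does oscillate with period two --- for instance the alternating configuration on $\Z$ with $S=\{-1,0,1\}$ --- which is exactly why the statement concerns $F^2$, why the Lyapunov functional must be read through the two-step map, and why a direct monotone \emph{local} potential is unavailable, so that the boundary-flux control cannot be avoided.
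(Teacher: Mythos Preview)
Your setup and monotonicity computation are correct and essentially identical to the paper's: the energy $E(t)$ you write in spin variables is, up to an affine change, the paper's $E_n^t=\sum_{u,v}a(u,v)|F^{t+1}(x)_u-F^t(x)_v|$, and your identity $E(t)-E(t-1)=-\sum_z(s^{t+1}_z-s^{t-1}_z)h_z(t)$ is the spin-variable form of the paper's splitting into $\Sigma^+_{n,t}(u)-\Sigma^-_{n,t}(u)$. One small point: the reason $h_z(t)\neq 0$ is not a ``self-vote tie-breaker'' mechanism as you describe, but simply that symmetry together with $0\in S$ forces $\#S$ to be odd, so a sum of an odd number of $\pm 1$'s is never zero; the paper uses this explicitly.

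The genuine gap is exactly where you say it is: your third step is a description of what must be shown, not an argument. You write that ``one must show that the cumulative flux across the window boundary cannot make the dissipation attributable to a single interior cell unbounded'' and that you would ``lean on the technique of \cite{majoritybc2000}'', but you give no mechanism for controlling that flux. This is the entire content of the theorem beyond convergence. The paper carries this out concretely: for a ball $\boule{n}$ of radius $n=kr$, the energy drop $\Delta_n^t$ is bounded by $-\#(\text{interior changes})+s\cdot\#(\text{boundary-shell changes})$ with $s=(\#S-1)/2$; one then forms a weighted telescoping sum $TE_n^t=E_n^t+\tfrac1s E_{n-r}^t+(1+\tfrac1s)\tfrac1s E_{n-2r}^t+\cdots$, with the weights chosen precisely so that the positive boundary contribution of $\Delta_{n-jr}^t$ is cancelled by the negative interior contribution of $\Delta_{n-(j-1)r}^t$. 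What survives is $-(1+\tfrac1s)^k$ times the number of changes at the center, against a boundary term of polynomial size in $n$; since $(1+\tfrac1s)^k$ grows exponentially in $k$, letting $n\to\infty$ kills the boundary and yields a uniform bound. Without something of this kind (geometric amplification of the center versus polynomial boundary growth in $\Z^d$), your outline does not close.
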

\begin{proof}
  Denote by $r$ the radius of the CA, \textit{i.e.} the maximum ${\|u\|_\infty}$ for ${u\in S}$.
  Let us fix any ${x\in\confs}$.
  The core of the argument is to consider the following ``energy'' function for any ${x\in\confs}$ and any ${n,t\in\N}$: 
  \[E_n^t = \sum_{u\in\boule{n}}\sum_{v\in\boule{n}}a(u,v)\bigl|F^{t+1}(x)_u-F^t(x)_v\bigr|\]
  where ${a(u,v)=1}$ if ${v\in u+S}$, and $0$ else.
  Because $S$ is symmetric ${a(u,v)=a(v,u)}$ so we can write $E^{t-1}_n$ as follows:
  \[E_n^{t-1} = \sum_{u\in\boule{n}}\sum_{v\in\boule{n}}a(u,v)\bigl|F^{t-1}(x)_u-F^t(x)_v\bigr|.\]  
  For any ${t>0}$, we will first give an upper bound on ${\Delta_n^t = E_n^{t}-E_n^{t-1}}$ which intuitively means that the energy $E_n^t$ essentially decreases with time, but only up to a ``controlled perturbation'' of the border of the ball $\boule{n}$. Denoting 
  \[\Sigma_{n,t}^+(u) = \sum_{v\in\boule{n}}a(u,v)\bigl|F^{t+1}(x)_u-F^t(x)_v\bigr|,\text{ and}\]
  \[\Sigma_{n,t}^-(u) = \sum_{v\in\boule{n}}a(u,v)\bigl|F^{t-1}(x)_u-F^t(x)_v\bigr|\]
  we can rewrite $\Delta_n^t$ as 
  \begin{equation}
    \Delta_n^t = \sum_{0\leq i\leq n-r}\sum_{u\in C_{i,t}}\Sigma_{n,t}^+-\Sigma_{n,t}^- + \sum_{n-r< i\leq n}\sum_{u\in C_{i,t}}\Sigma_{n,t}^+-\Sigma_{n,t}^- \label{eq:deltatwosums}    
  \end{equation}
  where $C_{i,t}$ denote the cells that have different states between step $t+1$ and $t-1$ (\textit{i.e.} ${F^{t+1}(x)_u\neq F^{t-1}(x)_u}$) and belong to the sphere of radius $i$ (\textit{i.e.} ${\|u\|_\infty=i}$).
  This expression for $\Delta_n^t$ holds because ${\Sigma_{n,t}^+(u)=\Sigma_{n,t}^-(u)}$ when ${F^{t+1}(x)_u= F^{t-1}(x)_u}$.
  
  We first claim that ${\Sigma_{n,t}^+(u)-\Sigma_{n,t}^-(u)\leq \frac{\#S -1}{2}}$ for any $u\in\boule{n}$: this holds because actually ${\Sigma_{n,t}^+(u)\leq \frac{\#S -1}{2}}$ must hold to ensure that a majority of neighbors of $u$ are in state ${F^{t+1}(x)_u}$ at time $t$.
  
  We now claim that ${\Sigma_{n,t}^+(u)-\Sigma_{n,t}^-(u)\leq -1}$ for any ${\displaystyle u\in\bigcup_{0\leq i\leq n-r}C_{i,t}}$.
  Indeed, we can split the neighbors of any such $u$ into two sets 
  \[S_u^+ = \{v\in u+S : F^t(x)_v=F^{t+1}(x)_u\},\text{ and }\]
  \[S_u^- = \{v\in u+S : F^t(x)_v=F^{t-1}(x)_u\},\]
  and then ${\# S_u^+= \Sigma_{n,t}^-(u)}$ because ${u+S\subseteq\boule{n}}$, and similarly ${\# S_u^-= \Sigma_{n,t}^+(u)}$.
  By the majority rule we must have ${\# S_u^+>\# S_u^-}$ (recall that $S$ is of odd cardinality by hypothesis) which proves the claim.
  
  Let ${s=\frac{(\#S -1)}{2}}$. From the two claims we can rewrite Equation~(\ref{eq:deltatwosums}) above as:
  \begin{equation}
    \label{eq:deltafinal}\displaystyle
    \Delta_n^t\leq - \#\bigcup_{0\leq i\leq n-r}C_{i,t} + s\#\bigcup_{n-r< i\leq n}C_{i,t}.
  \end{equation}
  We are now going to deduce a bound on the number of changes in ${\trac{F^2}{x}}$ from the above inequality in two steps: using a weighted sum spatially and then summing over time. 
  Spatially, the trick is to use a telescoping sum to attenuate the perturbation of the second term in the upper bound of Equation~(\ref{eq:deltafinal}) and put more weight on the central cell.
  Choose ${n = kr}$ and consider the sum
  \[TE_n^t = E_n^t + \frac{1}{s} E_{n-r}^t + (1+\frac{1}{s})\frac{1}{s} E_{n-2r}^t + (1+\frac{1}{s})^2\frac{1}{s} E_{n-3r}^t + \cdots + (1+\frac{1}{s})^{k-1}\frac{1}{s} E_0^t.\]
  The weights are chosen so that ${TE_n^t-TE_n^{t-1} = \Delta_n^t + \frac{1}{s} \Delta_{n-r}^t + \cdots}$ can be nicely upper-bounded.
  Indeed Equation~(\ref{eq:deltafinal}) gives \[\Delta_n^t + \frac{1}{s} \Delta_{n-r}^t\leq -(1+\frac{1}{s})\#\bigcup_{0\leq i\leq n-2r}C_{i,t} + s\#\bigcup_{n-r< i\leq n}C_{i,t}\] so the changes in the annulus between spheres of radii ${n-2r}$ and ${n-r}$ are exactly compensated.
  Then, when adding ${(1+\frac{1}{s})\frac{1}{s} \Delta_{n-2r}^t}$ upper-bounded by Equation~(\ref{eq:deltafinal}), the terms involving annulus from ${n-3r}$ to ${n-2r}$ are exactly compensated again, giving: 
  \[\Delta_n^t + \frac{1}{s} \Delta_{n-r}^t + (1+\frac{1}{s})\frac{1}{s} \Delta_{n-2r}^t \leq -(1+\frac{1}{s})^2\#\bigcup_{0\leq i\leq n-3r}C_{i,t} + s\#\bigcup_{n-r< i\leq n}C_{i,t}.\]
  Repeating this argument, and by the identity 
  \[1 + \frac{1}{s} + (1+\frac{1}{s})\frac{1}{s} + \cdots + (1+\frac{1}{s})^{k-1}\frac{1}{s} = (1+\frac{1}{s})^k,\]
  we get 
  \[TE_n^t - TE_n^{t-1}\leq - (1+\frac{1}{s})^kC_{0,t} + s\#\bigcup_{n-r< i\leq n}C_{i,t} \leq - (1+\frac{1}{s})^kC_{0,t} + s\#(\boule{n}\setminus\boule{n-r}).\]
  By summing over time the successive differences we then obtain 
  \[TE_n^t-TE_n^0\leq - (1+\frac{1}{s})^k\sum_{i< t}C_{0,i} + (t-1)s\#(\boule{n}\setminus\boule{n-r})\]
  so 
  \[\sum_{0\leq i<t}C_{0,i} \leq \frac{TE_n^0}{(1+\frac{1}{s})^k} + \frac{(t-1)s\#(\boule{n}\setminus\boule{n-r})}{(1+\frac{1}{s})^k}.\]
  When $n$ grows, the first term of this upper bound converges to a constant independent of $t$ and of $x$ (from the definition of $TE_n^0$ because $E_n^0$ is at most the square of the size of $\boule{n}$), and the second term vanishes.
  This concludes the proof because it shows that ${\sum_{i=0}^\infty C_{0,i}}$ is bounded independently of $x$, and it counts the number of state changes in the trace ${\trac{F^2}{x}}$ and the trace ${\trac{F^2}{F(x)}}$.
\end{proof}

Let us show two counter-examples to Theorem~\ref{theo:majboundedchange} when hypothesis on either $\spac$ or the symmetry of neighborhood are removed:
\begin{enumerate}
\item let $\spac$ be the free group with $2$ generators, $V$ the symmetric neighborhood made of all generators and their inverse and element $\cspac$, and denote by $F$ the associated majority CA.
  Define the configuration $c^b$ for any ${b:\N\to\{0,1\}}$ by 
  \[c^b_g = b(\spaclen{g}).\]
  Because we are in the free group with $2$ generators, if ${g\in\boule{n}\setminus\boule{n-1}}$ then $g+V$ contains $3$ elements in ${\boule{n+1}\setminus\boule{n}}$ and $V$ is of size $5$.
  Therefore we have ${F(c^b)_g = b(\spaclen{g}+1)}$, and thus ${F(c^b)=c^{b'}}$ with ${b'(n) = b(n+1)}$ for all ${n\in\N}$.
  By choosing $b$ such that ${b(i+1)\cdots b(i+2k) = 0^k1^k}$ for arbitrarily large $k$ and for some $i$, we deduce that there is no ${p>0}$ such that ${F^p}$ is convergent.
\item for ${\spac=\Z^2}$, let us consider the non-symmetric neighborhood ${V=\{(0,0),(0,1),(1,0)\}}$ and let $F$ be the associated majority CA.
  Define the configuration $c^b$ for any ${b:\N\to\{0,1\}}$ by 
  \[c^b_{(x,y)} = b(x+y).\]
  Again, for any $(x,y)\in\Z^2$ there are two elements $z'$ in ${(x,y)+V}$ such that ${c^b_{z'}=b(x+y+1}$, so ${F(c^b)_{(x,y)}=b(x+y+1)}$.
  We can conclude as above that there is no ${p>0}$ such that $F^p$ is convergent.
\end{enumerate}

\subsection{Freezing CA}

The argument of Theorem~\ref{theo:majboundedchange} to prove the bounded-change property is not immediate and the property is generally undecidable, even when ${\spac=\Z}$ (Theorem~\ref{theo:bcundecidable}).

In some cases, however, the structure of the local rule of a CA directly implies the bounded-change property.
The so called \textit{freezing CA}, introduced in \cite{GolOlThey15}, are an example.

\begin{definition}\label{def:freezing}
  A CA ${F}$ on ${\confs}$ is \emph{freezing} if there is some partial order ${(Q,\leq)}$ on its state set such that for any ${c\in\confs}$ and any ${z\in\spac}$ it holds: 
  \[F(c)_z\leq c_z.\]
\end{definition}

Clearly, in a freezing CA the number of changes at any cell in any orbit is bounded by the depth of the partial order, so all freezing CA are bounded-change.
The freezing property can be tested efficiently by looking at the local transition table.
Indeed, given any CA $F$, we can compute the canonical state change graph ${(Q,\to)}$ defined by ${q_1\to q_2}$ if some transition changes state $q_1$ into $q_2$.
Then one can check that $F$ is freezing if and only if this graph is acyclic. 

Going back to our earlier metaphor, the partial order involved in the freezing property can be seen as a local non-increasing energy that serves as a certificate for the bounded change property.
Of course, not all bounded-change CA are freezing: for instance $F^2$, where $F$ is some CA from the hypothesis of Theorem~\ref{theo:majboundedchange}, is a bounded-change CA but it cannot be freezing since $F$ and hence $F^2$ commute with the transformation that permutes states $0$ and $1$.
Thus, if transition ${0\rightarrow 1}$ exists in the canonical state change graph of $F^2$, then transition ${1\rightarrow 0}$ also exists, which prevent any freezing order to satisfy Definition~\ref{def:freezing}.

There are many example of freezing CA studied in the literature \cite{GriMoo96,ulam,Fuentes,forestfire,Gravner98,BoSmUz15}.

A freezing CA $F$ for some order $\leq$ can also be \emph{monotone}: 
\[x\leq y \Rightarrow F(x)\leq F(y)\]
where $\leq$ here denotes the cellwise extension of $\leq$ to configurations.
Among these examples, the classical bootstrap percolation CA \cite{ChLeRe79} has been the starting point of a rich branch of percolation theory, which can be viewed as the study of the qualitative behavior of monotone freezing CA initialized on random Bernoulli configurations \cite{BoSmUz15,Holroyd03,BaBoPrSm16}.
This CA, denoted $F_B$ in the sequel, is defined on ${\{0,1\}^{\Z^2}}$ as follows: 
\[F_B(c)_z =
  \begin{cases}
    0&\text{ if $c_z=0$ or ${\#\bigl\{z'\in z+N:c_{z'}=0\bigr\}\geq 2}$,}\\
    1&\text{ else.}
  \end{cases}
\]

where ${N=\{(1,0),(0,1),(-1,0),(0,-1)\}}$.

$F_B$ is not nilpotent on periodic configurations (it admits both $\overline{0}$ and $\overline{1}$ as fixed points), and a classical result of \cite{Enter1987} shows that it is $\mu$-nilpotent for any full-support Bernoulli measure\footnote{\cite{Enter1987} doesn't use the notion of $\mu$-nilpotence but the property ${\mu(\{c:F^\omega(c)=\overline{0}\})=1}$. It can be shown that the two are equivalent for freezing CA in general \cite[Lemma 3.6]{salo2021bootstrap}.}.
Actually, much more is known on this CA started from random initial configurations \cite{Holroyd03}.
A qualitatively very different behavior is obtained by a similar CA, denoted $F_O$, and defined on ${\{0,1\}^{\Z^2}}$ as follows:
\[F_O(c)_z =
  \begin{cases}
    0&\text{ if $c_z=0$ or $c_{z+(1,0)}=0$ or $c_{z+(1,1)}=0$,}\\
    1&\text{ else.}
  \end{cases}
\]
$F_O$, contrary to $F_B$, is $\mu$-nilpotent for some Bernoulli measures but not all.

As a third example of monotone freezing CA with 2 states, let's consider $F_M$ defined as follows:
\[F_M(c)_z =
  \begin{cases}
    0&\text{ if $c_z=0$ or ${\#\bigl\{z'\in z+N:c_{z'}=0\bigr\}\geq 3}$,}\\
    1&\text{ else.}
  \end{cases}
\]

A classification of behaviors of all monotone freezing CA with $2$ states started from random Bernoulli configurations is given in \cite{BaBoPrSm16,BoSmUz15}.
The following theorem focuses on examples $F_B$ and $F_O$ to illustrate this classification.

\begin{figure}
  \centering
  \begin{minipage}{.24\linewidth}
    \begin{center}
      \includegraphics[width=.99\textwidth]{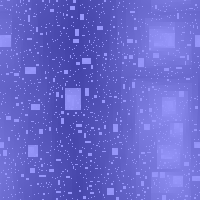}\\
      \small ${F_B}$ with ${\mu([0])=0.05}$
    \end{center}
  \end{minipage}
  \begin{minipage}{.24\linewidth}
    \begin{center}
      \includegraphics[width=.99\textwidth]{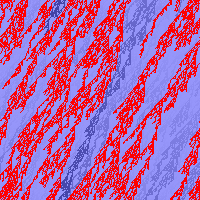}
      \small ${F_O}$ with ${\mu([0])=0.29}$
    \end{center}
  \end{minipage}
  \begin{minipage}{.24\linewidth}
    \begin{center}
      \includegraphics[width=.99\textwidth]{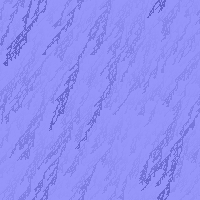}
      \small ${F_O}$ with ${\mu([0])=0.35}$
    \end{center}
  \end{minipage}
  \begin{minipage}{.24\linewidth}
    \begin{center}
      \includegraphics[width=.99\textwidth]{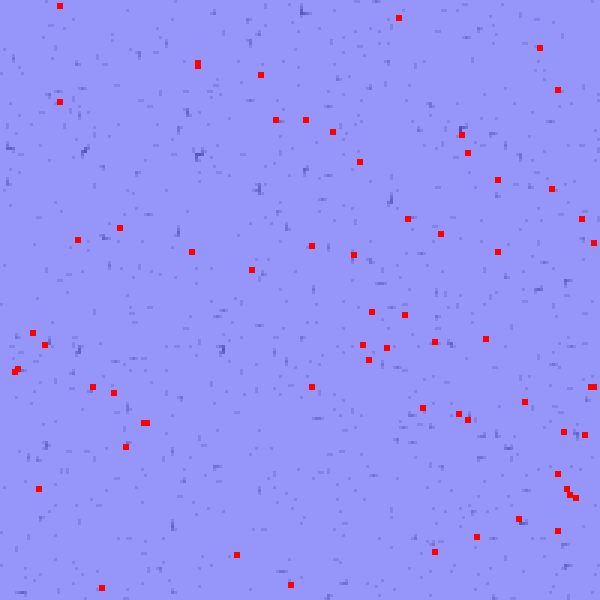}
      \small ${F_M}$ with ${\mu([0])=0.8}$
    \end{center}
  \end{minipage}
  \caption{Representation of fixed points reached under either $F_B$ or $F_O$ or $F_M$ starting from $\mu$-random initial configurations with Bernoulli measures $\mu$. The color interpretation is as follows: red represents state $1$ and shades of blue represent state $0$ with the convention that the lighter the color, the earlier the freezing time. This illustrates Theorem~\ref{theo:classicbootstrap}: $F_B$ converges almost surely to $\overline{0}$ no matter how small ${\mu([0])}$ is, while $F_O$ needs a large enough ${\mu([0])}$ to erase almost surely all $1$s, and $F_M$ almost never converges to $\overline{0}$, no matter how large ${\mu([0])}$ is.}
  \label{fig:bootstrap}
\end{figure}

\newcommand\bounda[1]{\mathcal{B}(#1)}

\begin{theorem}
  \label{theo:classicbootstrap}
  $F_B$ is generically nilpotent, and $\mu$-nilpotent for any translation-ergodic measure $\mu$ of full support.
  However, there is a threshold ${0<\theta<1}$ such that $F_O$ is $\mu$-nilpotent for Bernoulli measure $\mu$ when ${\mu([0])>\theta}$ and not $\mu$-nilpotent if ${\mu([0])<\theta}$. In particular $F_O$ is not generically nilpotent. Finally, $F_M$ is not $\mu$-nilpotent as soon as $\mu$ has full support.
\end{theorem}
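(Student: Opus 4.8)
The plan is to exploit that $F_B$, $F_O$ and $F_M$ are all monotone freezing CA for the order $0<1$: the state $0$ is absorbing, so the only uniform configuration an orbit can converge to is $\overline{0}$, and $\mu$-nilpotency simply means that almost every $1$ is eventually erased, i.e. $F^t(\mu)\to_t\delta_{\overline{0}}$. Two levers will be used repeatedly. To obtain $\mu$-nilpotency for a full-support translation-ergodic $\mu$ it suffices, by Theorem~\ref{theo:genmunil}, to establish generic nilpotency; and conversely, to rule out generic nilpotency it suffices to exhibit a single full-support translation-ergodic measure that is not $\mu$-nilpotent, since generic nilpotency would force $\mu$-nilpotency for it by the same theorem.

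For $F_B$ I would first prove generic nilpotency directly. Fix a cell $z$ and let $B_z$ be the set of configurations for which $z$ eventually becomes $0$; it is open, since membership is witnessed at a finite time, depends on finitely many cells, and $0$ is absorbing. To see that $B_z$ is dense, take any cylinder $[v]$ with $v:D\to Q$ and $D\subseteq[-m,m]^2$ (enlarging $m$ so that $z\in[-m,m]^2$), and extend $v$ by putting $0$ on the whole L-shaped region ${\{w_1\le -m-1\}\cup\{w_2\le -m-1\}}$. This creates an inner corner of $0$s whose two bounding half-lines force the enclosed quadrant to fill with $0$s, diagonal by diagonal, under the two-neighbour rule, eventually overwriting all of $D$ and $z$. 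Hence $B_z$ is dense open, ${B=\bigcap_z B_z}$ is comeager, and $F_B$ is generically nilpotent; the announced $\mu$-nilpotency for every full-support translation-ergodic $\mu$ then follows for free from Theorem~\ref{theo:genmunil}.

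For $F_O$ the crux is a correspondence with oriented site percolation. Using the freezing property one shows that ${F_O^\omega(c)_z=1}$ if and only if there is an infinite oriented path of initial $1$s issued from $z$, the admissible steps being the two neighbours appearing in the rule: if $z$ stays $1$ forever then at every time at least one of these two neighbours carries a $1$, and by monotonicity a cell that is $1$ infinitely often is $1$ forever, so one of the two neighbours survives forever and the path is built by recursion. This identifies survival of $1$s with the existence of an infinite open cluster in oriented percolation of parameter $1-\mu([0])$, whose critical probability $p_c$ is known to satisfy ${0<p_c<1}$. Setting ${\theta=1-p_c}$, for ${\mu([0])>\theta}$ no infinite oriented $1$-path survives almost surely, so ${F^t(\mu)\to_t\delta_{\overline{0}}}$ and $F_O$ is $\mu$-nilpotent; for ${\mu([0])<\theta}$ the origin starts an infinite oriented $1$-path with positive probability, so ${\lim_t F^t(\mu)([1])>0}$ and $F_O$ is not $\mu$-nilpotent. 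Since this last measure has full support and is translation-ergodic, Theorem~\ref{theo:genmunil} then forbids $F_O$ from being generically nilpotent.

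For $F_M$ it is enough to exhibit a finite indestructible pattern. A $2\times2$ block of $1$s is permanent: each of its four cells has exactly two of its von Neumann neighbours inside the block, hence at most two $0$-neighbours at any time, strictly fewer than the three required to flip it; since $0$ is absorbing, the block survives forever by induction. For any full-support $\mu$ the event that the origin lies in such a block has positive probability, so ${\lim_t F^t(\mu)([1])>0}$ and $F_M$ is not $\mu$-nilpotent. The main obstacle is clearly the $F_O$ part: establishing the exact equivalence between the freezing asymptotics and oriented percolation, and then importing the two non-trivial facts that oriented site percolation has critical probability strictly between $0$ and $1$ — supercriticality giving positive survival probability below $\theta$, and the subcritical estimate giving almost-sure death above $\theta$. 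By contrast the $F_B$ and $F_M$ parts are comparatively soft once the reduction to Theorem~\ref{theo:genmunil} and the indestructible square are in hand.
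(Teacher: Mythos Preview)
Your proposal is correct and follows essentially the same approach as the paper: generic nilpotency of $F_B$ via a comeager set of configurations admitting a ``wall'' of $0$s that fills in (you use an L-shaped half-plane corner, the paper uses a square annulus, but the propagation mechanism is identical), reduction of $F_O$ to oriented site percolation via the path characterization of survival, and the $2\times 2$ indestructible block for $F_M$. The appeals to Theorem~\ref{theo:genmunil} to pass between generic and $\mu$-nilpotency are exactly as in the paper.
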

\begin{proof}
  It is enough to show generic nilpotency of $F_B$ and the $\mu$-nilpotency follows by Theorem~\ref{theo:genmunil}.
  For any ${n\in\N}$ consider the set of configurations having an annulus of $0$ of radius $n$ around position ${(0,0)}$: 
  \[A_n = \{c\in\{0,1\}^{\Z^2} : \forall z, \|z\|=n\implies c_z=0\}.\]
  It is easy to show by induction that if ${\|z\|\leq n}$ and ${c\in A_n}$ then ${F_B^\omega(c)_z = 0}$ (the annulus of radius $n-1$ is first turned into $0$ starting from the corner, and so on).
  $A_n$ is open (it's a union of cylinders) and for any ${m\in \N}$ the union ${U_m = \bigcup_{n\geq m}A_n}$ is open and dense (any cylinder intersects $A_n$ for large enough $n$).
  Therefore the intersection ${I = \bigcap_{m\geq 0} U_m}$ is a comeager set.
  From the remark above, it holds ${F_B^\omega(c)_z=0}$ for any ${z\in\Z^2}$ and any ${c\in I}$.
  Therefore $I$ is included in the realm of attraction of ${\overline{0}}$ and we deduce that ${F_B}$ is generically nilpotent.

  Concerning $F_O$, the claim on Bernoulli measures for which it is $\mu$-nilpotent is enough to show that it is not generically nilpotent by Theorem~\ref{theo:genmunil}.
  One can show that ${F_O^\omega(c)_{(0,0)}= 1}$ if and only if there is a directed path ${(\pi_n)_{n\in\N}}$ in $\N^2$ starting from ${\pi_0=(0,0)}$ and such that ${\pi_{n+1}\in\pi_n+\{(0,1),(1,1)\}}$ for all ${n\in\N}$, and verifying ${c(\pi_n)=1}$ for all ${n\in\N}$.
  Indeed, if such path exists, it is easy to check that ${F^t(c)_{\pi_n}=1}$ for all ${t\geq 1}$ and all ${n\in\N}$ by induction, so in particular ${F^\omega_O(c)_{(0,0)}=1}$.
  On the contrary, if no such infinite path exists, then there is a finite bound $B$ on the longest finite directed path with the same properties (otherwise we could extract an infinite one by compacity).
  One can then check that ${F_O^{B+1}(c)_{(0,0)}=0}$.
  It is well-known from oriented percolation theory \cite{Durrett_1984} that when $c$ is a $\mu$-random configuration for Bernoulli $\mu$, there is a threshold ${0<\theta<1}$ such that an infinite path ${(\pi_n)_{n\in\N}}$ exists:
  \begin{itemize}
  \item with positive probability when ${\mu([0])<\theta}$,
  \item with probability $0$ when ${\mu([0])>\theta}$.
  \end{itemize}
  We deduce that ${F_O^t(\mu)([1])\to 0}$ so $F_O$ is $\mu$-nilpotent in the second case, while ${F_O^t(\mu)([1])\geq\epsilon>0}$ for all $t\geq 0$ so $F_O$ is not $\mu$-nilpotent in the first case.

  Finally, $F_M$ admits a finite obstacle: precisely, if a 2 by 2 square of $1$s appear in some configuration, it will never disappear.
  Thus if $\mu$ gives a probability larger than $\epsilon>0$ to a cylinder made of this pattern and also to the cylinder ${[0]}$, then so will ${F^t(\mu)}$ for any ${t\geq 1}$.
  We deduce that $F_M$ is not $\mu$-nilpotent.
  
\end{proof}

Although $\mu$-nilpotency is well understood for monotone freezing CA with $2$ states by bootstrap percolation theory \cite{morris_2017,Morris2017}, it remains hard for freezing CA in general.

\begin{theorem}[\cite{salo2021bootstrap}]
  The set of 2D freezing CA with $2$ states which are $\mu$-nilpotent for all full-support Bernoulli measure $\mu$ is recursively inseparable from the set of 2D freezing CA with $2$ states which are $\mu$-nilpotent for no full-support Bernoulli measure $\mu$.
\end{theorem}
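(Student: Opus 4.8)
The plan is to exhibit a computable map $M\mapsto G_M$ from Turing machines to $2$D freezing CA with two states such that, for a fixed recursively inseparable pair $(P_0,P_1)$ of sets of machines, $M\in P_1$ forces $G_M$ to be $\mu$-nilpotent for \emph{every} full-support Bernoulli measure, while $M\in P_0$ forces $G_M$ to be $\mu$-nilpotent for \emph{no} full-support Bernoulli measure. Recursive inseparability is preserved under such a reduction: if a decidable set $S$ separated the first target class from the second (containing the former and disjoint from the latter), then $\{M : G_M\in S\}$ would be a decidable set containing $P_1$ and disjoint from $P_0$, contradicting inseparability of $(P_0,P_1)$. For $(P_0,P_1)$ one may take a classical pair such as $\{e:\varphi_e(e){\downarrow}=0\}$ and $\{e:\varphi_e(e){\downarrow}=1\}$ (identifying a machine with its index), or, closer to the tiling machinery used elsewhere in this section, the recursively inseparable pair formed by Wang tilesets admitting a periodic tiling and tilesets admitting no tiling at all.

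Since each $G_M$ must have only two states and be freezing, every cell changes at most once; I would fix the order $0<1$, so that $0$ is absorbing and the rule can only turn $1$s into $0$s, exactly as in the bootstrap setting of Theorem~\ref{theo:classicbootstrap}. The computation of $M$ therefore cannot be run in the time direction and must be encoded \emph{spatially}, as a tiling or space--time diagram, and merely \emph{checked} by the dynamics. Concretely I would use macro-cells (blocks of binary cells) so that a $1$-region survives forever precisely when it locally encodes a consistent, properly initialized record of $M$'s computation; wherever the encoding is locally violated the rule writes a $0$, and $0$s then erode adjacent inconsistent $1$s in bootstrap-percolation fashion. I would keep the CA monotone, so that survival of a finite $1$-pattern against the all-$0$ background already implies survival in every context.

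The two measure-theoretic extremes are then read off from $M$. If $M\in P_1$ (no valid unbounded record / no tiling), a compactness argument produces a uniform size beyond which every patch of the encoding carries a defect; the erosion is then supercritical, meaning that the positive density of defects present under any full-support Bernoulli $\mu$ triggers global erosion, so $G_M^t(\mu)\to\delta_{\overline{0}}$ for all such $\mu$. One can phrase this via Theorem~\ref{theo:genmunil} together with the equivalence between $\mu$-nilpotency and $\mu\bigl(\{c:F^\omega(c)=\overline{0}\}\bigr)=1$ for freezing CA already used above. If instead $M\in P_0$ (halting / periodic tiling), the halting witness is \emph{sealed} into a bounded, self-supporting block of $1$s which is a fixed point against the $0$ background, the analogue of the $2\times 2$ obstacle of the CA $F_M$ of Theorem~\ref{theo:classicbootstrap}. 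By monotonicity this finite obstacle is unerodable in any context, so it appears almost surely and survives under every full-support Bernoulli $\mu$; hence $\mu\bigl(\{c:G_M^\omega(c)=\overline{0}\}\bigr)=0$ and $G_M$ is $\mu$-nilpotent for no full-support measure.

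The main obstacle is the simultaneous realization of these two rigid extremes inside the very rigid class of two-state freezing CA: encoding and verifying a Turing computation with a single bit per cell and a single irreversible change per cell, while keeping the rule monotone, is the technical heart of \cite{salo2021bootstrap}. Equally delicate are the two percolation statements one must guarantee, namely genuine supercriticality in the $P_1$ case (erosion must win even when $\mu([0])$ is tiny, i.e.\ from a sparse defect set) and a genuinely context-independent finite obstacle in the $P_0$ case (survival must persist even when $\mu([0])$ is close to $1$, i.e.\ against an almost entirely absorbing background). I expect the hardest point to be the sealing step that converts the finite halting witness into such an unerodable obstacle without breaking the two-state freezing constraint.
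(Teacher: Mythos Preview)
The paper does not give a proof of this theorem: it is stated as a citation to \cite{salo2021bootstrap} and no proof or sketch appears in the tutorial. There is therefore nothing in the paper to compare your proposal against.

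Your proposal is not a proof but a plan, and you are explicit about this. The overall architecture you describe is the natural one and matches the spirit of the cited work: reduce from a recursively inseparable pair via a computable map $M\mapsto G_M$, encode the computation spatially (since freezingness forbids encoding it temporally), and arrange the two outcomes to land on opposite sides of the bootstrap-percolation trichotomy illustrated by $F_B$, $F_O$, $F_M$ in Theorem~\ref{theo:classicbootstrap}. You also correctly identify where the real work lies: making a two-state, one-change rule locally verify a computation record, and simultaneously guaranteeing genuine supercriticality on one side and a genuine finite unerodable obstacle on the other.

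One caution: you build monotonicity into your plan because it makes the obstacle argument clean, but the theorem as stated concerns all two-state freezing CA, not only monotone ones, and the construction in \cite{salo2021bootstrap} does not in fact produce monotone CA. Insisting on monotonicity may make the encoding step harder than necessary, or impossible; the finite-obstacle and supercritical-erosion conclusions can be obtained without it. Apart from this, what you have written is a fair high-level roadmap, but the substance of the result is precisely in the steps you flag as ``the technical heart'' and ``the hardest point'', and those are not addressed here.
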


A measure $\mu$ is said limit-computable if there is a computable map ${\phi}$ such that ${\displaystyle \lim_{n\to\infty}\phi(u,n)-\mu([u]) = 0}$ for each cylinder $[u]$.
One can check that if ${\displaystyle \mu' = \lim_{t\to\infty}F^t(\mu)}$ for some CA $F$ and $\mu$ a computable measure, then $\mu'$ is limit computable.
There is a remarkable converse to this observation.

\begin{theorem}[\cite{DelacourtM17,HELLOUIN_DE_MENIBUS_2016}]
  Let ${\spac=\Z^d}$ for some ${d\geq 1}$.
  Let $\mu$ be any translation-invariant and limit-computable measure on some alphabet.
  Then there exists a CA $F$ (on a possibly different alphabet) such that ${\displaystyle \mu=\lim_{t\to\infty} F^t(\mu_0)}$ where $\mu_0$ is the uniform Bernoulli measure on the alphabet of $F$.
\end{theorem}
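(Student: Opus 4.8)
The plan is to realize $\mu$ as the asymptotic output of a self-organizing computation seeded by the randomness of $\mu_0$, using the construction toolbox of \cite{Delacourt_2011} already evoked above. First I would take the alphabet of $F$ to be a product of several layers: a \emph{content} layer over the alphabet of $\mu$, a \emph{randomness} layer whose cells merely store their (frozen) initial uniform bits, and a \emph{computation} layer carrying zone markers together with a Turing head and its tape. Since $\mu_0$ is the uniform Bernoulli measure, the randomness layer provides, almost surely and at every scale, genuine independent bits; moreover any fixed finite ``seed'' pattern on the computation layer occurs somewhere almost surely, so the dynamics can bootstrap from noise.

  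Each seed starts a \emph{zone} that grows into the random background. Inside a live zone of current radius $R$ the computation does two things: it runs the limit-computable approximation $\phi$ to obtain the numbers $\phi(w,n)$ for all patterns $w$ on $\boule{R}$, with $n$ increasing with the age of the zone, and it reads fresh bits from the randomness layer to draw a pattern on $\boule{R}$ according to the \emph{approximate marginal} of $\mu$ given by these $\phi(w,n)$, writing the drawn pattern onto the content layer. When two zones meet, a conflict rule keeps the older (equivalently larger) one and erases the younger, and growth is arranged so that every cell is, from some time on, permanently covered by one surviving zone whose age—and hence whose precision parameter $n$ and radius $R$—tends to infinity. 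This is exactly the merging/destruction mechanism of \cite{Delacourt_2011}, and it guarantees that any fixed finite window ends up in the deep interior of an arbitrarily mature zone, with the active frontier receding to infinity.

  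To see weak-$*$ convergence, fix a cylinder $[u]$ with domain $D$. For large $t$, the probability under $F^t(\mu_0)$ that $D$ sits inside a surviving zone that has already redrawn its content from a precision-$n$ sample tends to $1$ by the self-organization above. Conditioned on this event, the content on the zone's ball was drawn from the marginal of $\mu$ on that ball up to the error $\max_w|\phi(w,n)-\mu([w])|$, and marginalizing such a block onto $D$ uses the consistency of the marginals of the translation-invariant measure $\mu$ to yield $\mu([u])$ in the limit; as $t\to\infty$ one has $n\to\infty$, so this error vanishes. The contributions of immature or just-overwritten regions have probability tending to $0$. Summing the two gives $F^t(\mu_0)([u])\to\mu([u])$, which is the desired convergence, and read on the content layer it recovers $\mu$. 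This is the converse to the easy observation, stated just before the theorem, that limit measures of computable measures are limit-computable.

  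The hard part will be the self-organization rather than the probabilistic accounting: one must design the growth, collision and re-sampling rules so that (i) almost surely every position is eventually and permanently governed by a single zone, (ii) that zone's computation is never interrupted and its precision parameter diverges, and (iii) each re-sampling overwrites the content cleanly and uses independent randomness, so that the blocks actually drawn carry the approximate-$\mu$ law and translation invariance survives in the limit. A second delicate point is that $\mu$ is only \emph{limit}-computable, so the error $\max_w|\phi(w,n)-\mu([w])|$ is not effectively bounded; this is harmless here precisely because convergence is required only at the level of measures as $t\to\infty$, so the uncontrolled-but-vanishing sampling error is absorbed into the weak-$*$ limit.
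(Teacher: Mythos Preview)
The paper does not give its own proof of this theorem: it is stated with citations to \cite{DelacourtM17,HELLOUIN_DE_MENIBUS_2016} and no argument is reproduced, so there is no in-paper proof to compare your attempt against. The only hint the paper offers is the remark immediately following the statement that ``the construction in the above theorem relies on CA that are not convergent'', together with the earlier description (before the $\Sigma_2^0$ and $\Pi_3^0$ hardness results) of the \cite{Delacourt_2011} toolbox: precursor seeds, growing well-behaved zones, and a merging/destruction process ensuring some computations survive forever.

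Your plan is consistent with that toolbox and with the actual constructions in the cited references: the layered alphabet, the seeding from $\mu_0$-noise, the growing zones that host Turing computations of the approximants $\phi(\cdot,n)$, the older-beats-younger collision rule, and the weak-$*$ argument via cylinders are all the right ingredients. You also correctly isolate the two genuine difficulties --- the self-organization guarantee and the fact that limit-computability gives no effective error bound --- and you handle the second one the right way (the vanishing-but-uncontrolled error is absorbed by taking $t\to\infty$).

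Two points where your sketch is looser than the actual proofs and would need work. First, ``fresh bits from the randomness layer'': since that layer is frozen at time $0$, freshness must be spatial, not temporal, so the computation in a zone must carefully consume bits from newly annexed cells and never reuse them across successive re-samplings; arranging this while the zone also grows and possibly absorbs younger zones is one of the technical cores of \cite{HELLOUIN_DE_MENIBUS_2016}. Second, the statement is that $F^t(\mu_0)\to\mu$ as measures on the \emph{full} alphabet of $F$, not merely on the content layer; your plan needs the computation and randomness layers to become asymptotically invisible (their non-trivial patterns have vanishing measure near any fixed window), which the cited constructions achieve by pushing all machinery to the receding frontier, but which your outline only implicitly assumes.
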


Interestingly, the construction in the above theorem relies on CA that are not convergent, although the particular orbit starting from $\mu_0$ (and actually many other initial measures) is itself convergent.
We have no idea about whether it is possible to realize the same limit measures with convergent CA.

\begin{question}
  What are the limit measures ${F^\omega(\mu_0)}$ where $F$ is convergent (or bounded change, or freezing) and $\mu_0$ is the uniform Bernoulli measure (or some computable measure)?
\end{question}

\section{Computational Power of Convergent Cellular Automata}
\label{sec:compuconv}

The convergence property dramatically restricts the possible dynamics of CA as we have seen so far.
One can therefore legitimately ask whether the obvious fact that general CA are computationally universal still holds for convergent, bounded-change and freezing CA.
This section tackles this question and outlines an answer in three steps: yes, even freezing CA in dimension 1 are computationally universal, but there is some loss of complexity, and for a convergent CA $F$, it is interesting to study the map ${c\mapsto F^\omega(c)}$ from a computational point of view.

\subsection{Embeddings of computationally universal systems}

Let's start by the obvious observation that the space-time diagrams of any 1D CA can be grown by a 2D freezing CA.

\begin{example}
  \label{ex:freezing2Dsimu1D}
  Any 1D CA $F$ with states $Q$ and neighborhood $V$ can be simulated by a 2D freezing CA $F'$ with
  states ${Q\cup\{\ast\}}$ as follow. Let ${V'=\{(v,-1):v\in V\}}$. A cell in a state from $Q$ never
  changes. A cell in state $\ast$ looks at cells in its $V'$
  neighborhood: if they are all in a state from $Q$ then it updates to
  the state given by applying $F$ on them, otherwise it stays
  in $\ast$. Starting from a all-$\ast$ configuration except on one
  horizontal line where it is in a $Q$-configuration $c_0$, this 2D
  freezing CA will compute step by step the space-time diagram of $F$
  on configuration $c_0$.
\end{example}

In dimension 1, it is also possible to embed any Turing machine computation (with a polynomial slowdown) inside a convergent CA using the zigzag trick of Example~\ref{exa:zigzag}: by adding more states, one can make the head actually compute like a Turing head and do one transition at each zigzag (see \cite[Proposition 5]{ollinger:hal-02266916} for details).

For 1D freezing CA, the embedding of computationally universal systems is much more constrained but still possible using Minsky machine \cite{GolOlThey15,Carton2018,ollinger:hal-02266916}.

\newcommand\cll[4]{\draw[fill=#3] (#1,#2) rectangle ++(1,1) node[pos=.5] {#4};}
\newcommand\ceh[3]{\cll{#1}{#2}{green!30!white}{#3}}
\newcommand\cec[3]{\cll{#1}{#2}{blue!30!white}{#3}}
\newcommand\ces[3]{\cll{#1}{#2}{red!50!white}{#3}}
\newcommand\ceb[2]{\cll{#1}{#2}{white}{$b$}}
\newcommand\cew[2]{\cll{#1}{#2}{gray}{$w$}}
\newcommand\cei[3]{\cll{#1}{#2}{yellow!50!white}{#3}}
\newcommand\cedh[2]{\cll{#1}{#2}{white}{$\cdots$}}
\newcommand\cedv[2]{\cll{#1}{#2}{white}{$\vdots$}} 

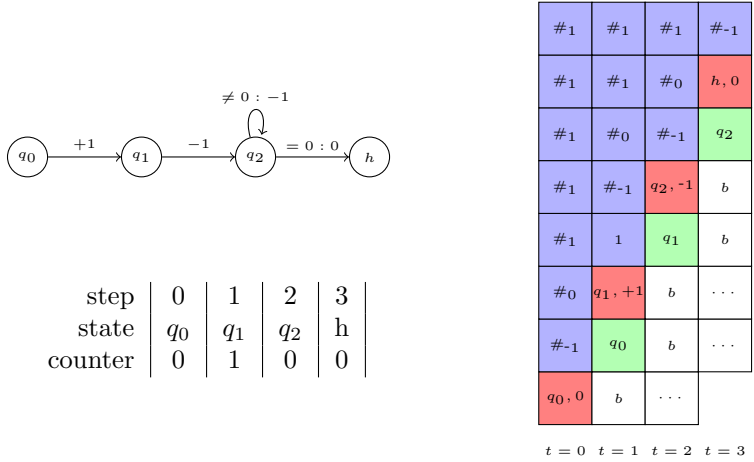
\begin{figure} 
  \begin{center}
    \begin{minipage}[c]{0.45\linewidth}
      \begin{center}
        \begin{tikzpicture}[scale=1]\tiny
          \tikzstyle{config} = [draw,circle,minimum size=15,fill=white]
          \node[config] (qz) at (0,0) {$q_0$};
          \node[config] (qu) at (1.5,0) {$q_1$};
          \node[config] (qd) at (3,0) {$q_2$};
          \node[config] (qh) at (4.5,0) {$h$};
          \draw[->] (qz) edge  node[above]{$+1$}  (qu);
          \draw[->] (qu) edge  node[above]{$-1$}  (qd);
          \draw[->] (qd) edge[loop above]  node[above]{$\neq 0 : -1$}  (qd);
          \draw[->] (qd) edge  node[above]{$=0 : 0$}  (qh);
        \end{tikzpicture}
        \end{center}\vskip 1cm
        \begin{center}
        \begin{tabular}{r|c|c|c|c|}
          step &0&1&2&3\\
          state & $q_0$&$q_1$&$q_2$ &h\\
          counter &0&1&0&0
        \end{tabular}
      \end{center}
    \end{minipage}
    \begin{minipage}[c]{0.5\linewidth}
      \begin{center}
      \begin{tikzpicture}[scale=.7]\tiny
        \def\-{\raisebox{.75pt}{-}}
        \draw (2.5,1.5) node {$t=0$};
        \draw (3.5,1.5) node {$t=1$};
        \draw (4.5,1.5) node {$t=2$};
        \draw (5.5,1.5) node {$t=3$};
        \ces{2}{2}{$q_0,0$}\ceb{3}{2}\cedh{4}{2} 
        \cec{2}{3}{$\#_{\-1}$}\ceh{3}{3}{$q_0$}\ceb{4}{3}\cedh{5}{3} 
        \cec{2}{4}{$\#_{0}$}\ces{3}{4}{$q_1,+1$}\ceb{4}{4}\cedh{5}{4} 
        \cec{2}{5}{$\#_{1}$}\cec{3}{5}{$1$}\ceh{4}{5}{$q_1$}\ceb{5}{5} 
        \cec{2}{6}{$\#_{1}$}\cec{3}{6}{$\#_{\-1}$}\ces{4}{6}{$q_2,\-1$}\ceb{5}{6} 
        \cec{2}{7}{$\#_{1}$}\cec{3}{7}{$\#_{0}$}\cec{4}{7}{$\#_{\-1}$}\ceh{5}{7}{$q_2$} 
        \cec{2}{8}{$\#_{1}$}\cec{3}{8}{$\#_{1}$}\cec{4}{8}{$\#_{0}$}\ces{5}{8}{$h,0$} 
        \cec{2}{9}{$\#_{1}$}\cec{3}{9}{$\#_{1}$}\cec{4}{9}{$\#_{1}$}\cec{5}{9}{$\#_{\-1}$} 
      \end{tikzpicture}
    \end{center}
    \end{minipage}
  \end{center}
  \caption{\label{fig:minsky}On the left, a Minsky machine with $1$ counter and a few steps of the execution starting from counter value $0$. On the right the corresponding space-time diagram of the freezing CA of Example~\ref{ex:minskyfreezing} encoding it. Counter value is encoded as the blue part in each trace: value $n$ is represented by state sequence ${1^n\#_{-1}\#_0\#_1}$. Green cells in the space-time diagram indicate position where all the required information to compute a Minsky transition is available locally. Cells containing the result of Minsky transitions are represented in red.}
\end{figure}

\begin{definition}
  A $k$-\emph{counter Minsky machine} is a $4$-tuple ${M=(Q_M,q_0,h,\tau)}$ where ${q_0,h\in Q_M}$ are the initial and halting states and \[\tau : Q_M\times\{0,1\}^k\rightarrow Q_M\times\{-1,0,1\}^k\] is its transition map, which verifies ${\tau(h,\cdot)=(h,(0,\ldots,0))}$. A \emph{configuration} of $M$ is an element of ${Q_M\times\N^k}$. $M$ transforms any configuration ${c=(q,(\chi_i)_{1\leq i\leq k})}$ in one step into configuration \[M(c)=(q',(\max(0,\chi_i+\delta_i))_{1\leq i\leq k})\] where ${(q',(\delta_i)_{1\leq i\leq k}) = \tau (q,(\min(1,\chi_i))_{1\leq i\leq k})}$. $M$ halts on input ${(\chi_i)_{1\leq i\leq k}\in\N^k}$ if there is a time $t$ such that ${M^t(q_0,(\chi_i)_{1\leq i\leq k})\in (h,\N^k)}$.
\end{definition}

\begin{example}\label{ex:minskyfreezing}
  Any $k$-counter Minsky machine $M$ can be embedded into a 1D freezing CA $F$ in the following sense. To simplify the exposition, let's take $k=1$.
  A configuration of $M$ at some step of the evolution is a pair ${(q,n)}$ where $q$ is a finite state and ${n\in\N}$ the current value of the counter.
  This configuration will be encoded into the trace of some cell ${z\in\Z}$ of $F$: the state $q$ will appear at cell $z$ at some time, and after the value $n$ will be encoded by the time interval between specific state changes occurring at $z$.
  Successive configurations of an orbit of $M$ are then encoded into successive cells of $F$ through the temporal traces.
  In order to allow $F$ to correctly produce an orbit where the trace at position $z+1$ encodes the configuration which is the image by $M$ of the configuration encoded into the trace at position $z$, the key point is to shift temporally the time interval containing the encoded configuration from position to position $z+1$.
  This technical condition allows to implement all the operations on a counter locally (zero test, increment and decrement).
  Concretely, if the configuration of $M$ at time $0$ is encoded into the time interval ${[0,\Delta_0]}$ at cell $0$, then the configuration at step $t$ of machine $M$ is encoded into the time interval ${[2t,2t+\Delta_t]}$ at cell $t$.
  See Figure~\ref{fig:minsky} for an example with details.
\end{example}

From the embedding of Minsky machines as in the above example, one expects undecidability results to be transferred to 1D freezing CA.
There are many ways to state such results, depending on the additional details implemented.
Let us mention the following one which essentially relies on a Minsky machine embedding but requires a more technical construction than the above example.
A freezing CA is always $k$-change for $k$ no smaller than the cardinal of its alphabet, but determining the minimal $k$ for which it is $k$-change is undecidable.

\begin{theorem}[{\cite[Theorem 6]{ollinger:hal-02266916}}]
  There exists a constant $k$ such that the following problem is undecidable: given a freezing 1D CA $F$, decide whether $F$ is $k$-change.
\end{theorem}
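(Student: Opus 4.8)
The plan is to reduce the halting problem for Minsky machines to the \emph{complement} of the $k$-change property, building on the freezing embedding of Example~\ref{ex:minskyfreezing}. I would fix the number of counters (say $2$, which is already universal) so that the encoding of a single Minsky configuration into the temporal trace of one cell uses a fixed, $M$-independent repertoire of symbol types; in particular the finite control states of $M$ occupy only a bounded, $M$-independent number of levels of the freezing order (one level for the state read, one for the state produced), so the longest chain of that order has length independent of $M$. Given $M$, I would build a freezing 1D CA $F_M$ that unrolls the orbit of $M$ from the grounded initial configuration $(q_0,\mathbf 0)$ diagonally in space-time, cell $t$ encoding the $t$-th configuration of $M$ inside a time window around step $2t$, so that the simulation front advances at speed $1/2$. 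In a faithful simulation the trace $\trac{F_M}{x}$ of every cell runs through a fixed number $k+1$ of distinct states, hence undergoes exactly $k$ changes, where $k$ is the ($M$-independent) constant just described.

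The device that detects halting is a single extra state $H$ lying strictly below every other state of the freezing order. A cell is permitted the transition into $H$ --- its potential $(k+1)$-th change --- only after it has already run through the entire encoding sequence (so that it has genuinely performed its $k$ earlier changes) and it reads, in the relevant neighboring position, the encoding of the halting control state $h$. Since $\tau(h,\cdot)=(h,\mathbf 0)$, an honest grounded computation exhibits $h$ exactly when $M$ halts on empty counters: in that case a unique cell reaches $H$ and performs a $(k+1)$-th change, so $F_M$ is not $k$-change; along an honest non-halting grounded orbit no cell ever sees $h$, so every cell changes at most $k$ times. Thus, modulo robustness, $F_M$ is $k$-change if and only if $M$ does not halt, a many-one reduction from the $\Pi_1^0$-complete non-halting problem that yields undecidability of the $k$-change property for this fixed $k$.

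The hard part --- and the reason this needs more care than Example~\ref{ex:minskyfreezing} --- is robustness against arbitrary initial configurations, since $k$-change quantifies over all of $\confs$. Three things must be controlled. First, starting a cell low in the order does not help an adversary: the number of \emph{changes} is the number of distinct values taken by the trace over time, so reaching the $(k+1)$-th change forces a cell to descend through all levels, which only a surrounding computation can drive. Second, every locally invalid pattern must be turned into a spreading error state $e$ placed in the order so that any cell that ever reaches $e$ has made at most $k$ changes; the transitions are arranged so that inconsistencies are caught before a cell exhausts its change budget. Third, and most delicate, one must prevent \emph{ungrounded} histories --- locally consistent records not rooted at a true $(q_0,\mathbf 0)$ start --- from spuriously producing $h$. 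Here the speed gap is decisive: a finite ungrounded history has a leftmost active cell whose left neighbor fails the predecessor test, emitting an error that travels right at speed $1$, strictly faster than the speed-$1/2$ activation front, so the error freezes each still-dormant cell before the computation can, erasing the history before any $h$ appears dynamically. Bi-infinite locally consistent histories, which possess no leftmost cell, are excluded by first augmenting $M$ with a step counter that increments at every step: as counter values lie in $\N$, no such history can extend infinitely into the past, so every surviving history is grounded at a checkable initial configuration. Checking that these mechanisms mesh --- in particular that grounding together with the speed gap keeps the change count at exactly $k$ everywhere except at the single honest halting witness --- is where essentially all the work goes.
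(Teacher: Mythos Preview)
Your approach is essentially the same as the paper's: both reduce the halting problem for Minsky machines to the complement of $k$-change by designing the freezing encoding so that (i) there is a unique longest possible chain of state changes at any cell, (ii) this longest chain is realized in some orbit exactly when the simulated machine halts, and (iii) the length of this chain is independent of the machine $M$ (because the control states of $M$ occupy only a bounded number of levels in the freezing order). The paper gives only a brief intuition sketch and defers the technical construction to \cite{ollinger:hal-02266916}; you supply considerably more detail on the robustness side (spreading error states, the speed-gap argument to eliminate ungrounded finite histories, and the step-counter trick to rule out bi-infinite ones), which is indeed where the real work lies.

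Two minor remarks. First, your phrase ``the number of changes is the number of distinct values taken by the trace'' is off by one and only correct because the CA is freezing; it would be cleaner to say that in a freezing CA a trace with $k$ changes visits exactly $k+1$ states, so forcing $k+1$ changes means visiting $k+2$ states, and the design must ensure that the only chain of that length in the order passes through $H$ via the full encoding sequence. Second, your claim that ``reaching the $(k+1)$-th change forces a cell to descend through all levels'' is not automatic: it requires that every maximal chain of length $k+2$ in the freezing order be the intended one, which is a design constraint you should state explicitly rather than derive from the position of $H$ alone.
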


The intuition for the construction in the proof of the above theorem is as follows:
\begin{itemize}
\item first, the construction is such that the largest sequence of state changes that can possibly occur in some orbit at a given cell is unique, no other sequence of state changes can be as long as this one;
\item second, if this sequence appears in some orbit, it enforces the presence of a correct halting computation of the Minsky machine; conversely, if the simulated Minsky machine halts, there exists an orbit where this special sequence of state changes occurs at some cell;
\item third, the length of this sequence, denoted $L$, is independent of the number of states of the Minsky machine being simulated (intuitively, only the initial and halting states of the machine appear in it).
\end{itemize}

With these ingredients, one has a reduction from the halting problem of Minsky machines to the problem of whether a freezing CA is $(L-1)$-change.

\subsection{Dimension 1: complexity gap between bounded change and convergent CA}

There is a strong information flow bottleneck in bounded change CA that becomes critical in 1D.
It is not the case for convergent CA.
This can be formulated using communication complexity or classical computational complexity of the prediction problem \cite{vollmar81,GolOlThey15,ollinger:hal-02266916} (see also \cite{GolesMWT21} which generalizes the result to other $\spac$). The prediction problem associated to a CA asks for the state of a cell after a given time starting from a given input, a possible formalization is as follows: given an input pattern ${u\in Q^{\boule{tr}}}$ where $r$ is the radius of the considered CA, the problem is to determine the value of cell $\cspac$ after $t$ steps starting from a configuration ${c\in[u]}$ (note that by choice of the domain of $u$, this does not depend on $c$).
The prediction problem has communication complexity at most $\log(n)$ and is in class NL for 1D bounded change CA, while it can be $\Omega(\sqrt{n})$ and P-complete for 1D convergent CA.

Instead of presenting the above results and their formal setting, we are going to illustrate this difference between bounded change and convergent CA through the problem of recognizing palindromes.
Before stating the result, let's formalize the notion of language recognition by CA under time constraints (but without space constraint).
Say a language ${L\subseteq\{0,1\}^*}$ is recognized by a CA $F$ on ${Q^\Z}$ in time ${\tau:\N\to\N}$ if ${\{0,1,B,A,R\}\subseteq Q}$ (where $B$ is a blank state, $A$ an accepting state and $R$ a rejecting state) and for any ${n\in\N}$ and any ${u\in\{0,1\}^n}$, the orbit of configuration $c^u$ defined by 
\[c^u(z)=
  \begin{cases}
    u_z&\text{ if $0\leq z<n$},\\
    B&\text{ else,}
  \end{cases}
\]
is such that:
\begin{itemize}
\item there is ${t\leq\tau(n)}$ such that ${F^t(c^u)_0\in\{A,R\}}$,
\item for the minimal such time $t$ it holds 
  \[F^t(c^u)_0=A\iff u\in L.\]
\end{itemize}

Finally, PAL is the language of palindromes on alphabet ${\{0,1\}}$, \textit{i.e.} words $u$ such that ${u_i=u_{n-1-i}}$ for ${0\leq i< n=|u|}$.
The next theorem states that PAL is hard to recognize with bounded change CA, while it is easily recognized by convergent CA.
Note that limitations of bounded change (or bounded communication) CA were established previously in different settings: in \cite{vollmar81} language recognition is considered with a space limitation (only the cells initially containing the input can be modified), and in \cite{KutribM10a} the focus is on real-time recognition (\textit{i.e.} ${\tau(n)=n}$ ).

\begin{theorem}\label{theo:palrec}
  Let ${\spac=\Z}$. Suppose that a $k$-change CA can recognize PAL in time ${\tau(n)}$, then the recognition time is at least exponential, \textit{i.e.} ${\tau(n)\geq \alpha^n}$ for some ${\alpha>1}$ and large enough $n$. However, there exists a convergent CA that recognizes PAL in quadratic time.
\end{theorem}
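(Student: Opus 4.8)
The plan is to treat the two assertions separately: the quadratic convergent recognizer is a direct construction, while the exponential lower bound for $k$-change CA is the substantial part and rests on a crossing-sequence argument exploiting the bounded-change hypothesis.

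For the \textbf{upper bound}, I would build a convergent CA implementing the naive fold-and-compare algorithm with a single zigzagging head, in the spirit of Example~\ref{exa:zigzag}. On a well-formed input zone the head starts at the left end, picks up the leftmost still-unmatched symbol, travels to the rightmost still-unmatched symbol and compares them; on a match it freezes both endpoints into a ``done'' state and returns for the next pair, on a mismatch it emits a reject signal toward cell $0$. When the matched region closes up in the middle it emits an accept signal toward cell $0$. Each round trip costs $O(n)$ steps and there are $O(n)$ of them, so cell $0$ enters $\{A,R\}$ within $O(n^2)$ steps, and that entry is correct and unique. Convergence on all of $Q^\Z$ (not just on the inputs $c^u$) is obtained exactly as in Example~\ref{exa:zigzag}: a spreading error state $e$ is triggered by any locally forbidden pattern, well-formed zones only ever shrink, and the accept/reject signals are emitted once; hence every cell's trace is eventually constant.

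For the \textbf{lower bound}, fix a $k$-change CA $F$ of radius $r$ recognizing PAL in time $\tau$, and consider even length $n=2m$ with the cut between cells $m-1$ and $m$. For an input $w$ I would record its \emph{crossing sequence} $X(w)$: the sequence over $t=0,1,\dots,\tau(n)$ of the states in the fixed-width boundary strip $(F^t(c^w)_{m-r},\dots,F^t(c^w)_{m+r-1})$. The key combinatorial lemma is a swapping property: if two inputs $w=w_Lw_R$ and $w'=w'_Lw'_R$ of the same length satisfy $X(w)=X(w')$, then the hybrid $w_Lw'_R$ reproduces the left-side evolution of $w$ and the right-side evolution of $w'$; this is proved by a straightforward induction on $t$, the only nontrivial case being a boundary cell, whose update reads cells from the opposite side that agree precisely because the crossing sequences coincide. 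In particular cell $0$ gives the same verdict on $w_Lw'_R$ as on $w$. Applying this to the fooling set $\{u\cdot\mathrm{rev}(u):u\in\{0,1\}^m\}$ of palindromes: if two distinct such palindromes shared a crossing sequence, the hybrid $u\cdot\mathrm{rev}(u')$ would be accepted while failing to be a palindrome, a contradiction. Hence these $2^m$ palindromes have pairwise distinct crossing sequences.

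It then remains to count how many crossing sequences a $k$-change CA can produce in time $T=\tau(n)$, and here the bounded-change hypothesis is decisive: each of the $2r$ cells in the boundary strip changes at most $k$ times along the orbit, so its trace on $[0,T]$ is determined by its initial state together with at most $k$ pairs (a time in $\{1,\dots,T\}$ and a new state in $Q$); there are thus at most $(c_0 T)^{k}$ such single-cell traces for a constant $c_0$ depending only on $k$ and $|Q|$, hence at most $(c_0 T)^{2rk}$ crossing sequences overall, a quantity polynomial in $T$ of constant degree. Combining with the fooling-set count gives $2^{m}\le (c_0 T)^{2rk}$, so $\log_2 T \ge \frac{m}{2rk}-\log_2 c_0$ and therefore $\tau(n)=T\ge \alpha^{n}$ with $\alpha=2^{1/(4rk)}>1$ for all large $n$ (the factor $c_0$ being absorbed; odd lengths are handled identically by splitting at the central symbol). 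The main obstacle is exactly this final coupling: one must recognize that the $\Omega(n)$ communication inherently forced by PAL collapses, for a bounded-change CA, into only $O(\log T)$ bits of genuinely timestamped information crossing the cut, which is what converts the linear communication lower bound into an exponential time lower bound.
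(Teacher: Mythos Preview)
Your proposal is correct and follows essentially the same approach as the paper's proof: a crossing-sequence/fooling-set argument at the midpoint for the lower bound (with the same counting $(|Q|\tau(n))^{2rk}$ and the same constant $\alpha=2^{1/(4rk)}$), and a zigzag-head construction in the spirit of Example~\ref{exa:zigzag} for the quadratic convergent recognizer. If anything, you are more explicit than the paper in stating and justifying the swapping lemma, which the paper leaves implicit in the line ``By the Equality above''.
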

\begin{proof}
  Suppose by contradiction that a $k$-change CA $F$ on ${Q^\Z}$ and of radius $r$ recognizes PAL in time ${\tau(n)\leq\alpha^n}$ with ${\alpha<2^{\frac{1}{4rk}}}$.
  For ${n\in\N}$ and ${u\in\{0,1\}^n}$ denote by ${T_z^{u,n}}$ the prefix of the trace of length ${\tau(n)}$ and of width ${2r}$ starting at position $z$ in the orbit of $c^u$: 
  \[T_z^{u,n}: t\in\{0,\ldots,\tau(n)\}\mapsto F^t(c^u)_z,\ldots, F^t(c^u)_{z+2r}.\]
  By the $k$-change property, there are at most ${(|Q|\tau(n))^{2rk}\leq |Q|^{2rk}\alpha^{2rkn}}$ such prefixes (for all choices of $u$ and $z$) because for each of the $2r$ columns of the trace, there are at most $k$ changes and each change can be described by a new state and a time step between $1$ and $\tau(n)$.
  On the other hand, there are ${2^{\frac{n}{2}}}$ words in PAL of even length $n$, so for large enough even $n$ and by choice of $\alpha$ there are ${u\neq v}$ in PAL such that 
  \begin{equation}
    T_{n/2}^{u,n}= T_{n/2}^{v,n}\label{eq:eqtrace}
  \end{equation}
  Then, consider the word ${w = u_0\cdots u_{n/2-1}v_{n/2}\cdots v_{n-1}}$. By the Equality~(\ref{eq:eqtrace}) above, we have that ${\trac{F}{c^u}(t)=\trac{F}{c^w}(t)}$ for all ${0\leq t\leq\tau(n)}$. So in particular $F$ must accept $w$ in time at most $\tau(n)$ because it does for $u$. However, $w$ is not a palindrome, which yields the desired contradiction.

  It is not difficult to adapt Example~\ref{exa:zigzag} to make a convergent CA that recognizes PAL in quadratic time.
  Roughly, using extra states, any input word ${BuB}$ is turned in one step into a segment of state with two components: the binary component that keeps the information of $u$ and the zigzag component of the form ${\tikz[baseline,scale=.3]{\stateZ{0}{0}\stateOZO{1}{0}\stateOO{2}{0}}\cdots\tikz[baseline,scale=.3]{\stateOO{0}{0}\stateZ{1}{0}}}$, where the head state $\tikz[baseline,scale=.3]{\stateOZO{0}{0}}$ holds an extra bit, initialized to $u_0$.
  Then, with the zigzag movement of the head, this bit held inside the right-moving head $\tikz[baseline,scale=.3]{\stateOZO{0}{0}}$ is compared to the bit in the binary component when the head bounces on the right border (\textit{i.e.} $u_0$ is compared to $u_{n-1}$ at the end of the first zig).
  If the comparison fails, an error state $e$ appears which is interpreted as a rejecting state.
  If not, the head turns into $\tikz[baseline,scale=.3]{\stateZZO{0}{0}}$ (which does not hold any additional information) and goes back to the left boundary as in the rule of Example~\ref{exa:zigzag}.
  Once the left boundary is reached again, the head becomes $\tikz[baseline,scale=.3]{\stateOZO{0}{0}}$ and copies the bit from the binary component at this position (\textit{i.e.} $u_1$ at the start of the second zig).
  When the zone is shrunk down to a single position, an accepting state is generated that spreads to the left to reach the position which initially was holding value $u_0$ (the accepting state spreads over any other state except $e$, which is not present in the orbit of $c^u$ if $u$ is in PAL).
  The additional mechanism on top of the rule of Example~\ref{exa:zigzag} does not compromise the convergent property: the analysis is the same as in Example~\ref{exa:zigzag} with the presence of two spreading states instead of one (rejecting state $e$ spreading over the accepting state which spreads over any other state).
\end{proof}

Note that the above theorem does not say anything about the existence of a bounded-change CA recognizing PAL in super-exponential time.
It would be surprising that such a CA exists because the information to compare between the two halves of a palindrome is too dense to be dealt with and transmitted within the bounded change constraint, however we have no proof of this impossibility.

\begin{question}
  Is there a bounded change CA recognizing PAL (with arbitrary time and space) or any language with linear communication complexity in the sense of \cite{hromkovic97}?
\end{question}

We don't think that the quadratic time recognition of PAL by convergent CA is optimal in Theorem~\ref{theo:palrec}, however we have no idea on how to recognize PAL in real-time (\textit{i.e.} ${\tau(n)=n}$) with a convergent CA. More generally, we don't know whether the recognition time of languages by convergent CA can always be as good as the recognition time for general CA.

\subsection{Complexity of limit configurations}

Any convergent CA ${F}$ on $\confs$ induces a map ${F^\omega:\confs\to\confs}$.
We say that a configuration ${c\in\confs}$ is computable if it is computable as a map.
For any $t$, the map ${F^t}$ transforms computable configurations into computable configurations, however there is no reason to expect that the same holds for $F^\omega$.

Let us first make the following observation: the computability of the limit configuration is linked to the computability of freezing times.

\begin{lemma}
  \label{lem:freezetime}
  Let $F$ be a convergent CA over $\confs$ and ${c\in\confs}$ be any computable configuration. Then the map ${z\mapsto F^\omega(c)_z}$ is Turing reducible to the map ${z\mapsto\freezt{F}{c}{z}}$.
  Moreover, if $F$ is freezing, both maps are Turing-equivalent.
\end{lemma}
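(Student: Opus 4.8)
The plan is to handle the two reductions separately, in each case building a Turing machine equipped with the relevant oracle that can also simulate $F$ on $c$. The common ingredient is that, since $c$ is computable and $F$ has finite radius $r$, every finite-window value $F^t(c)_z$ is effectively computable: $F^t(c)_z$ depends only on $\restr{c}{z+\boule{rt}}$, which is a finite set of states one can read off and then iterate the local rule $t$ times.

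For the reduction of $z\mapsto F^\omega(c)_z$ to $z\mapsto\freezt{F}{c}{z}$, I would, on input $z$, first query the oracle to obtain $t_0=\freezt{F}{c}{z}$. By the very definition of the freezing time the cell $z$ is constant from step $t_0$ onward, hence $F^\omega(c)_z=F^{t_0}(c)_z$; I then compute $F^{t_0}(c)_z$ as above and output it. This direction uses only convergence, not freezing.

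For the converse direction, assuming $F$ is freezing and given an oracle for $z\mapsto F^\omega(c)_z$, the key observation I would exploit is that the orbit of a single cell is non-increasing in the freezing order: $c_z\geq F(c)_z\geq F^2(c)_z\geq\cdots$. Hence once the cell reaches its limit value $q=F^\omega(c)_z$ at some step $t$, it can never move again: any later value is $\leq q$ by monotonicity, and were it strictly below $q$ the eventual constant value of the cell's orbit would also be strictly below $q$, contradicting that this value equals $q$. This shows $\freezt{F}{c}{z}=\min\{t:F^t(c)_z=q\}$, the first hitting time of the limit value. The reduction then queries the oracle for $q$, simulates $F$ from $c$ computing $F^0(c)_z,F^1(c)_z,\ldots$ and halts at the first $t$ with $F^t(c)_z=q$ (convergence guarantees termination), outputting $t$.

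The only delicate step is this monotonicity argument, and it is precisely what fails for general convergent CA: without freezing a cell may visit its limit value and later leave it, so the first hitting time of $q$ need not be the freezing time, and knowledge of the limit alone carries no information about when the last change occurs. This is exactly why I expect Turing-equivalence to hold only under the freezing hypothesis, while the first reduction survives for all convergent CA.
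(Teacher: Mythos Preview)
Your proposal is correct and follows essentially the same approach as the paper: compute $F^{\freezt{F}{c}{z}}(c)_z$ for the first reduction, and for the converse under the freezing hypothesis, output the first $t$ with $F^t(c)_z=F^\omega(c)_z$. You supply more justification than the paper does --- in particular the monotonicity argument showing that the first hitting time of the limit value really is the freezing time --- but the underlying idea is identical.
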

\begin{proof}
  To compute ${F^\omega(c)_z}$ from ${\freezt{F}{c}{z}}$, it is sufficient to compute ${F^{\freezt{F}{c}{z}}(c)_z}$.
  Reciprocally, if $F$ is supposed freezing, $\freezt{F}{c}{z}$ can be computed from ${F^\omega(c)_z}$ as it is the first time step $t$ such that ${F^t(c)_z=F^\omega(c)_z}$.
\end{proof}

For bounded change 1D CA, it is impossible to produce uncomputable limits from computable initial configurations.
The argument of the following proof is due to G. Richard.

\begin{theorem}[\cite{ollinger:hal-02266916}]
  \label{thm:computablelimits}
  For any 1D bounded-change CA $F$ and any computable configuration $c$, ${F^\omega(c)}$ is computable.
\end{theorem}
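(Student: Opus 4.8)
The plan is to reduce the problem to the computation of freezing times and then to localize that computation. By Lemma~\ref{lem:freezetime} it is enough to compute the freezing times $\freezt{F}{c}{z}$; in fact it is enough to produce, for each cell $z$, a computable time $T(z)\ge\freezt{F}{c}{z}$, since then $F^\omega(c)_z=F^{T(z)}(c)_z$ and the right-hand side is computable because $c$ is computable and each $F^t(c)_z$ is. Thus the entire difficulty is to \emph{certify}, from a finite observation, that cell $z$ has stopped changing. Fixing the radius $r$ of $F$, I will call a block $[y,y+r-1]$ a \emph{wall from time $T$} if it stays constant at all times $\ge T$. The width $r$ is the point: a cell at position $x\le y-1$ only ever reads cells up to $x+r\le y+r-1$, so once such a wall is installed it acts as a fixed boundary condition and the half-line to its left evolves autonomously (symmetrically on the right).

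The main reduction is then as follows. Suppose we can locate, to the left and right of $z$, two walls $[a,a+r-1]$ and $[b,b+r-1]$ with $a+r\le z\le b-1$ and $b$ large enough that the strip between them closes its own update, both constant from a common time $T$, and suppose we can \emph{certify} this. Then the finite strip $S=\{a+r,\dots,b-1\}$ containing $z$ evolves, for $t\ge T$, as a deterministic finite system on the state set $Q^{S}$ with constant boundary values read off the two walls. Being finite and deterministic it is eventually periodic, and since the walls are genuinely frozen the simulation reproduces the true orbit on $S$, so the bounded-change hypothesis applies cell by cell and forces the period to be $1$: the strip reaches a fixed point within at most $(\#Q)^{\#S}$ steps, a computable bound. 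Simulating the strip for that many steps and reading position $z$ yields $F^\omega(c)_z$.

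It remains to \emph{find} and \emph{certify} such walls algorithmically, by dovetailing the search over $T$, $a$ and $b$ while running the simulation forward. The only unconditional certificate available from a finite prefix is the change budget itself: as soon as a cell has been observed to change $k$ times it can never change again. The substance of the argument is therefore to show that, on each side of $z$, a wall whose frozenness is certifiable in finite time must eventually appear. This is exactly where one-dimensionality and bounded change are both indispensable: across any width-$2r$ cut at most $2rk$ state changes occur over the whole history, so only a bounded amount of information can flow past a column, and the activity must settle far enough from $z$.

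I expect the main obstacle to be precisely this termination step — turning the global ``bounded information flow across a cut'' statement into a finite, verifiable certificate that forces the dovetailed search to halt. The naive certificates all fail: a block merely observed constant for a long time may still change later, and a block whose cells happen not to exhaust their $k$ changes is never certified by budget alone, so one must argue more carefully (for instance via a potential/energy function that decreases across the cut and whose total admissible decrease is finite) that a \emph{certifiable} wall is nevertheless unavoidable. That some such genuinely one-dimensional, bounded-change argument is required, rather than a soft compactness trick, is consistent with Theorem~\ref{theo:palrec}: for convergent-but-not-bounded-change CA unboundedly much information may cross a cut, and the analogous computation of the limit can indeed fail.
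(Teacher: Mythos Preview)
Your overall architecture --- locate certifiably frozen walls on each side of $z$, then simulate the finite strip between them until it reaches a fixed point --- is exactly the paper's approach. You have also correctly isolated the only real difficulty: the budget certificate ``this cell has changed $k$ times'' may simply never fire, because perhaps no cell to the left of $z$ ever uses its full quota of $k$ changes. That is a genuine gap in your argument, and the gesture toward an energy function does not close it: knowing that the total number of changes across a cut is finite does not tell you \emph{when} they are over.

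The paper's resolution is short and, once seen, almost inevitable: exploit non-uniformity. The statement only asks that $F^\omega(c)$ be computable for this fixed $c$, not that the algorithm be uniform in $c$, so one may hardwire into the algorithm finitely many constants that depend on $c$. Let $L$ be the largest integer such that infinitely many cells $z<0$ undergo exactly $L$ changes (this exists since all change-counts lie in $\{0,\dots,k\}$). Then only finitely many negative cells undergo more than $L$ changes, so there is a threshold $z_L$ with the property that every cell $\le z_L$ changes at most $L$ times; and infinitely many of those cells change \emph{exactly} $L$ times. Define $R,z_R$ symmetrically on the right, and hardcode $L,R,z_L,z_R$.

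Now the dovetailed search is guaranteed to terminate with a valid certificate: look for $z_1\le\min(z,z_L)$ observed to have changed $L$ times and $z_2\ge\max(z,z_R)$ observed to have changed $R$ times. Such $z_1,z_2$ exist (infinitely many cells attain the value $L$, resp.\ $R$), so they will eventually be found; and because $z_1\le z_L$, the cell $z_1$ can change at most $L$ times, so having observed $L$ changes is a \emph{valid} freezing certificate even though $L$ may be strictly smaller than $k$. The rest is your finite-strip simulation.

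The missing idea, then, is precisely this: replace the uniform budget $k$ by the non-uniform, side-specific budget $L$ (resp.\ $R$) which is simultaneously (i) attained infinitely often, hence eventually observed, and (ii) a true upper bound beyond a hardcoded threshold, hence a genuine certificate of freezing.
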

\begin{proof}[sketch]
  We restrict without loss of generality to $F$ of radius $1$ and fix a computable configuration $c$.
  If we know the number of state changes in the orbit of $c$ at cells $z_1$ and $z_2$, with ${z_1<z_2}$, then we can compute the value ${F^\omega(c)_z}$ for all ${z\in[z_1,z_2]}$.
  Indeed, we can compute ${F^t(c)}$ on cells ${[z_1,z_2]}$ for increasing values of $t$ until the correct number of changes is observed at $z_1$ and $z_2$.
  From that time on, the evolution of cells in the segment ${[z_1,z_2]}$ is independent of the context since cells $z_1$ and $z_2$ no longer change and $F$ has radius $1$. So we can compute the evolution until this set of cells reaches a fixed point, which is then the value they have in ${F^\omega(c)}$.

  The number of changes at any cell is bounded, and there is a maximal value $L$ for which there are infinitely many positions ${z<0}$ with exactly $L$ changes. Moreover, there is a limit position $z_L$ to the left of which no cell has more than $L$ changes. The same is true for positions ${z>0}$ giving the corresponding constants $R$ and $z_R$.
  
  Then the algorithm to compute ${F^\omega(c)_z}$ given $z$ is the following: compute larger and larger portions of the space-time diagram around position $z$ until finding ${z_1\leq z \leq z_2}$:
  \begin{enumerate}
  \item ${z_1\leq z_L}$ and ${z_R\leq z_2}$,
  \item the state of $z_1$ has changed $L$ times and the state of $z_2$ has changed $R$ times.
  \end{enumerate}
  Then it is sufficient to apply the algorithm of the above to compute ${F^\omega(c)_{[z_1,z_2]}}$ and therefore obtain ${F^\omega(c)_z}$.
\end{proof}

Interestingly, the above limitation for bounded change CA disappears when considering convergent CA.

\begin{theorem}[\cite{ollinger:hal-02266916}]
  There exists a 1D convergent CA $F$ and a computable configuration $c$ such that ${F^\omega(c)}$ is uncomputable.
\end{theorem}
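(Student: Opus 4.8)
The plan is to exhibit a CA that is convergent but \emph{not} bounded-change, so that Theorem~\ref{thm:computablelimits} does not apply, and to exploit precisely the feature that bounded-change CA lack: the ability to carry an unbounded Turing computation while still freezing every individual cell. The starting point is the zigzag CA of Example~\ref{exa:zigzag}, whose head can be upgraded, by adding states, into a genuine Turing head performing one transition per zigzag (as recalled just before Theorem~\ref{thm:computablelimits}, see \cite{ollinger:hal-02266916}). Convergence of that base CA comes from the spreading error state $e$, which neutralizes any malformed input, together with the shrinking-zone analysis; the whole difficulty will be to retain convergence on \emph{all} configurations while allowing, on one carefully chosen computable $c$, computations that may run forever.

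First I would fix a standard enumeration $(M_n)_{n\in\N}$ of Turing machines and design a computable configuration $c$ that lays out, along $\Z$, an infinite family of independent simulation gadgets, the $n$-th of them initialized to run $M_n$ on empty input and equipped with a designated \emph{answer cell} $z_n$. The intended semantics is that $F^\omega(c)_{z_n}=A$ if and only if $M_n$ halts, the cell $z_n$ starting in a neutral state and being switched (once) to the accepting state $A$ only if and when the simulation of $M_n$ halts. Since the placement of the gadgets and the finite descriptions of the $M_n$ are computable, the map $z\mapsto c_z$ is computable, so $c$ is indeed a computable configuration.

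The crucial design constraint is convergence. A gadget whose machine halts settles in finite time, so its cells freeze and $z_n$ is set to $A$ and then frozen. A gadget whose machine does not halt must, on the contrary, push its region of activity off to infinity, so that every fixed cell it ever touches is abandoned after finitely many steps and therefore stabilizes; its answer cell $z_n$ then simply never changes. Because the CA is convergent but not bounded-change, this is possible: there is no uniform bound on the number of changes per cell, yet each individual cell changes only finitely often. Once convergence on every configuration is secured (using the spreading state to erase malformed zones, exactly as in Example~\ref{exa:zigzag}, and arranging the valid dynamics so that each cell is visited by only finitely many computation steps), the theorem follows: were $z\mapsto F^\omega(c)_z$ computable, reading its value at $z_n$ would decide whether $M_n$ halts, contradicting the undecidability of the halting problem. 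Equivalently, by Lemma~\ref{lem:freezetime}, this amounts to making the freezing times $\freezt{F}{c}{z_n}$ encode the (possibly infinite) halting times of the $M_n$.

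The main obstacle is precisely the parenthetical point above: organizing infinitely many potentially non-terminating computations on the one-dimensional lattice so that the global rule is convergent on \emph{all} inputs while each gadget faithfully and independently simulates its machine. In one dimension a process that escapes to infinity sweeps an entire half-line, so one must prevent distinct gadgets from colliding and prevent any single cell from being re-entered infinitely often by successive computations. This careful spatial and temporal separation of the gadgets — rather than the Turing-machine embedding itself, which is by now routine — is the real technical core, and it is exactly the mechanism by which the bounded-change obstruction of Theorem~\ref{thm:computablelimits} is circumvented.
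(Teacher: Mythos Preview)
The tutorial does not prove this theorem; it is stated with a bare citation to \cite{ollinger:hal-02266916}, so there is no in-paper argument to compare against. Your outline is consistent with the tools the tutorial points to (the zigzag CA of Example~\ref{exa:zigzag} equipped with a Turing head, and a halting-problem reduction via answer cells $z_n$), and you correctly isolate where the difficulty lies.

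But what you have is a plan, not a proof, and the part you defer is the whole construction. The phrase ``push its region of activity off to infinity'' cannot be taken literally for infinitely many independent gadgets on $\Z$: distinct non-halting gadgets would have to share the same half-line, and you give no mechanism for them to interleave without some cell being crossed by infinitely many successive computations. Nor can they all drift rightward at the same speed, since then no gadget can grow its tape and they degenerate to finite automata. You name this as the ``real technical core'' and then stop. Until you specify concretely how the gadgets are laid out and evolve so that (a) each cell of $\Z$ is touched only finitely often, (b) no two gadgets ever collide, and (c) each $M_n$ is nonetheless simulated for unboundedly many steps with its verdict delivered to the fixed cell $z_n$, the argument has a genuine gap: you have reduced the theorem to its hard part and left that part open.
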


\begin{figure}
  \centering
  \includegraphics[width=.25\textwidth]{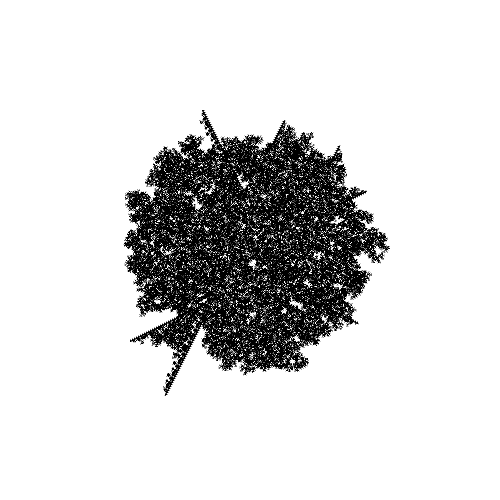}
  \includegraphics[width=.25\textwidth]{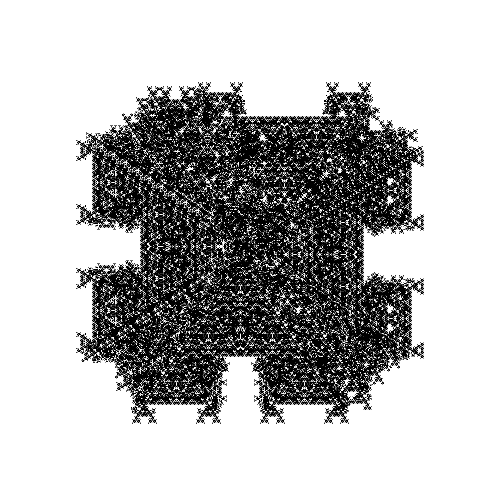}
  \includegraphics[width=.25\textwidth]{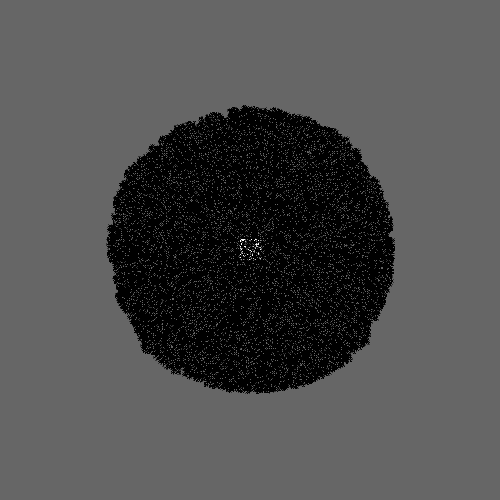}
  \vskip 1mm
  \includegraphics[width=.25\textwidth]{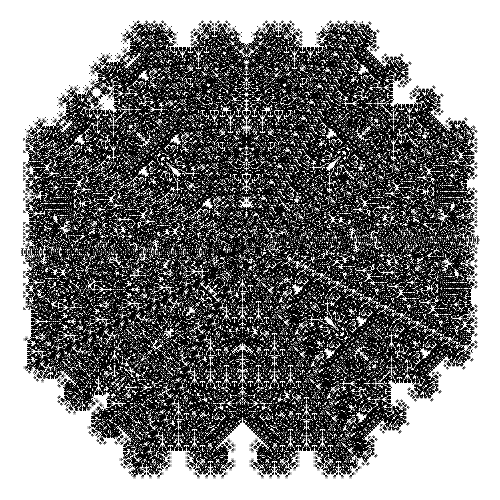}
  \includegraphics[width=.25\textwidth]{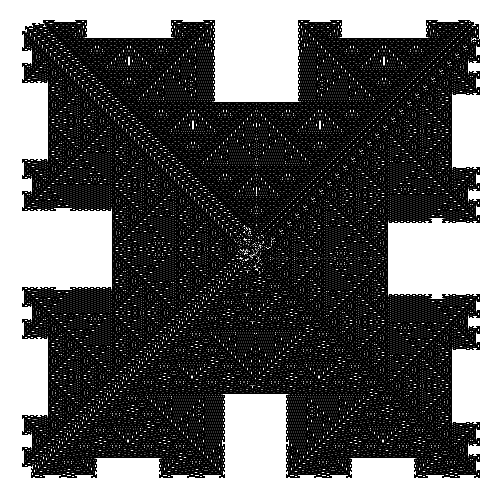}
  \includegraphics[width=.25\textwidth]{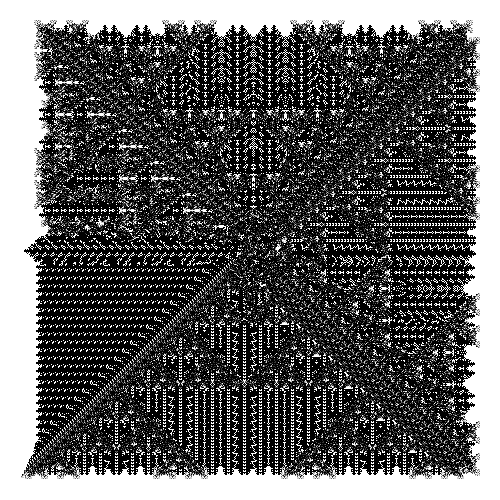}
  \caption{Examples of configurations obtained after some time from a finite seed by randomly chosen 2D freezing CA. All CA have the same neighborhood and the same number of states, with the same freezing order. All finite seed have the same size (the non-uniform portion of the configuration is a small centered square), and all example configurations were obtained after the same number of steps. The difference in the uniform background comes from the fact that not all examples have the same quiescent states.}
  \label{fig:randomfreezing}
\end{figure}

In 2D, it is easy to find a freezing CA $F$ and a computable configuration $c$ such that ${F^\omega(c)}$ is uncomputable: let $c(i,j)$ be $1$ if ${i,j\in\N}$ and Turing machine $i$ halts in at most $j$ steps on the empty tape, and $0$ otherwise.
Then, if $F$ is the freezing CA that spreads state $1$ vertically, we get that ${F^\omega(c)_{(i,0)}=1}$ for ${i\in\N}$ if and only if machine $i$ halts.

A more interesting question is to ask whether ${F^\omega(c)}$ can be uncomputable for some finite or eventually periodic configuration $c$ (\textit{i.e.} a configuration which is totally periodic up to a finite region of ${\Z^2}$).
The behavior of freezing CA from finite configuration is quite rich (see Figure~\ref{fig:randomfreezing}) and it is actually possible to realize uncomputable limits with a freezing $1$-change CA and slightly more with a $2$-change CA.

\begin{theorem}\cite{jCCSA,ollinger:hal-02266916}\label{theo:limitconfhard}
  For a configuration ${c\in Q^{\Z^2}}$ and ${q\in Q}$, define the set of cells of $c$ in state $q$: ${\chi_q(c)=\{z\in\Z^2 : c(z)=q\}}$. The following holds:
  \begin{itemize}
  \item There exists a $1$-change freezing CA $F_1$, a state $q$ of $F_1$, and a finite
    configuration $c^1$ such that ${\chi_q(F_1^\omega(c^1))}$ is not
    computable. 
  \item For any $1$-change freezing CA $F$, any state $q$ and any finite configuration $c$, ${\chi_q(F^\omega(c))}$ is either recursively enumerable or co-recursively enumerable.
  \item There exists a $2$-change freezing CA $F_2$, a state $q$ of $F_2$, and a
    finite configuration $c^2$ such that ${\chi_q(F_2^\omega(c^2))}$
    is neither recursively enumerable, nor co-recursively enumerable.
  \end{itemize}
\end{theorem}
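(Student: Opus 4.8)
The plan is to treat the three items separately, beginning with the middle one, which fixes the computability ceiling that the two constructions then aim to reach. Fix a finite configuration $c$ with background state $b$ (the value taken outside the finite region $S_0 = \{z : c_z \neq b\}$), and recall that in a $1$-change freezing CA every cell changes at most once and never reverts, while the non-$b$ region grows by at most the radius per step, so each $F^t(c)$ is computable uniformly in $t$. First I would isolate the dichotomy driven by whether $q=b$. If $q\neq b$, the cells ending in state $q$ split into two disjoint kinds: those of $S_0$ that happen never to change (a subset of a \emph{finite}, hence recursive, set), and those that at some finite time \emph{change into} $q$; since a single change is permanent, the latter event is confirmed by simulation, so that part is recursively enumerable, and a union of a recursive and an r.e. set is r.e. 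If $q=b$, the same bookkeeping applied to the complement $\{z : F^\omega(c)_z \neq b\}$ shows that complement is r.e., so $\chi_q(F^\omega(c))$ is co-r.e. This settles the second item.

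For the first item, the goal is to \emph{saturate} the r.e. bound, i.e. to produce an r.e.-complete (hence non-computable) set $\chi_q(F_1^\omega(c^1))$ from a \emph{finite} seed. I would reuse the space-time-diagram growth of Example~\ref{ex:freezing2Dsimu1D}: a cell waits in a blank state $\ast$ and freezes exactly once into the state dictated by the computation as the front reaches it. Since a single finite machine yields only a decidable bit, the seed must spawn infinitely many computations: a colony-spawning signal sweeps outward, periodically emitting a fresh, properly initialized simulation of the $i$-th Turing machine in its own growing region, and marks a canonical cell with state $q$ whenever machine $i$ halts. Each cell is written at most once, so $F_1$ is $1$-change and freezing, and $\chi_q(F_1^\omega(c^1))$ is exactly the halting set.

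For the third item, the key idea is that a \emph{second} change lets a cell realize a difference of two recursively enumerable events, and such d-r.e. sets need be neither r.e. nor co-r.e. I would run, at interleaved positions, two semi-decidable processes per cell: a first change turns the cell to $q$ upon detecting an event $E_1$, and a second change turns it away from $q$ upon detecting an event $E_2$, so that $z\in\chi_q(F_2^\omega(c^2))$ iff $E_1$ occurred but $E_2$ did not. Taking even position $2i$ to be marked iff $M_i$ halts, and odd position $2i+1$ to be marked unconditionally and then unmarked iff $M_i$ halts, makes $\chi_q$ agree with the halting set $K$ on even positions and with its complement on odd ones; a routine reduction shows such a set is neither r.e. nor co-r.e. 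Each cell changes at most twice and the chain $\ast > q > q'$ is acyclic, so $F_2$ is a $2$-change freezing CA.

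The hard part throughout is the \emph{geometry} of the two constructions rather than the computability bookkeeping: one must pack infinitely many self-initializing, upward-growing Turing simulations into the plane from a single finite seed while (i) keeping the spawning signals and all auxiliary wiring compatible with a fixed acyclic freezing order, and (ii) guaranteeing the exact change budget ($1$ for $F_1$, $2$ for $F_2$) at \emph{every} cell, including those merely traversed by signals. This is precisely the delicate work carried out in the cited references, and I would follow their layout, verifying freezingness on the canonical state-change graph via the acyclicity criterion stated just after Definition~\ref{def:freezing}.
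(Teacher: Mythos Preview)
Your proposal is essentially correct and in fact goes well beyond the paper's own proof, which consists entirely of pointers to the literature (one reference per item). Your outline for items~1 and~3 matches the spirit of those constructions, and you correctly flag that the real work lies in the planar layout of infinitely many self-initializing simulations under a strict change budget---precisely what the paper delegates to \cite{jCCSA,ollinger:hal-02266916}.

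A couple of minor points on your argument for item~2, which is the only place where you actually prove something the paper does not. First, your phrase ``a subset of a \emph{finite}, hence recursive, set'' is slightly misleading: being a subset of a recursive set is not what makes the set recursive; being \emph{finite} is. Second, your first category should be ``cells of $S_0$ initially in state $q$ that never change'' rather than ``cells of $S_0$ that never change''---but since both are finite this does not affect the conclusion. Finally, you implicitly assume the background $b$ is quiescent; if it is not, one application of $F$ sends every background cell to some $b'$ which must then be quiescent (else a second global change would violate $1$-change), and your dichotomy goes through with $b'$ in place of $b$. None of these affect the validity of the argument.

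For item~3, your interleaving of $K$ on even indices and $\overline K$ on odd indices is a clean way to exhibit a set that is neither r.e.\ nor co-r.e.; note that this is slightly stronger than a generic d-r.e.\ construction, but it fits comfortably within the $2$-change budget as you describe.
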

\begin{proof}
  The first item is the main result of \cite{jCCSA}.
  The second item is \cite[Proposition 9]{ollinger:hal-02266916}.
  The third item is \cite[Corollary 1]{ollinger:hal-02266916}.
\end{proof}

Recall from Lemma~\ref{lem:freezetime} that in order to produce an uncomputable limit configuration from a computable one by a freezing CA, the freezing times of cells must be uncomputable.
We stress that for this reason it is not sufficient for a 2D freezing CA to be \emph{computationally universal} in order to produce uncomputable limit configurations from eventually periodic initial ones.
For instance, in the orbits described in the simple embedding of Example~\ref{ex:freezing2Dsimu1D}, the freezing times are computable by construction since cells are frozen progressively line by line.
We don't have a general understanding of the ingredients behind the capacity of freezing CA to produce uncomputable limit configuration starting from simple (computable) initial ones.

\begin{figure}
  \centering
  \includegraphics[width=.7\textwidth]{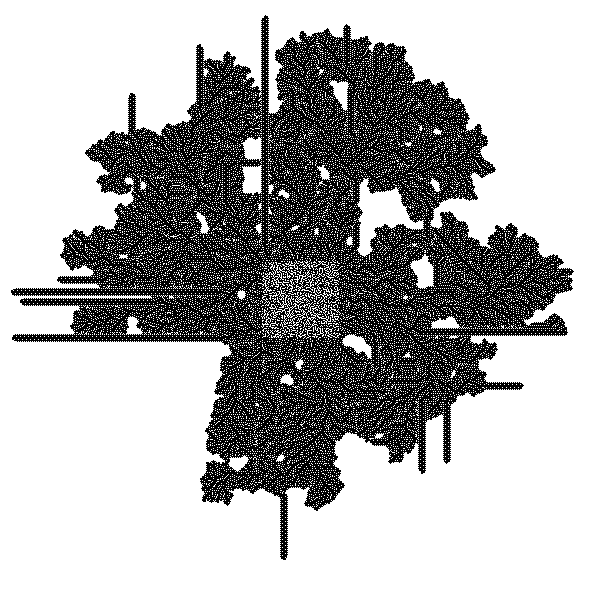}
  \caption{Configuration reached from a random finite initial configuration (center) by the CA game of life without death after a few hundreds of steps. Black represents state $0$, white represents state $1$.}
  \label{fig:lwd}
\end{figure}
Concretely, one can consider the \emph{game of life without death} CA \cite{GriMoo96}, which is a CA on the space on configurations ${\{0,1\}^{\Z^2}}$ defined by\footnote{For the reader familiar with life-like CA, note that we use the somewhat unconventional representation of alive cells by $0$ to stay consistent with Definition~\ref{def:freezing}.}: 
\[F(c)_z =
  \begin{cases}
    0\text{ if $c_z=0$ or $Z_8(c,z)=3$},\\
    1\text{ else,}
  \end{cases}
\]
where ${Z_8(c,z)}$ counts the number of occurrences of state $0$ among the 8 neighbors of cell $z$ in configuration $c$. 
The prediction problem of this CA was shown to be P-complete in \cite{GriMoo96}, however its ability to produce complex limit configurations from simple initial ones has not been established to our knowledge (see Figure~\ref{fig:lwd}).

\begin{question}
  Let $F$ be the \emph{game of life without death} CA. Is there a finite or eventually periodic configuration $c$ such that ${F^\omega(c)}$ is uncomputable? More generally, for which freezing CA is this possible? What are sufficient or necessary conditions to achieve this behavior?
\end{question}

\section{Some related works and further reading}
\label{sec:related}

What was presented so far is of course not an exhaustive view of cold dynamics in cellular automata, and it also does not tackle other dynamical systems.
Let us finish this tutorial by pointing out some related research directions that does not exactly fit in the formalism or classes seen so far.

\paragraph{Randomization and measure rigidity in CA.} There have been much work on CA considered as dynamical systems acting on probability measures \cite{Pivato09}.
Let us mention two (linked) aspects that extends what we presented above: recall that unique ergodicity (Definition~\ref{def:uniqergo}) is both the property of having a unique invariant measure, and the property that any initial measure converges in Cesaro mean towards this unique invariant measure under the CA.
If one considers a surjective CA on ${\spac=\Z^d}$, then the uniform Bernoulli measure is an invariant measure \cite{Bartholdi2010}.
It can never be the only one (a CA always admits some linear combination of dirac measures as an invariant measure), but a natural question is to ask under which additional conditions it becomes unique, a phenomenon called \emph{measure rigidity} (see \cite[section 2]{Pivato09} and references therein).
Note that the famous Furstenberg conjecture can be formulated as measure rigidity question on a particular CA \cite[section 2F]{Pivato09}.
Another much studied phenomenon is called \emph{randomization} and studied since the 80s \cite{miyamoto79,Lind84}: it is the property that a large class of initial measures (including Bernoulli measures of full support) converge in Cesaro mean towards the uniform Bernoulli measure (see \cite[section 3A]{Pivato09} and references therein).
More recently, a characterization of randomization was obtained for Abelian CA and an example was shown where the convergence is not only in Cesaro mean but actually a direct convergence (in the topology on measures) \cite{HELLOUIN_DE_MENIBUS_2018}.
Recall that this difference in the kind of convergence plays also a role in the links between $\mu$-nilpotency and unique ergodicity as discussed in Section~\ref{sec:nil}.
These two phenomena are now rather well understood on CA with an algebraic structure, but the generalization to other kind of CA remains largely open.

\paragraph{Probabilistic CA, noise and ergodicity.} A probabilistic CA is a CA whose local evolution rule is probabilistic, thus randomness can be introduced at each time step (contrary to deterministic CA acting on probability measures where randomness only appears at the initial time step) \cite{Mairesse_2014}.
In this context, the property of unique ergodicity has been much studied, in particular by asking if the unique invariant measure is also attractive which is then called \emph{ergodicity} \cite{Chassaing_2011} and which can be seen as an analog of asymptotic nilpotency.
Among the many examples of probabilistic CA, a particularly interesting family are those obtained from a deterministic one by adding a (small) noise at each transition.
In that setting, ergodicity intuitively means that the noise will progressively erase any information from the initial configuration, and non-ergodicity is a form of resistance to noise.
It appears that in many cases the resulting noisy CA is ergodic \cite{ergonoisCA} as a probabilistic CA.
In fact, proving that a particular noisy CA is \emph{not} ergodic turns out to be difficult.
A famous example in dimension 2 is a non-symmetric majority rule which was proven to be non-ergodic in \cite{toom1980stable}.
The result in \cite{toom1980stable} is in fact more general: any monotone deterministic CA which is nilpotent over finite configurations yields a non-ergodic probabilistic CA when adding a small enough noise.
Thus in dimension 2, very simple rules can produce CA resistant to noise.
In dimension 1, the only such example known is extremely complicated \cite{G_cs_2001} (recall that the construction technique involved in this result inspired the construction of Theorem~\ref{theo:nilpernotue}).

\paragraph{Self-assembly tilings.} In the field of DNA computing, several models have been proposed that describe how a spatially extended structure (an \emph{assembly}) grows according to some local rules.
For instance, the aTAM model \cite{Winfree_phd}, usually considered on the grid $\Z^2$, uses the formalism of Wang tiles with glues to define the local process of growth: an assembly is a (usually finite) set of occupied positions of $\Z^2$ and the aTAM rules determine what tiles can be added on an empty position in the neighborhood of the assembly.
A key parameter of the model is \emph{temperature} which determines the amount of glues necessary to allow a tile to be added.
We skip the details of the definition, but they are inspired by biochemistry and one of the strengths of the model is that it is both theoretically interesting and realistic enough to allow in vitro experiments.
We refer to \cite{Patitz_2013} for a global overview (including many other self-assembly models).
Any aTAM system can be seen as a $1$-change (asynchronous and non-deterministic) process acting on ${Q^{\Z^2}}$ where $Q$ is the set of tiles plus a special symbol representing an empty position.
Note that in the case of a deterministic aTAM system (called \emph{directed}), one can even define a freezing (1-change) CA $F$ that produces the same limit configurations (see proof of Theorem~\ref{theo:limitconfhard}).
Important progress have been accomplished recently in understanding the computational power of aTAM depending on the temperature parameter \cite{DBLP:conf/stoc/MeunierRW20,DBLP:conf/stoc/MeunierW17}.


\paragraph{Automata networks.} We worked with $\spac$ infinite, but in some sense we already considered the case of a finite space of cells when introducing nilpotency on periodic configurations (Definition~\ref{def:nilper}): indeed, the orbit of a totally periodic configuration on $\Z^d$ can be seen as an orbit of configurations on a finite space of cell which is a torus (because since CA commute with translations, periods are preserved under iterations).
One can go much further by considering 
automata networks which are non-uniform CA on a space of cells which can be an arbitrary graph \cite{MP43,Goles90}.
Actually, Theorem~\ref{theo:majboundedchange} has its roots in the automata network settings \cite{PaperGoles}.
In this context, a particularly fruitful approach is to analyze how the graph of cells influences the dynamics of the network (see \cite{Gadouleau_2019} and references therein for an overview).
Following this approach, many results focused on nilpotency or convergence towards a unique attractor \cite{Rob86,Ara08,Gadouleau_2016}. 
More recently, freezing automata networks where also considered \cite{GolesMWT21,GolesMMO18,Goles_2013}.

\section*{Acknowledgment}
We thank the anonymous referees for their careful reading, their many suggestions and corrections, and in particular, for having corrected us on the fact that the bootstrap percolation CA $F_B$ is generically nilpotent (Theorem~\ref{theo:classicbootstrap}).

\bibliographystyle{plainurl}
\bibliography{paper}

\end{document}